\numberwithin{equation}{section}
\theoremstyle{reeb}
\newtheorem{thm}[equation]{Theorem}
\newtheorem{lemma}[equation]{Lemma}
\newtheorem{prop}[equation]{Proposition}
\newtheorem{defn}[equation]{Definition}
\newtheorem{cor}[equation]{Corollary}
\newtheorem{rmk}[equation]{Remark}
\newtheorem{ex}[equation]{Example}
\newtheorem{notn}[equation]{Notation}
\newtheorem{obs}[equation]{Observation}
\title{Categorified Reeb Graphs}
\author{Vin de Silva, Elizabeth Munch, Amit Patel}
\date{\today}
\begin{document}
\maketitle

\begin{abstract}
 The Reeb graph is a construction which originated in Morse theory to study a real valued function defined on a topological space.  
 More recently, it has been used in various applications to study noisy data which creates a desire to define a measure of similarity between these structures. 
Here, we exploit the fact that the category of Reeb graphs is equivalent to the category of a particular class of cosheaf. Using this equivalency, we can define an `interleaving' distance between Reeb graphs which is stable under the perturbation of a function.
Along the way, we obtain a natural construction for smoothing a Reeb graph to reduce its topological complexity. The smoothed Reeb graph can be constructed in polynomial time.
\end{abstract}

\tableofcontents
\listoffigures


\section{Introduction}

\subsection{Purpose}
\label{Sect:Purpose}

The Reeb graph, originally defined in the context of Morse theory \cite{Reeb1946}, can be used to study properties of a space through the lens of a real-valued function by  providing a way to track and visualize the connected components of the space at levelsets of the function (Figure~\ref{F: Reeb}).
When an algorithm for computation was given in  \cite{Shinagawa1991}, the rediscovery of the Reeb graph by the computer graphics community immediately showed the Reeb graph to be an extremely useful tool in many applications.
These include shape comparison \cite{Hilaga2001, Escolano2013}, data skeletonization \cite{Ge2011, Chazal2013}, surface denoising \cite{Wood2004}, as well as choosing generators for homology classes \cite{Dey2013}; see \cite{Biasotti2008} for a survey.
Two main properties of this construction have made it extremely useful in the applied setting.
First, its dependence on the chosen function and not just on the space itself means different functions can be used to highlight different properties of the underlying space.
Second, it is rather quick to compute and thus can be used for very large data sets \cite{Parsa2012, Harvey2010,Doraiswamy2012}.

Much of the literature is dedicated to studying Reeb graphs in the context of Morse functions where a great deal can be said about its properties \cite{Cole-McLaughlin2003, Agarwal2004, Cohen-Steiner2009a}.
In addition, several variations on the Reeb graph have also been proposed and proven quite useful.
One such is Mapper \cite{Singh2007}, which applies the ideas of partial clustering to Reeb graphs in order to make the construction more robust to noise; this has found a great deal of success on big data sets \cite{Yao2009, Nicolau2011}.
A similar variation called the $\alpha$-Reeb graph was used in \cite{Chazal2013} to study data sets with 1-dimensional structure.
Another variation is the Extended Reeb graph \cite{Biasotti2000,Escolano2013}, which generalizes the theory to non-Morse functions.
Finally, a Reeb graph defined for a function on a simply-connected space cannot have any loops, and is called a contour tree~\cite{Kreveld1997,Carr2003,Pascucci2004,Szymczak_2011}. Any such contour tree can be replicated as the contour tree of a function on a 2-dimensional surface, thus allowing for informed exploration of otherwise hard-to-visualize high-dimensional data~\cite{Weber2007, Harvey2010a}.

Many of these applications involve data, where one should operate with at least a modicum of statistical integrity. Therefore it is important to consider not just individual Reeb graphs, but the whole `space' of Reeb graphs. In this paper we will:
\begin{itemize}
\item
define a distance function between pairs of Reeb graphs;

\item
show that this distance function is stable under perturbations of the input data;

\item
define `smoothing' operations on Reeb graphs (which reduce topological complexity).

\end{itemize}
The novelty of this paper is that we address these geometric questions using methods from category theory. Reeb graphs can be identified with a particular kind of cosheaf~\cite{Funk1995,Woolf_2009}, and these cosheaves may be compared using an interleaving distance of the kind studied in~\cite{Chazal2009b,Bubenik2014}. 
We pull back that interleaving distance to obtain a distance function between Reeb graphs.
While an efficient algorithm for computation of the interleaving distance is not yet available, one step of the process for construction yields a smoothed version of the given Reeb graph.
This arises from  a natural operation on cosheaves but has the added feature that it can be interpreted geometrically. 
We provide an explicit algorithm for constructing the smoothing of a given Reeb graph.
The question of simplifying a Reeb graph, perhaps to deal with noise, has arisen in several applications \cite{Bauer2014,Pascucci2007a,Doraiswamy2012,Ge2011}.
Our work differs from the solutions in those papers in that rather than doing local operations to collapse small loops, the smoothing operation is conducted globally and causes small modifications everywhere with the outcome that small loops are removed.

Recently, other approaches to defining a metric between Reeb graphs have been defined; one is based on the Gromov--Hausdorff distance \cite{Bauer2014} and the other is defined using combinatorial edits \cite{diFabio2014}. 
These methods perhaps appear more natural from the geometric perspective, whereas our method is more natural from the sheaf-theoretic perspective.
Our ideas have been inspired, in part, by the use of interleaving distances to compare join- and split-trees~\cite{Morozov2013}.
Finally, we point out that some of the category theory (without the cosheaves) appears in~\cite{Szymczak_2011}; and a very extensive and accessible study of cosheaves can be found in~\cite{Curry2014}.

\medskip
{\bf Antecedents.}
It is well known amongst sheaf theorists that a locally constant set-valued cosheaf over a manifold~$\mathbb{M}$ is equivalent to a covering space of~$\mathbb{M}$ via its \emph{display space}; see Funk~\cite{Funk1995}.  Robert MacPherson observed that if a set-valued cosheaf on~$\mathbb{M}$ is constructible with respect to a stratification (i.e.\ if it is locally constant on each stratum), then the cosheaf is equivalent to a stratified covering of~$\mathbb{M}$; see Treumann~\cite{Treumann09}, Woolf~\cite{Woolf_2009} and Curry~\cite{Curry2014} for details.  A stratified covering of the real line is what we call a Reeb graph.  This leads to an equivalence between the category of Reeb graphs and the category of constructible cosheaves over the real line.

Our definition of the interleaving distance between Reeb graphs is based on a very general framework for topological persistence developed by Bubenik and Scott~\cite{Bubenik2014} that was in turn inspired by the work of Chazal et al.~\cite{Chazal2009b} on algebraic persistence modules. Cosheaves are a particular kind of functor, as are generalized persistence modules, and the two ways of thinking overlap sufficiently to give us a metric on the category of Reeb graphs.

In this paper, we give a quite detailed exposition of the ideas involved. We describe the equivalence of categories of Funk~\cite{Funk1995} explicitly in the situation that we need it, since the eventual goal is to use this equivalence in computations. Finally, whereas most of our work can be thought of as a combination of existing ideas from two separate fields, the smoothing operators we define on Reeb graphs seem to be novel, and hint at a richer family of operations on cosheaves to be discovered.

\subsection{Reeb graphs and Reeb cosheaves}
\label{Sect:GARB}

Our starting point is a topological space $\X$  equipped with a continuous real-valued function $f: \X \to \R$. We call the pair $(\X,f)$ a `space fibered over~$\R$' or, more succinctly, an {\bf $\R$-space}. For reasons of convenience we will often abbreviate $(\X,f)$ simply to~$f$. The context will indicate whether we are thinking of~$f$ as a function or as an $\R$-space.

We can think of an $\R$-space as a 1-parameter family of topological spaces $f^{-1}(a)$, the levelsets of~$f$. The topology on~$\X$ gives information on how these spaces relate to each other. For instance, each levelset can be partitioned into connected components. How can we track these components as the parameter~$a$ varies? An answer is provided by the Reeb graph.

The {\bf (geometric) Reeb graph} of an $\R$-space $f$ is an $\R$-space $\overline{f}$ defined as follows. 
First, we define an equivalence relation on the domain of~$f$ by saying two points $x,x' \in \X$ are {equivalent} if they lie on the same levelset $f^{-1}(a)$ and on the same component of that levelset. Let $\X_f$ be the quotient space defined by this equivalence relation, and let $\overline f: \X_f \to \R$ be the function inherited from~$f$. This is the Reeb graph.
See, for example, Figure \ref{F: Reeb}.
\begin{figure}
\centering
    \includegraphics[scale=0.6]{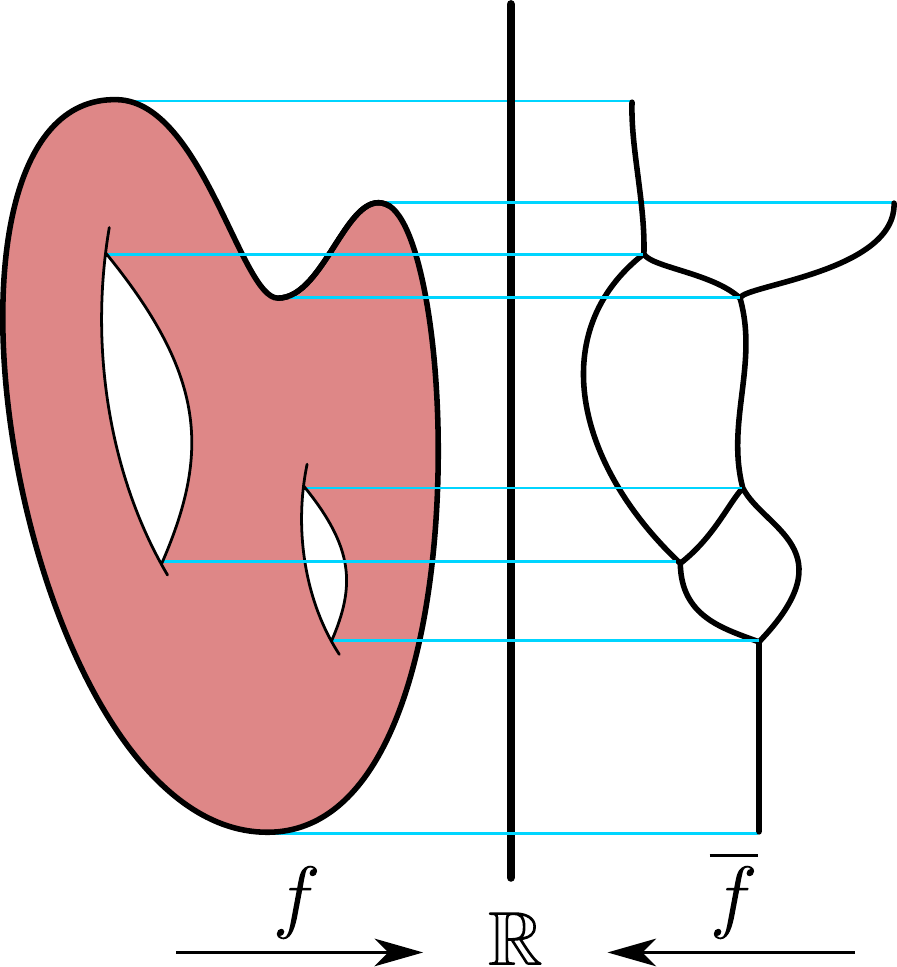}
  \caption[The Reeb graph]{The Reeb graph is used to study connected components of levelsets.}
  \label{F: Reeb}
  \vspace{-10pt}
\end{figure}

If $f$ is a Morse function on a compact manifold, or a piecewise linear function on a compact polyhedron, then its Reeb graph is topologically a finite graph with vertices at each critical value of~$f$. This situation is well studied. These examples are included in a larger class, the \emph{constructible} $\R$-spaces, which have similar good behavior. We will say more about this in Section~\ref{sec:geometric}.
If we work in greater generality, the quotient $\X \to \X_f$ can be badly behaved. Among other things, we would need to pay attention to the distinction between connected components and path components. This is not an issue for constructible $\R$-spaces, where the two concepts lead to the same outcome.

We now indicate an alternate way of recording the information stored in the geometric Reeb graph. The {\bf abstract Reeb graph} or {\bf Reeb cosheaf} of an $\R$-space~$f$ is defined to be the following collection of data
(see Figure~\ref{Fig:CosheafCondition}):
\begin{itemize}
\item
for each open interval $I \subseteq \R$, let $\Ffunc(I)$ be the set of path-components of $f^{-1}(I)$;

\item
for $I \subseteq J$, let $\Ffunc[I \subseteq J]$ be the map $\Ffunc(I) \to \Ffunc(J)$ induced by the inclusion $f^{-1}(I) \subseteq f^{-1}(J)$.

\end{itemize}
Let $\Ffunc$ denote the entirety of this data. It is easily confirmed that $\Ffunc$ is a functor (see Section~\ref{Sect:BackgroundCategory}) from the category of open intervals to the category of sets. As such, $\Ffunc$ is sometimes called a \emph{pre-cosheaf} on the real line in the category of sets.
The important point is that this information, in the constructible case, is enough to recover the geometric Reeb graph;
see Figure~\ref{Fig:CosheafConditionExample}.
\begin{figure}
        \centering
        \includegraphics[scale = 1]{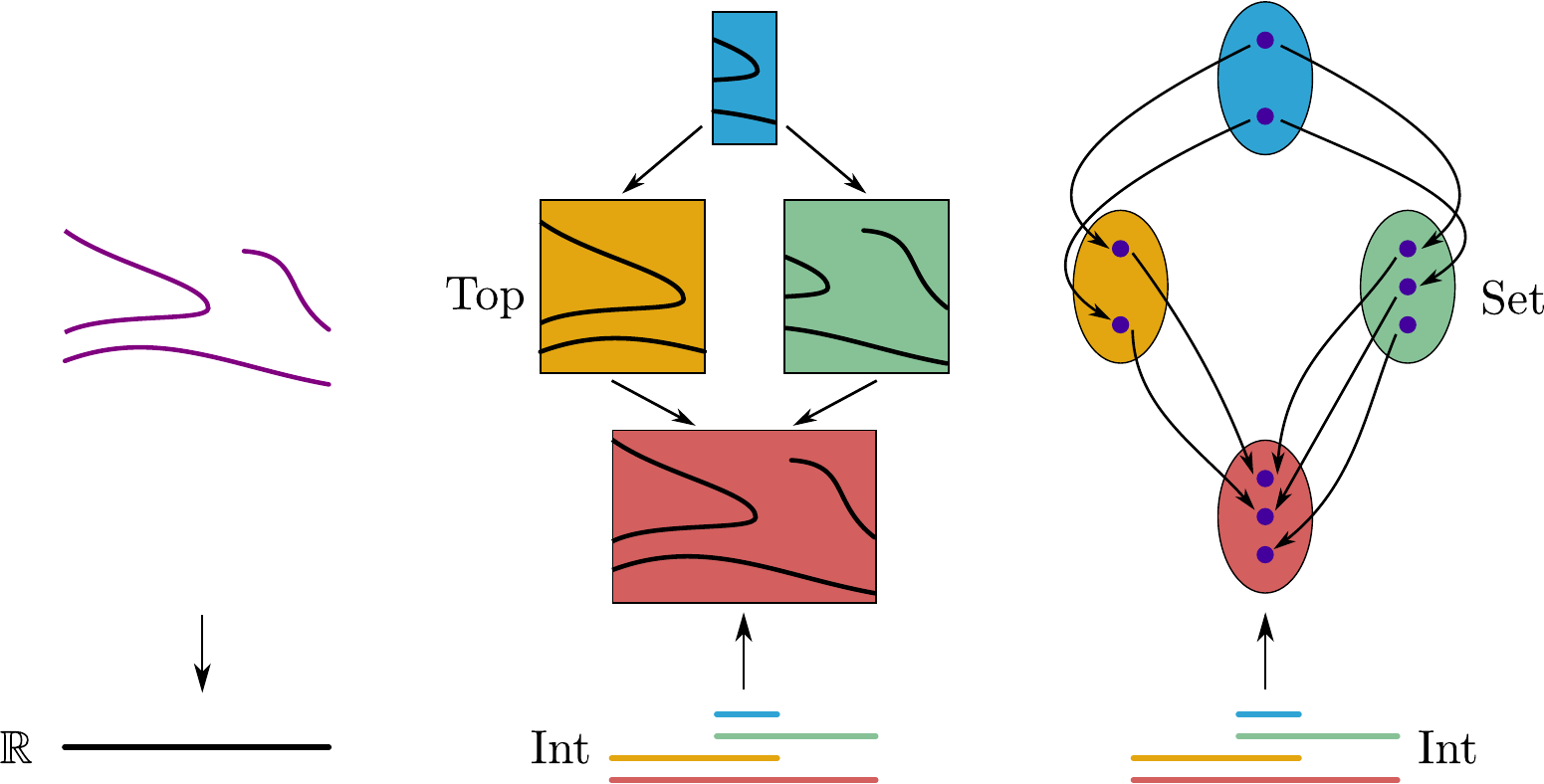}
	\caption[The Reeb cosheaf]{The geometric structure of a Reeb graph (left) may be represented categorically as a Reeb cosheaf (right). To each open interval we associate the set of connected components over that interval; to each inclusion of intervals we associate the map defined by component inclusions.}
	\label{Fig:CosheafCondition}
\end{figure}
The other important point is that it is sometimes easier to work with the pre-cosheaf than with the geometric Reeb graph.

There is considerable redundancy in the information stored by the sets $\Ffunc(I)$ and functions $\Ffunc[I \subseteq J]$ in the abstract Reeb graph of a constructible $\R$-space. For example, the components over an interval $I \cup J$ can be determined by considering the components over $I$ and~$J$ and how they are related through the components over $I \cap J$. There are similar redundancies for every cover of an interval by other intervals. When systematized, these redundancies take on a standard form: they are precisely the conditions that ensure that the pre-cosheaf is a \emph{cosheaf}.
Thus, the abstract Reeb graph is renamed the \emph{Reeb cosheaf}.

We explain these standard ideas from sheaf theory more formally in Section~\ref{sec:cosheaf}. First we recall a few concepts from category theory, which provides the language for discussing these matters.

\begin{figure}
        \centering
        \includegraphics[scale = 1]{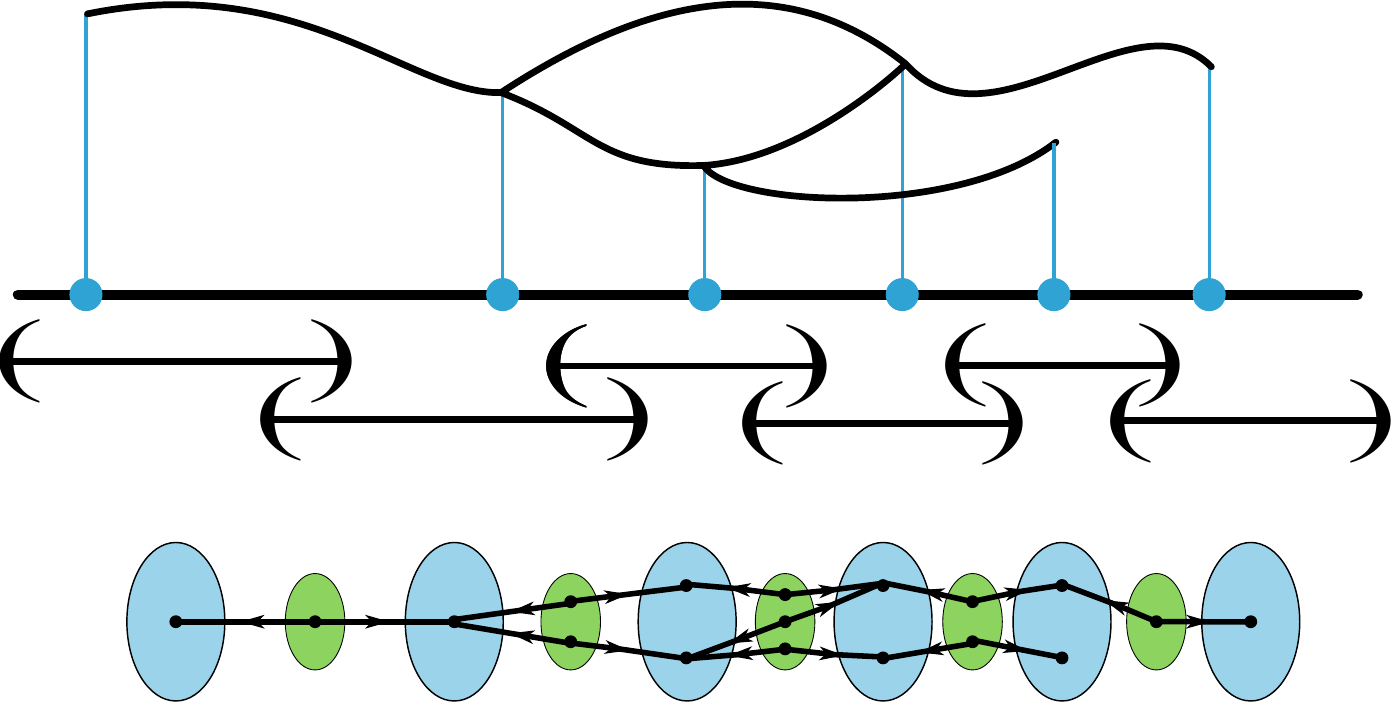}
	\caption[Recovering the Reeb graph from the Reeb cosheaf]{The Reeb cosheaf carries the same information as the Reeb graph. Here is one way the graph may be retrieved from the cosheaf. In the example, six carefully chosen intervals and their five pairwise intersections give rise to eleven sets (of components) and ten maps (of component inclusions). The graph built from this data---with eleven sets of vertices and ten sets of edges---has the same structure as the original Reeb graph.}
	\label{Fig:CosheafConditionExample}
\end{figure}

\subsection{Category theory}
\label{Sect:BackgroundCategory}

We summarize what we need from category theory. For a general reference, see~\cite{MacLane_1998}. 

A category $\CatA$ is a collection of objects $A \in \CatA$, a collection of morphisms or arrows $f : A \to A'$ between objects, and a composition operator that takes any two morphisms $f : A \to A'$ and $g : A' \to A''$ to a third morphism $gf : A \to A''$. The composition operator is associative and there is an identity morphism $\mathbb{1}_A: A \to A$ at each object~$A$. There are many examples of categories found in all branches of mathematics. Here are some common examples:

 \begin{center}
\begin{tabular}{cll}
Category & Objects & Morphisms\\\hline
$\Set$ & Sets 		     & Functions\\
$\Vect$ & Vector spaces     & Linear maps\\
$\Top$ & Topological spaces & Continuous maps\\
 \end{tabular}
 \end{center}
These are \emph{large} categories, where the collection of objects is not a set but a proper class.

\begin{ex}
\label{ex:poset}
Any partially ordered set $(P, \leq)$ can be thought of as a category $\CatP$. The objects are the elements of~$P$, and there is one morphism $p \to q$ whenever $p \leq q$ and no morphism otherwise. This is a \emph{small} category, where the collection of objects is a set.
\end{ex}

A \emph{functor} $\Ffunc : \CatA \to \CatB$ is a map between two categories. It takes each object~$A \in \CatA$ to an object $\Ffunc(A) \in\CatB$, and each morphism~$f: A \to A'$ of~$\CatA$ to a morphism $\Ffunc[f]: \Ffunc(A) \to \Ffunc(A')$ of~$\CatB$, preserving composition and identities.
%
%
A special case is the \emph{identity functor} $\onefunc_\CatA: \CatA \to \CatA$ which takes each object and morphism to itself.

A \emph{natural transformation} $\eta : \Ffunc \tO \Gfunc$ is a map between two functors $\Ffunc, \Gfunc : \CatA \to \CatB$. It consists of a collection of morphisms $\eta_A : \Ffunc (A) \to \Gfunc (A)$, one for each object $A \in \CatA$, such that for each morphism $f : A \to A'$ in $\CatA$, the following diagram commutes:	
\begin{equation}
	\label{Eqn:NatTrans}
 	\xymatrix{
 	\Ffunc (A) \ar[r]^{\Ffunc [f]} \ar[d]_{\eta_A}
	&
	\Ffunc (A') \ar[d]^{\eta_{A'}}
	\\
 	\Gfunc (A) \ar[r]_{\Gfunc [f]}
	&
	\Gfunc (A').
 	}
\end{equation}
For any functor~$\Ffunc$, there is an \emph{identity natural transformation} $\mathbb{1}_\Ffunc: \Ffunc \tO \Ffunc$, defined at each object by $(\mathbb{1}_\Ffunc)_A = \mathbb{1}_{\Ffunc(A)}$.
Natural transformations can be composed in many different ways. In particular, if $\eta: \Ffunc \tO \Gfunc$ and $\theta: \Gfunc \tO \Hfunc$ then there is a composite $\theta\eta: \Ffunc \tO \Hfunc$ defined at each object by $(\theta\eta)_A = \theta_A \eta_A$. These observations lead to the next example.

\begin{ex}
\label{ex:functor-cat}
Let $\CatA, \CatB$ be categories and suppose that $\CatA$ is small. Then the functors $\CatA \to \CatB$ themselves form a category $\CatB^\CatA$, with natural transformations as the morphisms.
\end{ex}

A natural transformation $\eta : \Ffunc \tO \Gfunc$ is a \emph{natural isomorphism} if each $\eta_A$ is an isomorphism. By defining $(\eta^{-1})_A = (\eta_A)^{-1}$ we obtain the inverse natural transformation~$\eta^{-1}$, which satisfies $\eta^{-1}\eta = \mathbb{1}_\Ffunc$ and $\eta\eta^{-1} = \mathbb{1}_\Gfunc$.
Thus, natural isomorphisms are precisely the invertible morphisms in the functor category.

\begin{rmk}[font convention]
Some of the categories in this paper---specifically, $\Pre$, $\Csh$ and $\Cshc$---are categories of functors. The objects of these categories will be written in sans-serif style: $\Ffunc, \Gfunc$. We think of these as `small' functors, the font style reminding us that we sometimes regard them as objects in a functor category.
We contrast these with various `large' functors that are defined between the major categories of interest. These we write in calligraphic style: $\FF, \GG$.
\end{rmk}

It is often convenient to have more than one equivalent categorical interpretation of a given idea.
\begin{itemize}
\item
Two functors $\FF, \GG$ are `essentially the same' if there is a natural isomorphism between them, and we write $\FF \simeq \GG$. Very often the isomorphism is canonically specified. This is much more common than the two functors being exactly equal to each other, which would be written $\FF = \GG$.

\item
Two categories $\CatA, \CatB$ are `essentially the same' if they are \emph{equivalent}. This means that there is a pair of functors $\FF : \CatA \to \CatB$ and $\GG : \CatB \to \CatA$ and a pair of natural isomorphisms $\mu : \GG \FF \tO \onefunc_{\CatA}$ and $\nu : \FF \GG \tO \onefunc_{\CatB}$.
%
%
An \emph{equivalence of categories} refers to either the complete data $(\FF, \GG, \mu, \nu)$ or one of the functors $\FF,\GG$ by itself.
\end{itemize}
Here we are mostly thinking of `large' functors, as the font style suggests.

\subsection{Road map of categories and functors}
\label{Sect:RGCF}

To develop the relationship between geometric and abstract Reeb graphs, we make use of several categories and functors.
The reader may find it helpful to consult the road map in Figure~\ref{Fig:MasterDiagram}.
\begin{figure}
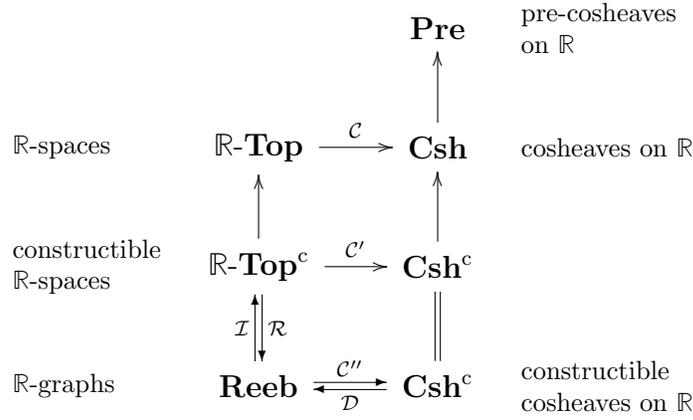

\[
\begin{diagram}
\dgARROWLENGTH=2em
\node[3]{\Pre}
\node{\text{\footnotesize \parbox{1in}{pre-cosheaves on~$\R$}}}
\\
\node{\text{\footnotesize \parbox{0.7in}{$\R$-spaces}}}
\node{\Rtop}
	\arrow{e,t}{\CC}
\node{\Csh}
	\arrow{n}
\node{\text{\footnotesize \parbox{1in}{cosheaves on~$\R$}}}
\\
\node{\text{\footnotesize \parbox{0.7in}{constructible $\R$-spaces}}}
\node{\Rtopc}
	\arrow{e,t}{\CC'}
	\arrow{n}
	\arrow{s,lr,dd}{\II}{\RR}
\node{\Cshc}
	\arrow{n}
\\
\node{\text{\footnotesize \parbox{0.7in}{$\R$-graphs}}}
\node{\Reeb}
	\arrow{e,tb,dd}{\CC''}{\DD}
\node{\Cshc}
	\arrow{n,=}
\node{\text{\footnotesize \parbox{1in}{constructible cosheaves on~$\R$}}}
\end{diagram}
\]
\caption[Road map of categories and functors]{Road map of categories and functors. Categories of geometric objects (Section~\ref{sec:geometric}) occupy the left-hand column; categories of functors (Section~\ref{sec:cosheaf}) occupy the right-hand column. The up-arrows are inclusions of categories. The bottom row is an equivalence of categories.}
\label{Fig:MasterDiagram}
\end{figure}
%
%
We define the various categories and functors in Sections \ref{sec:geometric} and~\ref{sec:cosheaf}, and establish the following relations:

\begin{itemize}
\item
$\RR\II$ is naturally isomorphic to the identity functor on $\Reeb$ (Proposition~\ref{prop:RI}).

\item
$\CC''\RR$ is naturally isomorphic to $\CC'$ (Theorem~\ref{Thm:CommutativeTriangle}).

\item
The functors $\CC'', \DD$ define an equivalence between $\Reeb$ and $\Csh^\text{c}$ (Theorem~\ref{Thm:Equivalence}).

\end{itemize}

\noindent
Subsequently, we will define a metric on $\Pre$ and smoothing operators on $\Rtop$ and $\Pre$. Through the diagram, these lead to a metric and a smoothing operator on $\Reeb$.

\section{The geometric categories}
\label{sec:geometric}

In Sections \ref{sec:rtop}, \ref{sec:rtopc} and \ref{sec:reeb} we describe the three geometric categories, from largest to smallest. In Section~\ref{sec:functor-R} we define and study the geometric Reeb functor~$\RR$.

\subsection{The category of {\bf $\R$-spaces}}
\label{sec:rtop}

An object of $\Rtop$ is a topological space~$\X$ equipped with a continuous map $f: \X \to \R$, denoted $(\X,f)$ or simply $f$. 
Point-preimages $f^{-1}(a)$ are known as \emph{levelsets} or \emph{fibers} of the $\R$-space.
A morphism $\phi : (\X, f) \to (\Y, g)$ is a continuous map ${\phi} : \X \to \Y$ such that the following diagram  commutes:
\[
	\xymatrix{
	\X \ar[rr]^{{\phi}} \ar[rd]_{f} && \Y \ar[ld]^{g}\\
	& \R &
	}
\]
Composition and identity maps are defined in the obvious way.

\begin{rmk}
Being an example of a \emph{slice category}, $\Rtop$ is sometimes named $(\Top \downarrow \R)$.
In~\cite{Szymczak_2011} it is called the \emph{category of scalar fields}.
\end{rmk}

\subsection{The category of constructible {\bf $\R$-spaces}}
\label{sec:rtopc}

We restrict to this class of spaces because the geometric Reeb graph of a general $\R$-space may be badly behaved. These spaces are compact and have finitely many `critical points' between which they have cylindrical structure.

Formally, an object of $\Rtopc$ is an $\R$-space that is isomorphic to some $(\X,f)$ constructed in the following way.
A finite set of `critical values' $S = \{a_0, a_1, \dots, a_n\}$ (listed in increasing order) is given. Then:
\begin{itemize}
\item
For $0 \leq i \leq n$, we specify a locally path-connected compact space $\V_i$.

\item
For $0 \leq i \leq n-1$, we specify a locally path-connected compact space $\E_i$.

\item
For $0 \leq i \leq n-1$, we specify continuous maps $\mathbbm{l}_i: \E_i \to \V_i$ and $\mathbbm{r}_i : \E_i \to \V_{i+1}$.
\end{itemize}
Let $\X$ be the quotient space obtained from the disjoint union of the spaces $\V_i \times \{a_i\}$ and $\E_i \times [a_i, a_{i+1}]$ by making the identifications $(\mathbbm{l}_i(x), a_i) \sim (x, a_i)$ and $(\mathbbm{r}_i(x), a_{i+1}) \sim (x, a_{i+1})$ for all $i$ and all $x \in \E_i$. Let $f: \X \to \R$ be the projection onto the second factor.

Morphisms in $\Rtopc$ are the same as morphisms in $\Rtop$ (it is a \emph{full subcategory}).

%
%
%
%

\begin{ex}
The following $\R$-spaces belong to $\Rtopc$:
(i) $\X$ is a compact differentiable manifold and $f$ is a Morse function; %
(ii) $\X$ is a compact polyhedron and $f$ is a piecewise-linear map;
(iii) $\X$ is a compact semialgebraic subset of $\R^n \times \R$ and $f$ is the projection onto the second factor.
(iii$'$) $\X$ is a compact subset of $\R^n \times \R$ definable with respect to some o-minimal structure~\cite{vdDries_1998,Coste_2000} and $f$ is the projection onto the second factor.
\end{ex}

See Figure~\ref{fig:con-R-space} for a manifold with a Morse function presented as a constructible $\R$-space.
\begin{figure}
\centerline{
\includegraphics[width = \textwidth]{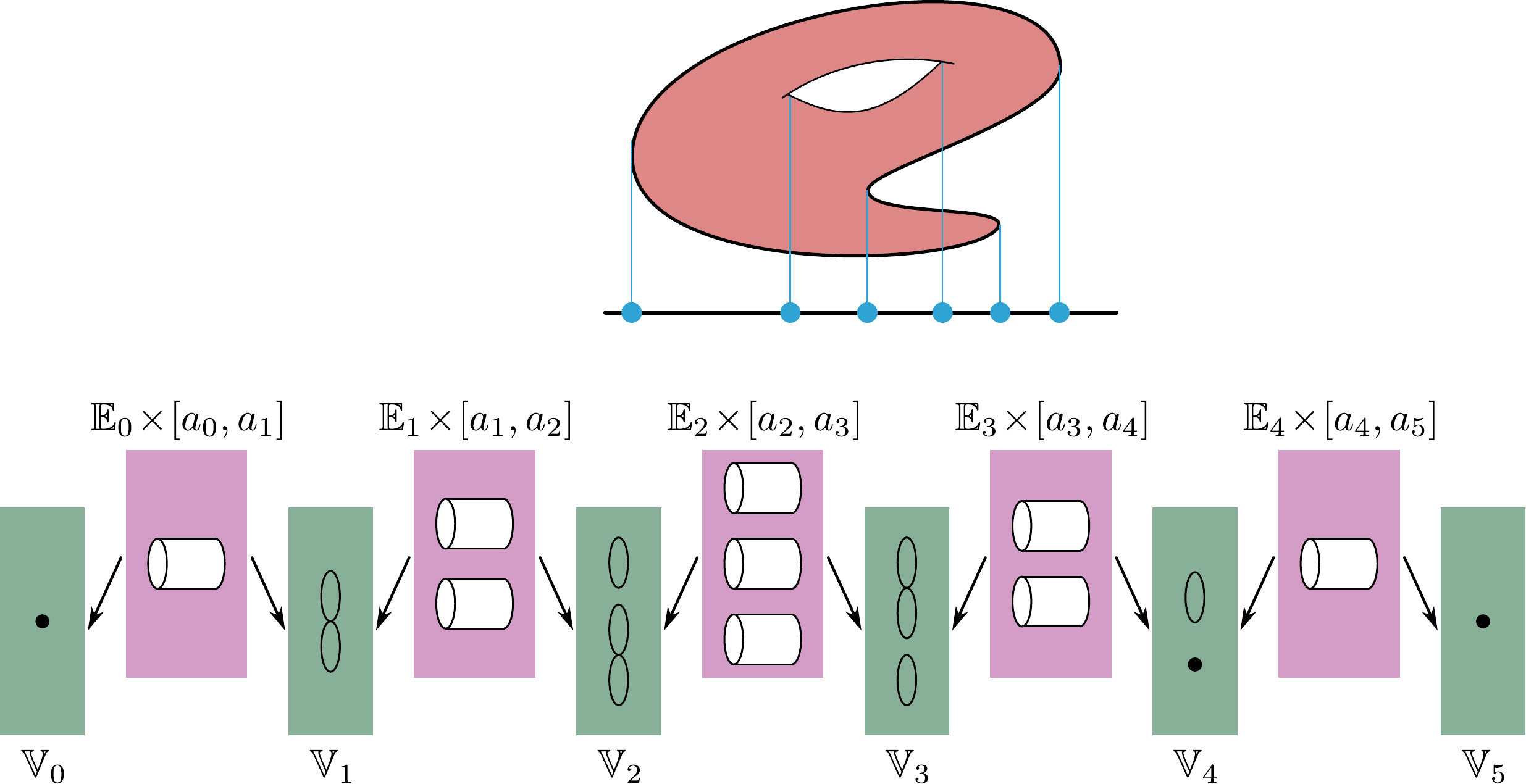}
}
\caption[Example of a constructible $\R$-space]{An example of a constructible $\R$-space. Each of the six critical values of the Morse function on this torus contributes a critical fiber $\V_i$, while the intervals between critical values contribute cylinders $\E_i \times [a_i, a_{i+1}]$ which are attached to the critical fibers using maps defined by gradient flow.
We encourage the reader to visualize the attaching maps explicitly.
}
\label{fig:con-R-space}
\end{figure}

\begin{rmk}\label{rmk:splitting}
The critical set is not uniquely specified, since we can always add extra critical points by splitting the cylinders $\E_i \times [a_i, a_{i+1}]$ appropriately. One can request a minimal critical set, but we never specifically need it.
\end{rmk}

The content of the next lemma is geometrically straightforward. We state it formally because we use it repeatedly to establish relationships between Reeb graphs and Reeb cosheaves.

\begin{lemma}[cylinder principle]
\label{lem:cylinder}
Let $(\X,f)$ be constructible with critical set $S = \{a_0, a_1, \dots, a_n\}$.
The fiber-inclusion maps
\begin{alignat*}{3}
\V_i &\longrightarrow f^{-1}(a_{i-1},a_{i+1});
\quad
&x &\mapsto (x, a_i)
\\
\E_i &\longrightarrow f^{-1}(a_{i},a_{i+1});
\quad
&x &\mapsto (x, a)
&&\qquad\text{some $a \in (a_i,a_{i+1})$}
\end{alignat*}
are homotopy equivalences which fit into diagrams 
\begin{equation}
\label{eq:transfer-1}
\begin{diagram}
\node{\E_{i-1}}
	\arrow{e,t}{\mathbb{r}_{i-1}}
	\arrow{s,l}{\simeq}
\node{\V_i}
	\arrow{s,l}{\simeq}
\node{\E_i}
	\arrow{w,t}{\mathbb{l}_i}
	\arrow{s,l}{\simeq}
\\
\node{f^{-1}(a_{i-1},a_{i})}
	\arrow{e,t}{\subseteq}
\node{f^{-1}(a_{i-1},a_{i+i})}
\node{f^{-1}(a_{i},a_{i+i})}
	\arrow{w,t}{\supseteq}
\end{diagram}
\end{equation}
that commute up to homotopy ($0 \leq i \leq n$, with $\E_0, \E_n$ interpreted as empty spaces).
The homotopy equivalences are natural, in the sense that we have commutative diagrams
\begin{equation}
\label{eq:transfer-2}
\begin{diagram}
\dgARROWLENGTH=2em
\node{\V^f_i}
	\arrow{e}
	\arrow[2]{s,l}{\alpha}
\node{f^{-1}(a_{i-1},a_{i+1})}
	\arrow[2]{s,r}{\alpha}
\node[2]{\E^f_i}
	\arrow{e}
	\arrow[2]{s,l}{\alpha}
\node{f^{-1}(a_{i},a_{i+1})}
	\arrow[2]{s,r}{\alpha}
\\
\node[3]{\text{and}}
\\
\node{\V^g_i}
	\arrow{e}
\node{g^{-1}(a_{i-1},a_{i+1})}
\node[2]{\E^g_i}
	\arrow{e}
\node{g^{-1}(a_i,a_{i+1})}
\end{diagram}
\end{equation}
whenever $\alpha: (\X,f) \to (\Y,g)$ is a morphism between $\R$-spaces with critical set~$S$. (For the left-hand maps we are identifying the spaces $\V_i, \E_i$ with the corresponding fibers at $a_i, a$ respectively.)
\end{lemma}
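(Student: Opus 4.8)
The plan is to reduce to the explicit model of Section~\ref{sec:rtopc}, read the two preimages $f^{-1}(a_i,a_{i+1})$ and $f^{-1}(a_{i-1},a_{i+1})$ directly off the gluing data, and produce strong deformation retractions by hand. Since homotopy equivalences, the homotopies in~\eqref{eq:transfer-1}, and the commuting squares in~\eqref{eq:transfer-2} are all preserved by isomorphism of $\R$-spaces, I may assume without loss of generality that $(\X,f)$ is literally the space assembled from data $(\V_i,\E_i,\mathbbm{l}_i,\mathbbm{r}_i)$. Under this identification the fiber over $a_i$ is exactly $\V_i$, and for any $a\in(a_i,a_{i+1})$ the fiber over $a$ is exactly $\E_i$.

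First the two preimages. Over the open interval $(a_i,a_{i+1})$ only the cylinder $\E_i\times[a_i,a_{i+1}]$ contributes, and the endpoints are excluded, so $f^{-1}(a_i,a_{i+1})=\E_i\times(a_i,a_{i+1})$ is an honest product; the straight-line homotopy $((x,t),s)\mapsto(x,(1-s)t+sa)$ is a strong deformation retraction onto the slice $\E_i\times\{a\}$, which proves that the $\E_i$-inclusion is a homotopy equivalence. The interval $(a_{i-1},a_{i+1})$ contains the single critical value $a_i$, so $f^{-1}(a_{i-1},a_{i+1})$ is obtained by gluing the half-open cylinders $\E_{i-1}\times(a_{i-1},a_i]$ and $\E_i\times[a_i,a_{i+1})$ onto $\V_i$ along $\mathbbm{r}_{i-1}$ and $\mathbbm{l}_i$. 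I will retract this onto $\V_i$ by pushing the real coordinate toward $a_i$: set $H((x,t),s)=(x,(1-s)t+sa_i)$ on each cylinder and let $H$ fix $\V_i$. At $s=1$ the two cylinders land in $\V_i$ via precisely the attaching maps $\mathbbm{r}_{i-1},\mathbbm{l}_i$, so the formulas agree on the glued locus and $H$ is a strong deformation retraction of $f^{-1}(a_{i-1},a_{i+1})$ onto $\V_i$; the $\V_i$-inclusion is therefore a homotopy equivalence. At the extreme indices $i=0,n$ one of the two cylinders is absent and the argument only simplifies.

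The squares in~\eqref{eq:transfer-1} commute up to homotopy by a one-line cylinder slide: for $x\in\E_{i-1}$ the two ways around the left square give $(x,a)$ and $(\mathbbm{r}_{i-1}(x),a_i)$ inside $f^{-1}(a_{i-1},a_{i+1})$, and $s\mapsto(x,(1-s)a+sa_i)$ is a homotopy between them, using $(x,a_i)=(\mathbbm{r}_{i-1}(x),a_i)$ under the gluing; the right square is symmetric. Naturality~\eqref{eq:transfer-2} is then essentially formal: a morphism $\alpha$ is a single fiber-preserving map $\X\to\Y$, so it restricts to $f^{-1}(a_i)\to g^{-1}(a_i)$ and to $f^{-1}(a)\to g^{-1}(a)$, i.e.\ to $\V_i^f\to\V_i^g$ and $\E_i^f\to\E_i^g$; each square then commutes on the nose, because its horizontal arrows are genuine subspace inclusions and both composites are just the restriction of $\alpha$.

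The one genuine point-set subtlety, and the step I expect to be the main obstacle, is continuity of the retraction $H$ onto $\V_i$. It is continuous on each piece of the disjoint union defining $\X$ and is compatible with the identifications, but to conclude that it descends to a continuous map on the quotient I need $q\times\mathrm{id}_{[0,1]}$ to be a quotient map, where $q$ is the quotient map building $\X$. This follows from Whitehead's theorem, since $[0,1]$ is locally compact Hausdorff, and it is the only place where care beyond formal diagram-chasing is required. Note that compactness and local path-connectedness of the $\V_i,\E_i$ are not needed for this lemma; they enter only in the later passage between components and path-components.
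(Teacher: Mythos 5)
Your proposal is correct and follows essentially the same route as the paper, whose proof is a two-line sketch of exactly this argument: the cylindrical structure makes $f^{-1}(a_i,a_{i+1})$ homotopy equivalent to any fiber $\E_i$, the interval $f^{-1}(a_{i-1},a_{i+1})$ deformation-retracts onto the critical fiber $\V_i$, and the diagrams \eqref{eq:transfer-1} and \eqref{eq:transfer-2} ``follow easily.'' Your version simply supplies the explicit retraction formulas, the cylinder-slide homotopies, and the point-set justification (descent of $H$ through the quotient via the locally-compact-factor quotient-map fact) that the paper leaves implicit.
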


\begin{proof}
Thanks to the cylindrical structure between critical points, $f^{-1}(a_i,a_{i+1})$ is homotopy equivalent to any of its fibers~$\E_i$, and $f^{-1}(a_{i-1},a_{i+1})$ deformation-retracts onto its critical fiber~$\V_i$. The remaining assertions follow easily.
\end{proof}

\subsection{The category of {\bf $\R$-graphs}}
\label{sec:reeb}


An object of $\Reeb$, also known as an $\R$-graph, is a constructible $\R$-space $(\X,f)$ for which the spaces $\V_i$ and $\E_i$ are 0-dimensional (i.e.\ finite sets of points with the discrete topology).
Geometrically, it is a compact 1-dimensional polyhedron triangulated so that restriction $f|_\E$ to each edge $\E \subseteq \X$ is an embedding. 
Morphisms in $\Reeb$ are the same as morphisms in $\Rtop$.

\begin{notn}
\label{notn:Reeb-comb}
We construct an $\R$-graph $(\X,f)$ with critical set $S = \{a_0, a_1, \dots, a_n\}$ as follows:
\begin{itemize}
\item
For $0 \leq i \leq n$, we specify a finite set of vertices $V_i$, which lie over $a_i$.
\item
For $0 \leq i \leq n-1$, we specify a finite set of edges $E_{i}$ which lie over the interval $[a_i,a_{i+1}]$.

\item
For $0 \leq i \leq n-1$, we specify attaching maps $\ell_{i}: E_i \to V_i$ and $r_{i}: E_i \to V_{i+1}$.
\end{itemize}
The space~$\X$ is the quotient of the disjoint union of the spaces $V_i \times \{a_i\}$ and $E_i \times [a_i,a_{i+1}]$ with respect to the identifications $(\ell_i(e), a_i) \sim (e,a_i)$ and $(r_i(e), a_{i+1}) \sim (e, a_{i+1})$, with the map~$f$ being the projection onto the second factor.
See Figure~\ref{fig:con-R-graph}.
If we wish to add extra points to the critical set, we can retain this description by splitting the edges at new vertices over the new critical points.
\end{notn}

\begin{figure}
\centerline{
\includegraphics[width = \textwidth]{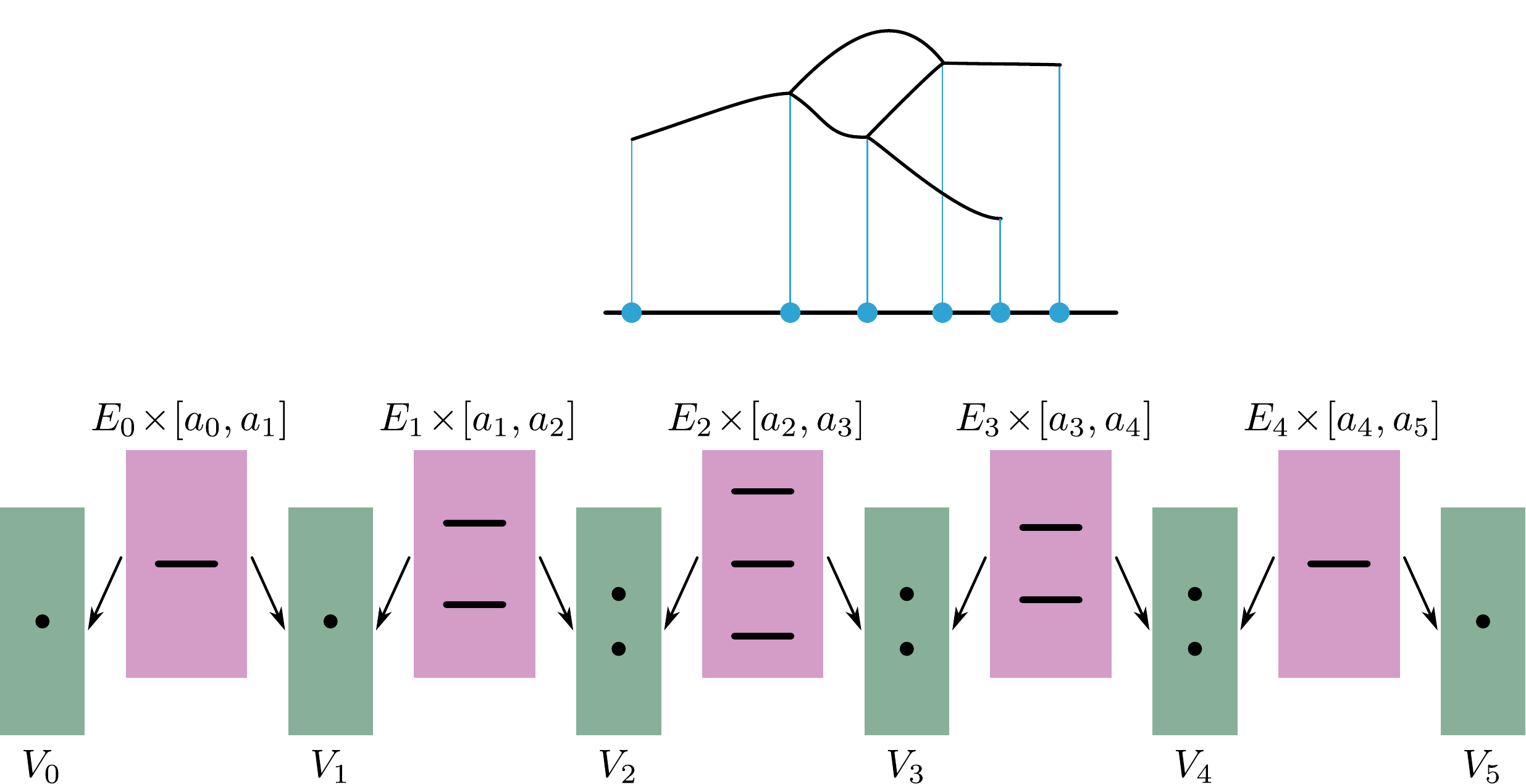}
}
\caption[Example of an $\R$-graph]{An example of an $\R$-graph with its presentation as a constructible $\R$-space. It is the Reeb graph of the $\R$-space in Figure~\ref{fig:con-R-space}, its presentation obtained by applying the $\pi_0$ functor to the presentation indicated there.
Indeed, the reader may verify that $V_i = \pi_0(\V_i)$ and $E_i = \pi_0(\E_i)$ and that the attaching maps are obtained in the manner described.}
\label{fig:con-R-graph}
\end{figure}

The restriction of $f$ over each open interval $(a_i,a_{i+1})$ is a covering map. We use this fact in the next proposition, which gives a combinatorial description of the morphisms of $\Reeb$.

\begin{prop}
\label{Prop:MapData}
Let $(\X,f)$, $(\Y,g)$ be $\R$-graphs with a common critical set~$S = \{a_0, a_1, \dots, a_n\}$ and described as above. A morphism $\phi:(\X,f) \to (\Y,g)$ is exactly specified by the following data:
\begin{itemize}
\item
Maps $\phi_i^V: V_i^f \to V_i^g$ for $0 \leq i \leq n$.

\item
Maps $\phi_i^E: E_i^f \to E_i^g$ for $0 \leq i \leq n-1$.

\item
The consistency conditions $\phi^V_i \ell_i^f = \ell_i^g \phi^E_i$ and $\phi^V_{i+1} r_i^f = r_i^g \phi^E_i$ are satisfied, for $0 \leq i \leq n-1$.
\end{itemize}
\end{prop}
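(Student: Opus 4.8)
The plan is to establish a bijection between morphisms $\phi:(\X,f)\to(\Y,g)$ and tuples $(\phi_i^V,\phi_i^E)$ satisfying the stated consistency conditions, proving the two directions separately. First I would extract the data from a given morphism $\phi$. Since $\phi$ commutes with the projections, $g\phi=f$, it carries the fiber $f^{-1}(a_i)$ into $g^{-1}(a_i)$; as the fiber over $a_i$ consists precisely of the vertices, these fibers are identified with $V_i^f$ and $V_i^g$, giving the vertex maps $\phi_i^V$. For the edges, the key input is the observation preceding the statement: over each open interval $(a_i,a_{i+1})$ the projections restrict to covering maps, so $f^{-1}(a_i,a_{i+1})$ and $g^{-1}(a_i,a_{i+1})$ are disjoint unions of arcs indexed by $E_i^f$ and $E_i^g$, each projecting homeomorphically onto the interval.

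The crux of the forward direction is a rigidity claim. Fix an edge $e\in E_i^f$ and consider the path $a\mapsto \phi(e,a)$ for $a\in(a_i,a_{i+1})$. Level-preservation forces $\phi(e,a)$ to lie over $a$, so the component of $\phi(e,a)$ in the discrete set $E_i^g$ is a continuous function of $a$ on the connected interval, hence constant; that constant value is $\phi_i^E(e)$, and we obtain $\phi(e,a)=(\phi_i^E(e),a)$ throughout the open edge. Thus $\phi$ is completely determined on the open edges by $\phi_i^E$. The consistency conditions then follow from continuity at the critical values: letting $a\to a_i^+$ in $\phi(e,a)=(\phi_i^E(e),a)$ gives $\phi(e,a_i)=(\phi_i^E(e),a_i)$, which in $\Y$ is the vertex $\ell_i^g\phi_i^E(e)$. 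But $(e,a_i)$ is identified in $\X$ with the vertex $\ell_i^f(e)$, whose image is $\phi_i^V\ell_i^f(e)$. Equating yields $\phi_i^V\ell_i^f=\ell_i^g\phi_i^E$, and the analogous limit at $a_{i+1}$ yields $\phi_{i+1}^V r_i^f=r_i^g\phi_i^E$.

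For the converse, given data satisfying the consistency conditions I would build $\phi$ using the quotient description of $\X$. On the disjoint union of the spaces $V_i^f\times\{a_i\}$ and $E_i^f\times[a_i,a_{i+1}]$, define $\phi$ by $v\mapsto\phi_i^V(v)$ and $(e,a)\mapsto(\phi_i^E(e),a)$; this is continuous on each piece. The consistency conditions are exactly what is needed for $\phi$ to respect the identifications $(\ell_i^f(e),a_i)\sim(e,a_i)$ and $(r_i^f(e),a_{i+1})\sim(e,a_{i+1})$, so $\phi$ descends to a continuous map on the quotient by the universal property of the quotient topology. It visibly satisfies $g\phi=f$, hence is a morphism, and by the determination result above the two constructions are mutually inverse.

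I expect the main obstacle to be that rigidity claim in the forward direction—that $\phi$ is forced to be the level-preserving product map $(e,a)\mapsto(\phi_i^E(e),a)$ on each open edge—since this is where the covering-map structure and the connectedness of the arc do the real work. Everything else reduces to a routine application of the pasting lemma and the universal property of the quotient topology.
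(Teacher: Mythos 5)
Your proposal is correct and follows essentially the same route as the paper's proof: the forward direction exploits the covering-map structure over each open interval $(a_i,a_{i+1})$ to force $\phi$ to be the rigid product map $(e,a)\mapsto(\phi_i^E(e),a)$ on open edges (the paper compresses this into the remark that each edge maps to exactly one edge in a unique way), and the converse assembles $\phi$ from the combinatorial data on the quotient presentation of $\X$, with the consistency conditions ensuring the map descends. Your write-up simply fills in the details---the connectedness/discreteness argument for rigidity, the limit argument at critical values, and the explicit appeal to the quotient principle---that the paper's two-sentence proof leaves implicit.
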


\begin{proof}
Any morphism $\phi: (\X,f) \to (\Y,g)$ defines consistent vertex and edge maps as above (the covering map structure between critical points guarantees that each edge of~$\X$ maps to exactly one edge of~$\Y$, in a unique way once the edge is chosen). Conversely, a morphism can be specified by defining continuous maps on the vertices and edges of~$\X$ in a consistent way; the data above provide that.
\end{proof}

The requirement of a common critical set is no restriction, because we can take the union of critical sets for $(\X,f)$ and~$(\Y,g)$ to obtain a critical set for both. On the other hand, for computational purposes we may wish to be more economical; see Section~\ref{subsec:morphisms-diff-S}.

\subsection{The Reeb functor $\RR$}
\label{sec:functor-R}

The Reeb functor $\RR$ converts a constructible $\R$-space to an $\R$-graph, its {\bf Reeb graph}. We provisionally define it as a functor $\Rtop \to \Rtop$, then show that it restricts to a functor $\Rtopc {\to} \Reeb$.
%
%

\begin{lemma}[quotient principle {\cite[Proposition~3.8.2]{Sutherland2009}}]
\label{lem:quotient}
Let $(\X/{\sim})$ be the quotient of a topological space~$\X$ by an equivalence relation~$\sim$, and let $\Y$ be another topological space. For any continuous function $\X \to \Y$ which is constant on equivalence classes, the induced function $(\X/{\sim}) \to \Y$ is continuous with respect to the quotient topology.
\qed
\end{lemma}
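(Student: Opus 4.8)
The plan is to verify directly the universal property of the quotient topology. Write $\pi : \X \to (\X/{\sim})$ for the quotient map, and recall that the quotient topology is defined precisely so that a subset $U \subseteq (\X/{\sim})$ is open if and only if its preimage $\pi^{-1}(U)$ is open in~$\X$. This single characterization is the engine of the whole argument.

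First I would construct the induced function. Let $h : \X \to \Y$ be the given continuous function that is constant on equivalence classes. Define $\bar h : (\X/{\sim}) \to \Y$ by $\bar h(\pi(x)) = h(x)$. This is well defined exactly because $h$ is constant on each class: if $\pi(x) = \pi(x')$ then $x \sim x'$, so $h(x) = h(x')$; and it is the unique such map since $\pi$ is surjective. By construction $h = \bar h \pi$.

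Next I would check continuity of $\bar h$ using the defining property of the quotient topology. Fix an open set $V \subseteq \Y$. I must show $\bar h^{-1}(V)$ is open in $(\X/{\sim})$, which by the quotient topology amounts to showing $\pi^{-1}\bigl(\bar h^{-1}(V)\bigr)$ is open in~$\X$. But
\[
\pi^{-1}\bigl(\bar h^{-1}(V)\bigr) = (\bar h \pi)^{-1}(V) = h^{-1}(V),
\]
and the right-hand side is open because $h$ is continuous. Hence $\bar h^{-1}(V)$ is open, and $\bar h$ is continuous.

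I do not expect any genuine obstacle here: the statement is the standard universal property of quotients (as the citation to \cite{Sutherland2009} indicates), and the entire argument rests on the single set-theoretic identity $\pi^{-1}(\bar h^{-1}(V)) = h^{-1}(V)$ together with the definition of the quotient topology. The only points requiring even momentary care are the well-definedness of $\bar h$, handled by the hypothesis that $h$ is constant on classes, and its uniqueness, which is automatic from surjectivity of~$\pi$.
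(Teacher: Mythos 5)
Your proof is correct and complete. The paper itself does not prove this lemma at all---it is stated with a citation to \cite[Proposition~3.8.2]{Sutherland2009} and closed immediately with \qed---and your argument (well-definedness of $\bar h$ from constancy on classes, then continuity via the identity $\pi^{-1}(\bar h^{-1}(V)) = h^{-1}(V)$ and the defining property of the quotient topology) is exactly the standard proof that the cited reference supplies.
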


Let $(\X,f)$ be an $\R$-space, with geometric Reeb graph $(\X_f, \bar{f})$.
Recall that $\X_f$ is the quotient space of~$\X$ by the relation whose equivalence classes are the path-components of the levelsets of~$f$.
It follows from the quotient principle that $\bar{f}$ is continuous, so $(\X_f, \bar{f})$ is an $\R$-space.
The quotient map $\X \to \X_f$ defines a morphism $(\X,f) \to (\X_f, \bar{f})$ in the category of $\R$-spaces; we label this morphism $\rho_f = \rho_{(\X,f)}$.

Now consider a morphism $\alpha: (\X,f) \to (\Y,g)$. Since $\alpha$ preserves levelsets and (being continuous) carries path-connected sets to path-connected sets, the composite map $\X \stackrel{\alpha}{\to} \Y \to \Y_g$ is constant on equivalence classes. By the quotient principle, the induced map $\bar\alpha: \X_f \to \Y_g$ is continuous.
Since $\bar{g} \bar{\alpha} = \bar{f}$ this defines a morphism $\bar\alpha: (\X_f, \bar{f}) \to (\Y_g, \bar{g})$.

\begin{prop}
\label{prop:Reeb-functor}
The formulas $\RR(\X, f) = \left(\X_f, \bar{f}\right)$ and $\RR[\alpha] = \bar\alpha$ define a functor $\Rtop \to \Rtop$. The collection of maps $\rho = (\rho_f)$ constitute a natural transformation $\onefunc_{\Rtop} \tO \RR$.
\end{prop}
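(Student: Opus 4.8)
The plan is to derive the whole proposition from a single fact: each quotient map $\rho_f:(\X,f)\to(\X_f,\bar f)$ is surjective, and hence right-cancellable, so that any two morphisms out of $(\X_f,\bar f)$ that agree after precomposition with $\rho_f$ must coincide. By construction $\bar\alpha=\RR[\alpha]$ is the map induced on the quotient by $\rho_g\,\alpha$; equivalently, it is the unique $\Rtop$-morphism satisfying
\begin{equation}
\label{eq:plan-defining}
\bar\alpha\,\rho_f = \rho_g\,\alpha .
\end{equation}
Both functoriality and naturality then follow from \eqref{eq:plan-defining}.

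First I would check that $\RR$ preserves identities. Putting $\alpha=\onefunc_{(\X,f)}$ into \eqref{eq:plan-defining} gives $\RR[\onefunc_{(\X,f)}]\,\rho_f=\rho_f$; since $\onefunc_{(\X_f,\bar f)}$ satisfies the same relation, right-cancellability of $\rho_f$ yields $\RR[\onefunc_{(\X,f)}]=\onefunc_{\RR(\X,f)}$. Next I would check composition. Given $\alpha:(\X,f)\to(\Y,g)$ and $\beta:(\Y,g)\to(\Y',g')$, applying \eqref{eq:plan-defining} first to $\alpha$ and then to $\beta$ gives
\begin{equation*}
\bigl(\RR[\beta]\,\RR[\alpha]\bigr)\,\rho_f
= \RR[\beta]\,\rho_g\,\alpha
= \rho_{g'}\,\beta\,\alpha
= \rho_{g'}\,(\beta\alpha),
\end{equation*}
so $\RR[\beta]\,\RR[\alpha]$ satisfies the defining relation for $\RR[\beta\alpha]$ and hence equals it. As the object and morphism assignments have already been shown to land in $\Rtop$, this confirms that $\RR$ is a functor.

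Finally, naturality of $\rho$ is just \eqref{eq:plan-defining} reread as the commuting square
\[
\xymatrix{
(\X,f) \ar[r]^{\alpha} \ar[d]_{\rho_f} & (\Y,g) \ar[d]^{\rho_g}\\
(\X_f,\bar f) \ar[r]_{\bar\alpha} & (\Y_g,\bar g).
}
\]
I expect no genuine obstacle here. The only points that merit a moment's care are that each $\rho_f$ really is surjective---it is the quotient map onto $\X_f$---and that every map involved is an honest $\Rtop$-morphism, i.e.\ commutes with the projections to $\R$. The latter was recorded when $\rho_f$ and $\bar\alpha$ were constructed (from $\bar f\,\rho_f=f$ and $\bar g\,\bar\alpha=\bar f$), and it ensures that both composites around the square cover $f$, so the square lives in $\Rtop$ and not merely in $\Top$.
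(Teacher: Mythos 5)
Your proof is correct and takes essentially the same approach as the paper: the paper's proof likewise rests on the observation that $\bar\alpha$ is the \emph{unique} map satisfying $\bar\alpha\,\rho_f=\rho_g\,\alpha$, derives preservation of identities and composition from that uniqueness, and reads the commuting square as the naturality of $\rho$. The only difference is that you spell out why uniqueness holds (right-cancellability of the surjective quotient map $\rho_f$), a point the paper leaves implicit.
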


\begin{obs}
\label{obs:Reeb-rho}
In other words $\RR$ is a \emph{pointed endofunctor} of $\Rtop$: an endofunctor that is the target of a natural transformation from the identity functor.
We call $\RR$ the \emph{Reeb functor} and its `basepoint' $\rho$ the \emph{canonical projection}.
\end{obs}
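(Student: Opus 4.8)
The plan is to reduce every assertion to a single universal property of the quotient map, namely that $\rho_f : \X \to \X_f$ is surjective and therefore right-cancellable: any two maps out of $\X_f$ that agree after precomposition with $\rho_f$ must coincide. Well-definedness of $\RR$ on objects and morphisms has already been arranged in the discussion preceding the statement, where $\bar\alpha$ was produced, via the quotient principle, as the unique continuous map satisfying $\bar\alpha\,\rho_f = \rho_g\,\alpha$. I would foreground this characterizing equation, since it drives everything that follows.

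Naturality of $\rho$ then requires no work at all. The square expressing naturality of $\rho : \onefunc_{\Rtop} \tO \RR$ at a morphism $\alpha : (\X,f) \to (\Y,g)$ is exactly the equation $\bar\alpha\,\rho_f = \rho_g\,\alpha$, which is the defining property of $\bar\alpha$; so the square commutes by construction, and it remains only to recall that each $\rho_f$ is itself a morphism of $\R$-spaces.

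For functoriality I would treat identities and composition separately, each by the cancellation principle. The induced map $\overline{\mathbb{1}_{(\X,f)}}$ is characterized by $\overline{\mathbb{1}}\,\rho_f = \rho_f$, and $\mathbb{1}_{\X_f}$ satisfies $\mathbb{1}_{\X_f}\,\rho_f = \rho_f$ as well, so the two agree; hence $\RR[\mathbb{1}] = \mathbb{1}$. For composition, given $\alpha : (\X,f) \to (\Y,g)$ and $\beta : (\Y,g) \to (\Z,h)$, I would verify that $\bar\beta\,\bar\alpha$ satisfies the equation characterizing $\overline{\beta\alpha}$: chasing the defining equations gives $(\bar\beta\,\bar\alpha)\,\rho_f = \bar\beta\,(\rho_g\,\alpha) = (\rho_h\,\beta)\,\alpha = \rho_h\,(\beta\alpha)$, whence $\bar\beta\,\bar\alpha = \overline{\beta\alpha}$ by right-cancellation of $\rho_f$.

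I do not expect a genuine obstacle. The entire proof is formal once the characterization of $\bar\alpha$ through $\rho$ is in place, and the only point deserving care is the uniqueness clause---that an induced map is pinned down by its composite with $\rho_f$---which is where surjectivity of the quotient map is used. After that, functoriality and naturality are forced, so the work is one of bookkeeping rather than of surmounting any difficulty.
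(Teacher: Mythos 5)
Your proposal is correct and matches the paper's own argument: the paper likewise proves Proposition~\ref{prop:Reeb-functor} by noting that $\bar\alpha$ is the \emph{unique} map making the square $\bar\alpha\,\rho_f = \rho_g\,\alpha$ commute, deducing functoriality (identities and composition) from this uniqueness, and reading off naturality of $\rho$ from the same commuting square. You merely make explicit the diagram chases that the paper compresses into ``uniqueness implies that $\RR[\cdot]$ respects identities and composition,'' including the correct observation that uniqueness rests on the right-cancellability of the surjective quotient map $\rho_f$.
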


\begin{proof}[Proof of Proposition~\ref{prop:Reeb-functor}]
Notice that $\bar\alpha$ is the unique map that makes the following diagram commute:
\begin{equation*}
\xymatrix{
\X
	\ar[r]^{\alpha}
	\ar[d]_{\rho_f}
&
\Y
	\ar[d]^{\rho_g}
\\
\X_f
	\ar[r]^{\bar\alpha}
&
\Y_g
}
\end{equation*}
Uniqueness implies that $\RR[\cdot]$ respects identities and composition, so $\RR$ is a functor. Of course, these facts are easily verified directly. The commuting of the square is what makes $\rho$ a natural transformation.
\end{proof}

\begin{prop}
The Reeb functor carries constructible $\R$-spaces to $\R$-graphs.
\end{prop}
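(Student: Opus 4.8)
The plan is to take an explicit constructible presentation of $(\X,f)$ and show that applying the path-component functor $\pi_0$ throughout its defining data produces an $\R$-graph isomorphic to the Reeb graph $\RR(\X,f)=(\X_f,\bar f)$. So first I would fix a presentation with critical set $S=\{a_0,\dots,a_n\}$, spaces $\V_i,\E_i$, and attaching maps $\mathbbm{l}_i,\mathbbm{r}_i$, and build a candidate $\R$-graph by setting $V_i=\pi_0(\V_i)$, $E_i=\pi_0(\E_i)$, $\ell_i=\pi_0(\mathbbm{l}_i)$ and $r_i=\pi_0(\mathbbm{r}_i)$. The first thing to verify is that these are finite: each $\V_i,\E_i$ is compact and locally path-connected, so its path-components are open, and a disjoint open cover of a compact space is finite. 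Feeding this $0$-dimensional data into Notation~\ref{notn:Reeb-comb} yields an $\R$-graph that I will call $(\Y,\bar f)$. (If one wants the identification of components to be independent of the chosen presentation, the cylinder principle of Lemma~\ref{lem:cylinder} supplies canonical bijections $\pi_0(\V_i)\cong\pi_0(f^{-1}(a_{i-1},a_{i+1}))$ and $\pi_0(\E_i)\cong\pi_0(f^{-1}(a_i,a_{i+1}))$, but this is not needed for the object-level claim.)

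Next I would exhibit an isomorphism $(\X_f,\bar f)\cong(\Y,\bar f)$ in $\Rtop$. There is a natural component map $c:\X\to\Y$ sending $\V_i$ to $V_i$ and each cylinder $\E_i\times[a_i,a_{i+1}]$ to $E_i\times[a_i,a_{i+1}]$ by passing to path-components in the first factor. Because $\pi_0$ carries $\mathbbm{l}_i,\mathbbm{r}_i$ to $\ell_i,r_i$, the map $c$ respects both families of identifications, so by the quotient principle (Lemma~\ref{lem:quotient}) it descends to a continuous morphism of $\R$-spaces. By construction $c$ is constant on the path-components of each levelset, hence factors as $c=\bar c\,\rho_f$ for a continuous $\bar c:\X_f\to\Y$, again by Lemma~\ref{lem:quotient}, and $\bar c$ commutes with the projections to $\R$.

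The crux is to show that $\bar c$ is a bijection. Surjectivity is immediate since $c$ is already surjective on each piece. Injectivity amounts to checking that the fibers of $c$ over a point of $\Y$ are exactly the path-components of the corresponding levelset of $f$: over an interior value $a\in(a_i,a_{i+1})$ the cylinder structure presents the levelset as a single copy of $\E_i$ with no identifications, so its path-components are precisely $\pi_0(\E_i)=E_i$; over a critical value $a_i$ the levelset is $\V_i$, since the attached cylinders contribute only their boundary, which is glued into $\V_i$, so its path-components are $\pi_0(\V_i)=V_i$. This is the step I expect to require the most care, as it is where local path-connectedness and the product structure of the cylinders are genuinely used to guarantee that "same $\pi_0$-class" coincides with "same path-component of the levelset." Finally, to upgrade the continuous bijection $\bar c$ to a homeomorphism I would observe that $\X_f$ is compact (a quotient of the compact space $\X$) while $\Y$ is Hausdorff (a finite $1$-complex), so $\bar c$ is automatically a homeomorphism, and its inverse commutes with the projections as well. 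Thus $\bar c$ is an isomorphism in $\Rtop$, and since $(\Y,\bar f)$ is an $\R$-graph, so is $\RR(\X,f)=(\X_f,\bar f)$; the Reeb functor of Proposition~\ref{prop:Reeb-functor} therefore restricts to a functor $\Rtopc\to\Reeb$.
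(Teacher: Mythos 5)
Your proposal is correct and follows essentially the same route as the paper: define the candidate $\R$-graph by $V_i = \pi_0(\V_i)$, $E_i = \pi_0(\E_i)$, $\ell_i = \pi_0[\mathbbm{l}_i]$, $r_i = \pi_0[\mathbbm{r}_i]$, obtain continuity of the comparison map via the quotient principle together with the openness of path-components in locally path-connected spaces, and conclude with the compact-to-Hausdorff continuous-bijection argument. The only differences are that you spell out two points the paper treats as immediate---the finiteness of the sets $V_i, E_i$ and the verification that the set-level map is the ``tautological bijection''---which is a welcome amplification rather than a deviation.
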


To see how this works, compare Figure~\ref{fig:con-R-space} to Figure~\ref{fig:con-R-graph}.

\begin{proof}
Given a constructible $\R$-space $(\X,f)$ it is clear how $\RR(\X,f)$ should be described as an $\R$-graph (using Notation~\ref{notn:Reeb-comb}):
\[
V_i = \pi_0 (\V_i),
\quad
E_i = \pi_0 (\E_i),
\quad
\ell_i = \pi_0 [\mathbbm{l}_i],
\quad
r_i = \pi_0 [\mathbbm{r}_i].
\]
We have a tautological bijection from $\X_f$ to this graph~$\mathbb{G}$, since the points of~$\mathbb{G}$ are precisely the path-components of the levelsets of~$f$. This map preserves levelsets. It remains to show that it is a homeomorphism.

First we show that it is continuous. Let $\hat{\X}$ denote the disjoint union of the spaces $\V_i \times \{a_i\}$ and $\E_i \times [a_i, a_{i+1}]$. Consider the composite $\hat{\X} \to \X \to \X_f \to \mathbb{G}$; the first two maps are quotient maps. This composite is continuous because the path-components of a locally path-connected space are open. Applying the quotient principle twice, it follows that $\X_f \to \mathbb{G}$ is continuous.

The proof is completed by invoking the standard result that  a continuous bijection from a compact space ($\X_f$) to a Hausdorff space ($\mathbb{G}$) is a homeomorphism~\cite[Theorem~5.9.1]{Sutherland2009}.
\end{proof}

\begin{prop}
\label{prop:RI}
Each $\R$-graph is naturally isomorphic to its Reeb graph.
\end{prop}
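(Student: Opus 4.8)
The plan is to recognize that this proposition is exactly the assertion $\RR\II \simeq \onefunc_{\Reeb}$ promised in Section~\ref{Sect:RGCF}, and to produce the natural isomorphism using machinery already in hand rather than building anything new. The obvious candidate is the canonical projection: by Proposition~\ref{prop:Reeb-functor} the maps $\rho = (\rho_f)$ form a natural transformation $\onefunc_{\Rtop} \tO \RR$. Since the preceding proposition guarantees that $\RR$ sends the constructible $\R$-space underlying an $\R$-graph back to an $\R$-graph, the composite $\RR\II$ is a genuine endofunctor of $\Reeb$, and restricting $\rho$ along the inclusion $\II: \Reeb \to \Rtopc$ it suffices to show that for every $\R$-graph $(\X,f)$ the component $\rho_f: (\X,f) \to \RR(\X,f) = (\X_f, \bar f)$ is an isomorphism in $\Reeb$. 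Naturality will then be inherited for free.

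First I would observe that each fiber of an $\R$-graph is a discrete finite set of points: over a critical value $a_i$ the fiber is the vertex set $V_i$, and over an interior value $a \in (a_i, a_{i+1})$ it meets each edge of $E_i$ in a single point. Since each such point is isolated in its fiber, and any path-connected subset of a discrete space is a singleton, every path-component of every levelset is a single point. Consequently the equivalence relation defining $\X_f$ is trivial---each class is a singleton---so $\rho_f$ is a bijection and the quotient topology on $\X_f$ coincides with the original topology on~$\X$. This makes $\rho_f$ a homeomorphism, hence an isomorphism of $\R$-spaces, and (being a full subcategory) also an isomorphism in $\Reeb$. Equivalently, one can run this through the combinatorial recipe $V_i = \pi_0(\V_i)$, $E_i = \pi_0(\E_i)$, $\ell_i = \pi_0[\mathbbm{l}_i]$, $r_i = \pi_0[\mathbbm{r}_i]$ of the preceding proposition: because $\pi_0$ acts as the identity on finite discrete sets, $\RR(\X,f)$ returns the very same combinatorial data, so $\rho_f$ is the identity on presentations.

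Finally I would conclude by invoking naturality: since $\rho$ is a natural transformation $\onefunc_{\Rtop} \tO \RR$ and every restricted component $\rho_f$ has just been shown to be an isomorphism, $\rho$ restricts to a natural isomorphism $\onefunc_{\Reeb} \tO \RR\II$, which is the claim. I expect no serious obstacle here; the only point requiring mild care is the passage from continuous bijection to homeomorphism, and in this case it is settled cleanly by the triviality of the equivalence relation, so that no appeal to the compact-to-Hausdorff argument used in the preceding proof is actually needed.
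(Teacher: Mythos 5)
Your proposal is correct and follows essentially the same route as the paper: the paper's proof likewise observes that each levelset of an $\R$-graph is a finite discrete space, so the equivalence classes are singletons and the canonical projection $\rho$ is a homeomorphism, with naturality inherited from $\rho$ being a natural transformation $\onefunc_{\Rtop} \tO \RR$. Your added details---the explicit identification of fibers via $V_i$ and $E_i$, the remark that the quotient topology coincides with the original when the relation is trivial, and the combinatorial cross-check via $\pi_0$---are sound elaborations of the same two-line argument.
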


\begin{proof}
Each levelset of an $\R$-graph is a finite discrete space, so the equivalence classes are singletons. Thus the canonical projection~$\rho$ is a homeomorphism in these cases.
\end{proof}

Thus we can think of the Reeb functor as a projection operator $\Rtopc \to \Reeb$. Henceforth, we will mostly reserve the symbol $\RR$ for the functor with this particular domain and codomain.
Proposition~\ref{prop:RI} can be restated as the assertion that $\rho$ restricts to a natural isomorphism $\onefunc_\Reeb \tO \RR\II$.

\section{The cosheaf categories}
\label{sec:cosheaf}

We now describe the three cosheaf categories, from largest to smallest, in Sections \ref{sec:pre}, \ref{sec:csh} and~\ref{sec:cshc}.
In Section~\ref{sec:functor-C} we study the Reeb cosheaf functor~$\CC$, and in Section~\ref{sec:equivalence} we show that it defines an equivalence of categories.

The idea behind the cosheaf categories is that we can study an $\R$-space by inspecting its behavior over subintervals of~$\R$. To this end, let $\Int$ denote the category whose objects are open intervals $I \subseteq \R$ and whose morphisms are the inclusions $I \subseteq J$. (This is an instance of Example~\ref{ex:poset}.)

\subsection{The category of pre-cosheaves}
\label{sec:pre}

The largest of the three cosheaf categories is $\Pre = \Set^\Int$, the category of functors $\Int \to \Set$ with natural transformations as morphisms (Example~\ref{ex:functor-cat}). The elements of $\Pre$ are called {\bf pre-cosheaves}.

\begin{rmk}
More generally for any category $\CCC$ we can consider $\Pre(\CCC) = \CCC^\Int$, the category of pre-cosheaves in~$\CCC$ over the real line.
\end{rmk}

\begin{ex}
Let $(\X,f)$ be an $\R$-space. This determines a pre-cosheaf $\Xfunc \in \Pre(\Top)$ over the real line, as follows: for every interval $I$ we let $\Xfunc(I)$ be the topological space $f^{-1}(I)$, and for every pair $I \subseteq J$ we let $\Xfunc[I \subseteq J]$ be the inclusion map $f^{-1}(I) \subseteq f^{-1}(J)$.
%
\end{ex}

\begin{ex}
\label{ex:pre-cosheaves}
The previous example generates many others. We can post-compose $\Xfunc$ with any functor $\GG: \Top \to \CCC$ to obtain a pre-cosheaf $\GG \Xfunc \in \Pre(\CCC)$. For example:
\begin{itemize}
\item
Let $\mathrm{H}_k$ denote singular $k$-homology; then $\mathrm{H}_k\Xfunc$ is a pre-cosheaf in $\mathbf{Ab}$, abelian groups.

\item
Let $\pi_0$ denote the set of path-components of a space; then $\pi_0\Xfunc$ is a pre-cosheaf in $\Set$.

\item
Let $\overline\pi_0$ denote the set of connected components of a space; then $\overline\pi_0\Xfunc$ is a pre-cosheaf in $\Set$.

\end{itemize}
Thus $\pi_0 \Xfunc(I) = \pi_0(f^{-1}(I))$, for example.
The key requirement is that the operations $\mathrm{H}_k$, $\pi_0$, $\overline\pi_0$ be functors: they specify an object for each topological space, and a morphism for each continuous map. For instance, if $\phi: \Y \to \Z$ is a continuous map then each path-component of~$\Y$ maps into a path-component of~$\Z$. This defines $\pi_0[\phi] : \pi_0(\Y) \to \pi_0(\Z)$.
\end{ex}

\subsection{The category of cosheaves}
\label{sec:csh}

The second category in the right-hand column is $\Csh$, the category of {\bf cosheaves} in $\Set$ over the real line. 
It is a full subcategory of $\Pre$: it is defined by specifying which pre-cosheaves are cosheaves, and declaring that cosheaf morphisms are the same as pre-cosheaf morphisms.

A \emph{cosheaf} is a pre-cosheaf $\Ffunc$ which satisfies the following `gluing' property. Let $U$ be an open interval and let $(I_p \mid p \in P)$ be a family of open intervals whose union is~$U$. Then we ask that $\Ffunc(U)$ be the \emph{colimit} of the following diagram:
\begin{equation}
\label{eq:cosheaf}
\coprod_{p,q} \Ffunc(I_p \cap I_q)
\rightrightarrows
\coprod_p \Ffunc(I_p)
\end{equation}
This must be true for every $U$ and every cover $(I_p)$. In particular, this implies $\Ffunc(\emptyset) = \emptyset$ so on the left-hand side of \eqref{eq:cosheaf} we consider only the terms with $I_p \cap I_q$ nonempty.

Here are three interpretations of the gluing property.

\begin{enumerate}
\item
$\Ffunc(U)$ is obtained from the disjoint union $\coprod_p \Ffunc(I_p)$ by identifying all pairs of points
\begin{align*}
\Ffunc[I_p \cap I_q \subseteq I_p](x) &\in \Ffunc(I_p)
\\
\Ffunc[I_p \cap I_q \subseteq I_q](x) &\in \Ffunc(I_q)
\end{align*}
where $p,q$ are indices with $I_p \cap I_q$ nonempty and $x \in \Ffunc(I_p \cap I_q)$.

\item
$\Ffunc(U)$ is the set of connected components of the graph with a vertex for every element in the disjoint union $\coprod_p \Ffunc(I_p)$ and an edge for every element in the disjoint union $\coprod_{p,q} \Ffunc(I_p \cap I_q)$. The maps $\Ffunc[I_p \cap I_q \subseteq I_p]$ and $\Ffunc[I_p \cap I_q \subseteq I_q]$ indicate the vertices to which each edge is glued.

\item
$\Ffunc(U)$ is characterized by the following universal property. Let $Z$ be a set, and suppose that maps $\zeta_p: \Ffunc(I_p) \to Z$ are given for all~$p$, such that for all $p,q$ with $I_p \cap I_q$ nonempty the following two maps $\Ffunc(I_p \cap I_q) \to Z$ are equal:
\begin{equation}
\label{eq:cosheaf-3}
\zeta_p \circ \Ffunc[I_p \cap I_q \subseteq I_p]
=
\zeta_q \circ \Ffunc[I_p \cap I_q \subseteq I_q]
\end{equation}
Then there is a unique map $\zeta: \Ffunc(U) \to Z$ such that
\begin{equation}
\label{eq:cosheaf-4}
\zeta_p
=
\zeta \circ \Ffunc[I_p \subseteq U]
\end{equation}
for all~$p$.
\end{enumerate}
The first two interpretations are valid in~$\Set$. The third interpretation is meaningful in any category.
The universal property characterizes $\Ffunc(U)$, and the maps to it from the $\Ffunc(I_p)$ and the $\Ffunc(I_p \cap I_q)$, uniquely up to a canonical isomorphism.

\begin{ex}
In Example~\ref{ex:pre-cosheaves} above, $\mathrm{H}_k \Xfunc$ is not in general a cosheaf. Consider open intervals $U = I \cup J$ and $N = I \cap J$. The Mayer--Vietoris theorem gives the following exact sequence of abelian groups:
\[
\dots
\To
\mathrm{H}_k(f^{-1}(N))
\To
\mathrm{H}_k(f^{-1}(I))
\oplus
\mathrm{H}_k(f^{-1}(J))
\To
\mathrm{H}_k(f^{-1}(U))
\stackrel{\partial}{\To}
\mathrm{H}_{k-1}(f^{-1}(N))
\To
\dots
\]
If the map labeled~$\partial$ were zero, then the exactness of this sequence implies that the gluing condition holds for the cover $U = I \cup J$. When $\partial$ is not zero, the gluing condition fails for this cover.
\end{ex}

\begin{ex}
Continuing with Example~\ref{ex:pre-cosheaves}, we will later see that $\pi_0 \Xfunc$ is always a cosheaf (Proposition~\ref{prop:Cf-cosheaf}), but $\overline\pi_0 \Xfunc$ sometimes is not (Example~\ref{ex:non-cosheaf}).
\end{ex}

\subsection{The category of constructible cosheaves}
\label{sec:cshc}

The third category in the right-hand column is $\Cshc$, the category of {\bf constructible cosheaves} in $\Set$ over the real line. It is a full subcategory of the category of cosheaves, defined by specifying which cosheaves are constructible and using the same morphisms as before.

\begin{defn}
\label{defn:constructible-cosheaf}
A cosheaf or pre-cosheaf $\Ffunc$ is constructible if each $\Ffunc(I)$ is finite and there exists a finite set $S \subset \R$ of `critical values' such that:
\begin{itemize}
\item
if $I \subseteq J$ are open intervals with $I \cap S = J \cap S$ then $\Ffunc[I \subseteq J]$ is an isomorphism;

\item
if $I$ is contained in $(-\infty, \min(S))$ or $(\max(S), +\infty)$ then $\Ffunc(I)$ is empty.
\end{itemize}
As with constructible $\R$-spaces, if the conditions hold for some~$S$ then they hold for any $S' \supseteq S$. 
\end{defn}

For a given critical set $S = \{a_0, a_1, \dots, a_n\}$ the following `zigzag' diagram in~$\Int$ is of particular importance:
\begin{equation}
\begin{diagram}
\node{\makebox[2em]{$(-\infty,a_1)$}}
\node[2]{\makebox[2em]{$(a_0,a_2)$}}
\node[2]{\dots}
\node[2]{\makebox[2em]{$(a_{n-2},a_n)$}}
\node[2]{\makebox[2em]{$(a_{n-1},+\infty)$}}
\\
\node[2]{\makebox[2em]{$(a_0,a_1)$}}
	\arrow{nw}
	\arrow{ne}
\node[2]{\makebox[2em]{$(a_{1},a_{2})$}}
	\arrow{nw}
	\arrow{ne}
\node[2]{\dots}
	\arrow{ne}
\node[2]{\makebox[2em]{$(a_{n-1},a_n)$}}
	\arrow{nw}
	\arrow{ne}
\end{diagram}
\label{eq:entrance}
\end{equation}

\begin{notn}
\label{notn:cosheaf-comb}
For a constructible cosheaf~$\Ffunc$ with critical set~$S$, write
\begin{alignat*}{2}
V^\Ffunc_i &= \Ffunc((a_{i-1}, a_{i+1})),
\quad
&& \text{for $0 \leq i \leq n$}
\\
E^\Ffunc_i &= \Ffunc((a_{i}, a_{i+1}))
\quad
&& \text{for $0 \leq i \leq n-1$}
\end{alignat*}
where $a_{-1} = -\infty$ and $a_{n+1} = +\infty$; and
\begin{alignat*}{2}
\ell^\Ffunc_i & = \Ffunc[(a_{i}, a_{i+1}) \subseteq (a_{i-1}, a_{i+1})]
&&: E^\Ffunc_i \to V^\Ffunc_i
\\
r^\Ffunc_i & = \Ffunc[(a_{i}, a_{i+1}) \subseteq (a_{i}, a_{i+2})]
&&: E^\Ffunc_i \to V^\Ffunc_{i+1}
\end{alignat*}
for $0 \leq i \leq n-1$.
\end{notn}

As we will see in Proposition~\ref{prop:cshc-data}, this collection of combinatorial data suffices to define a constructible cosheaf that is unique up to canonical isomorphism.
To get to this result---and more importantly to move towards the equivalence of categories, Theorem~\ref{Thm:Equivalence}---we establish a combinatorial description of morphisms between constructible cosheaves.

\begin{prop}[combinatorial description of cosheaf morphisms]
\label{prop:cshc-morphism}
Let $\Ffunc, \Gfunc$ be constructible cosheaves, and let $S = \{a_0, a_1, \dots, a_n\}$ be a common critical set for them.
A morphism $\psi: \Ffunc \tO \Gfunc$ gives rise to the following data:
\begin{itemize}
\item
Maps $\psi_i^V: V_i^\Ffunc \to V_i^\Gfunc$ for $0 \leq i \leq n$.

\item
Maps $\psi_i^E: E_i^\Ffunc \to E_i^\Gfunc$ for $0 \leq i \leq n-1$.

\item
Consistency conditions: $\psi^V_i \ell_i^\Ffunc = \ell_i^\Gfunc \psi^E_i$ and $\psi^V_{i+1} r_i^\Ffunc = r_i^\Gfunc \psi^E_i$ for $0 \leq i \leq n-1$.
\end{itemize}
Conversely, any collection of maps $\psi^V_i, \psi^E_i$ satisfying the consistency conditions arises in this way from a unique morphism $\psi: \Ffunc \tO \Gfunc$.
\end{prop}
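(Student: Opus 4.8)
The two directions call for different tools, so I would treat them separately. The forward direction is immediate: given a morphism (that is, a natural transformation) $\psi : \Ffunc \tO \Gfunc$, I would set $\psi_i^V = \psi_{(a_{i-1},a_{i+1})}$ and $\psi_i^E = \psi_{(a_i,a_{i+1})}$, the components of $\psi$ at the relevant intervals. The consistency conditions $\psi_i^V \ell_i^\Ffunc = \ell_i^\Gfunc \psi_i^E$ and $\psi_{i+1}^V r_i^\Ffunc = r_i^\Gfunc \psi_i^E$ are then nothing but the naturality squares of $\psi$ for the two inclusions $(a_i,a_{i+1}) \subseteq (a_{i-1},a_{i+1})$ and $(a_i,a_{i+1}) \subseteq (a_i,a_{i+2})$ that define $\ell_i$ and $r_i$.

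For the converse the key observation is that the value of a constructible cosheaf on an arbitrary interval is already determined, as a colimit, by the zigzag data $V_i^\Ffunc, E_i^\Ffunc, \ell_i^\Ffunc, r_i^\Ffunc$. So the first step is to prove: for every open interval $I$ meeting $S$ in exactly $\{a_j, \dots, a_k\}$, the inclusions exhibit $\Ffunc(I)$ as the colimit of the finite sub-zigzag $V_j^\Ffunc \leftarrow E_j^\Ffunc \rightarrow V_{j+1}^\Ffunc \leftarrow \cdots \rightarrow V_k^\Ffunc$. When $k=j$ (respectively $I \cap S = \emptyset$) this is just the statement that a constructibility inclusion $\Ffunc[I \subseteq (a_{j-1},a_{j+1})]$ (respectively into some $(a_m,a_{m+1})$) is an isomorphism; when $k > j$ I would choose the open cover by the intervals $U_m = (a_{m-1},a_{m+1}) \cap I$ (clamped to the endpoints of $I$ at the two ends), whose only nonempty pairwise intersections are the edge intervals $U_m \cap U_{m+1} = (a_m,a_{m+1})$, and then read off from the cosheaf gluing property \eqref{eq:cosheaf} that $\Ffunc(I)$ is precisely this zigzag colimit, with the intersection-to-piece maps being exactly $\ell_m^\Ffunc$ (into $U_m$) and $r_m^\Ffunc$ (into $U_{m+1}$) under the constructibility isomorphisms.

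Granting this, both uniqueness and existence follow from the universal property of the colimit. For uniqueness: the colimit cocone $\{\Ffunc[U_m \subseteq I]\}$ is jointly surjective, so naturality forces $\psi_I$ to be determined on each image by $\psi_{U_m}$, and $\psi_{U_m}$ is in turn pinned down by $\psi_m^V$ through the single-critical-value isomorphisms; hence $\psi_I$ is uniquely determined by the combinatorial data. For existence: the consistency conditions say exactly that the maps $\psi_m^V, \psi_m^E$ assemble into a morphism of sub-zigzag diagrams, so composing the $\Gfunc$-cocone with them yields a cocone under $\Ffunc$'s sub-zigzag, and the universal property produces a unique map $\psi_I : \Ffunc(I) \to \Gfunc(I)$, which I take as the component of $\psi$.

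The main obstacle is the final verification that this family $(\psi_I)$ is genuinely natural, i.e.\ that the square for every inclusion $I \subseteq J$ commutes. The difficulty is that each $\psi_I$ was built using an auxiliary cover depending on $I$, so I must show the construction is compatible with restriction: the sub-zigzag for $I$ includes into that for $J$, and under this inclusion both $\Ffunc[I \subseteq J]$ and $\Gfunc[I \subseteq J]$ are themselves the colimit-induced maps, whence the square commutes by the uniqueness clause of the universal property. Making this independence-of-choices argument precise---and checking it on the boundary cases where an interval meets $S$ in zero or one points---is the only place real care is needed; everything else is formal manipulation of colimits and naturality squares.
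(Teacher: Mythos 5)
Your proposal is correct and follows essentially the same route as the paper: the forward direction via the naturality squares for the two generating inclusions, and the converse by using the cosheaf gluing axiom to present values on longer intervals as zigzag colimits of the $V_i, E_i$ data and invoking the universal property for existence, uniqueness, and naturality. The only organizational difference is that the paper first constructs $\psi_I$ for intervals $I = (a_j, a_k)$ with critical endpoints---covered by the length-two intervals $(a_i, a_{i+2})$, whose values are \emph{literally} the $V_i^\Ffunc$, so no identifications are carried along---and then extends to an arbitrary interval by the one-line formula $\psi_I = \Gfunc[I \subseteq \hat{I}]^{-1} \circ \psi_{\hat{I}} \circ \Ffunc[I \subseteq \hat{I}]$, where $\hat{I}$ is the maximal interval meeting $S$ in the same subset; this disposes of your clamped covers $U_m = (a_{m-1},a_{m+1}) \cap I$, the constructibility identifications they require at every step, and the boundary-case bookkeeping you flag at the end, since naturality for general $I \subseteq J$ then reduces immediately to naturality for $\hat{I} \subseteq \hat{J}$.
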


\begin{rmk}
As usual, the requirement of a common critical set is no restriction, because we can take the union of critical sets for $\Ffunc$ and~$\Gfunc$ to obtain a critical set for both.
\end{rmk}

\begin{proof}
The first assertion is clear: we set $\psi^V_i = \psi_{(a_{i-1},a_{i+1})}$ and $\psi^E_i = \psi_{(a_i,a_{i+1})}$, and the consistency conditions follow from the naturality of~$\psi$ with respect to the inclusions $(a_i, a_{i+1}) \subseteq (a_{i-1},a_{i+1})$ and $(a_i, a_{i+1}) \subseteq (a_{i},a_{i+2})$.

For the converse, we must show that the collection of maps $ \psi_{(a_{i-1},a_{i+1})} = \psi^V_i$ and $ \psi_{(a_{i},a_{i+1})} = \psi^E_i$ extends uniquely to a natural transformation $\psi = (\psi_I \mid I \in \Int)$.

We begin by showing that $\psi_I$ is uniquely determined for intervals of the form $I = (a_j,a_k)$. The starting data provides these maps for intervals of `length' (i.e. $k-j$ equal to) 1 or~2. Longer intervals can be expressed as a union of intervals of length two. We then have:
\begin{align*}
\Ffunc((a_j,a_k))
&=
\colim
\left[
\coprod_{i=j+1}^{k-2} {\Ffunc((a_i,a_{i+1}))}
\rightrightarrows
\coprod_{i=j}^{k-2} {\Ffunc((a_i,a_{i+2}))}
\right]
\\
\Gfunc((a_j,a_k))
&=
\colim
\left[
\coprod_{i=j+1}^{k-2} {\Gfunc((a_i,a_{i+1}))}
\rightrightarrows
\coprod_{i=j}^{k-2} {\Gfunc((a_i,a_{i+2}))}
\right]
\end{align*}
By naturality, the desired map $\psi_{(a_j,a_k)}$ must be compatible with the maps $\psi^V_i, \psi^E_i$ that map the terms in the colimit for~$\Ffunc$ to the terms in the colimit for~$\Gfunc$. 
In particular, $\psi_{(a_j,a_k)}$ must factor the maps
\[
\zeta_i = 
\Gfunc[(a_i,a_{i+2}) \subseteq (a_j,a_k)] \circ \psi_i^V
:
\Ffunc((a_i,a_{i+2})) \to \Gfunc((a_j,a_k)).
\]
The consistency conditions imply that the family $(\zeta_i)$ satisfies \eqref{eq:cosheaf-3}. From the universal property for colimits, there is a unique $\psi_{(a_j,a_k)} = \zeta$ compatible with the $\psi_i^V, \psi_i^E$.

For an inclusion $I \subseteq J$ between two such intervals, the naturality condition is that the square
\[
\begin{diagram}
\dgARROWLENGTH=1.5em
\node{\Ffunc(I)}
	\arrow[2]{e,t}{\Ffunc[I \subseteq J]}
	\arrow{s,l}{\psi_I}
\node[2]{\Ffunc(J)}
	\arrow{s,r}{\psi_J}
\\
\node{\Gfunc(I)}
	\arrow[2]{e,t}{\Gfunc[I \subseteq J]}
\node[2]{\Gfunc(J)}
\end{diagram}
\]
commutes. If $I$ is an interval of length 1 or~2, this is precisely the compatibility demanded in the construction of $\psi_J$. If $I$ is a longer interval, then the diagram commutes when $\Ffunc(I)$ is replaced by any of the the individual terms in its colimit diagram. The universal property for colimits then implies that the two sides of the square $\Ffunc(I) \to \Gfunc(J)$ are equal, being the unique map compatible with the map on each individual term.

To finish, we consider arbitrary open intervals. Any such~$I$ is contained in a unique maximal interval $\hat{I} = (a_j,a_k)$ that meets $S$ in the same subset. Since $\Gfunc[I \subseteq \hat{I}]$ is an isomorphism, we can and must define
\[
\psi_I = 
\Gfunc[I \subseteq \hat{I}]^{-1}
\circ
\psi_{\hat{I}}
\circ
\Ffunc[I \subseteq \hat{I}]
\]
to satisfy naturality for $I \subseteq \hat{I}$.
That done, naturality for $I \subseteq J$ follows from naturality for $\hat{I} \subseteq \hat{J}$.
\end{proof}

\begin{cor}
\label{cor:cshc-isomorphism}
Let $\psi: \Ffunc \tO \Gfunc$ be a morphism between constructible cosheaves with common critical set $S = \{a_0, a_1, \dots, a_n\}$. If $\psi_I$ is an isomorphism for each `short interval' $I = (a_{i-1},a_{i+1})$ and $I = (a_{i}, a_{i+1})$ then $\psi$ is a natural isomorphism; that is, an isomorphism of cosheaves.
\end{cor}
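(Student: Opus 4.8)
The plan is to construct an explicit two-sided inverse for $\psi$ using the combinatorial description of cosheaf morphisms from Proposition~\ref{prop:cshc-morphism}, and then to recognize each composite as an identity by invoking the uniqueness clause of that same proposition. First I would translate the hypothesis into combinatorial language. In the notation of Notation~\ref{notn:cosheaf-comb}, the short interval $I = (a_{i-1},a_{i+1})$ realizes $\psi_I$ as the vertex map $\psi_i^V : V_i^\Ffunc \to V_i^\Gfunc$, and the short interval $I = (a_i,a_{i+1})$ realizes $\psi_I$ as the edge map $\psi_i^E : E_i^\Ffunc \to E_i^\Gfunc$. So the hypothesis says precisely that every $\psi_i^V$ and every $\psi_i^E$ is a bijection.

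Next I would assemble a candidate inverse from the inverse short-interval maps. Set $\theta_i^V = (\psi_i^V)^{-1}$ and $\theta_i^E = (\psi_i^E)^{-1}$, and check that these satisfy the consistency conditions of Proposition~\ref{prop:cshc-morphism}, now oriented as data for a morphism $\Gfunc \tO \Ffunc$. This is a short diagram chase: pre- and post-composing the identity $\psi_i^V \ell_i^\Ffunc = \ell_i^\Gfunc \psi_i^E$ with the appropriate inverses yields $\theta_i^V \ell_i^\Gfunc = \ell_i^\Ffunc \theta_i^E$, and likewise $\psi_{i+1}^V r_i^\Ffunc = r_i^\Gfunc \psi_i^E$ gives $\theta_{i+1}^V r_i^\Gfunc = r_i^\Ffunc \theta_i^E$. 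By Proposition~\ref{prop:cshc-morphism}, this consistent collection extends to a \emph{unique} morphism $\theta : \Gfunc \tO \Ffunc$.

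Finally I would identify the two composites with identities. Since the value of a composite of natural transformations at any object is the composite of the values, the morphism $\theta\psi : \Ffunc \tO \Ffunc$ has short-interval data $\theta_i^V \psi_i^V = \mathbb{1}_{V_i^\Ffunc}$ and $\theta_i^E \psi_i^E = \mathbb{1}_{E_i^\Ffunc}$, which is exactly the short-interval data of $\mathbb{1}_\Ffunc$. By the uniqueness clause of Proposition~\ref{prop:cshc-morphism}, a morphism of constructible cosheaves is determined by its short-interval data, so $\theta\psi = \mathbb{1}_\Ffunc$; the symmetric computation gives $\psi\theta = \mathbb{1}_\Gfunc$. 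Hence $\psi$ is invertible in $\Cshc$, i.e.\ a natural isomorphism, with $\psi^{-1} = \theta$. I do not expect any genuine obstacle here, since Proposition~\ref{prop:cshc-morphism} carries the load in both directions; the only point requiring care is orienting the consistency conditions correctly when passing to the inverse maps, and (implicitly) confirming that composition of cosheaf morphisms is computed short-interval-wise, which is immediate from the definition of composition of natural transformations.
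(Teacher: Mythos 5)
Your proposal is correct and follows essentially the same route as the paper: both build the inverse by inverting $\psi$ on short intervals, extend it to a morphism $\Gfunc \tO \Ffunc$ via Proposition~\ref{prop:cshc-morphism}, and conclude that the two composites are identities from the uniqueness clause of that proposition. The only difference is presentational---you spell out the consistency-condition diagram chase that the paper dispatches with the remark ``consistency follows from the naturality of~$\psi$.''
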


\begin{proof}
We construct the inverse $\omega: \Gfunc \tO \Ffunc$ by setting
$\omega_I = \psi_I^{-1}$ for short intervals $I$ and applying the proposition (consistency follows from the naturality of~$\psi$). Since $(\omega \psi)_I = \mathbb{1}_{\Ffunc(I)}$ and $(\psi\omega)_I = \mathbb{1}_{\Gfunc(I)}$ for short intervals, the proposition implies $\omega\psi = \mathbb{1}_{\Ffunc}$ and $\psi\omega = \mathbb{1}_{\Gfunc}$.
Thus $\psi$ is a natural isomorphism.
\end{proof}

\subsection{The Reeb cosheaf functor $\CC$}
\label{sec:functor-C}

The {\bf Reeb cosheaf functor}, $\CC$, converts an $\R$-space to its {\bf Reeb cosheaf}. Let $f = (\X,f)$ be an $\R$-space. Then $\CC(f) = \Ffunc$ is the pre-cosheaf defined by
\[
\Ffunc(I) = \pi_0 f^{-1}(I),
\quad
\Ffunc[I \subseteq J] = \pi_0[ f^{-1}(I) \subseteq f^{-1}(J) ].
\]
(This is the second item in Example~\ref{ex:pre-cosheaves}.)

Any morphism $\alpha: (\X,f) \to (\Y,g)$ yields a natural transformation of pre-cosheaves $\CC[\alpha]: \Ffunc \tO \Gfunc$ defined as follows. For each interval~$I$, the map~$\alpha$ carries $f^{-1}(I)$ into $g^{-1}(I)$. We set
\begin{align*}
\CC[\alpha]_I = \pi_0[ f^{-1}(I) \stackrel{\alpha}{\to} g^{-1}(I)]
\quad\text{which is a map}\quad
\Ffunc(I) \to \Gfunc(I).
\end{align*}
We prove the naturality of $\CC[\alpha]$ by applying~$\pi_0$ to the commutative square on the left:
\[
\begin{diagram}
\node{f^{-1}(I)}
	\arrow{e,t}{\subseteq}
	\arrow[2]{s,l}{\alpha}
\node{f^{-1}(J)}
	\arrow[2]{s,r}{\alpha}
\node[2]{\Ffunc(I)}
	\arrow{e,t}{\Ffunc[I \subseteq J]}
	\arrow[2]{s,l}{\CC[\alpha]_I}
\node{\Ffunc(J)}
	\arrow[2]{s,r}{\CC[\alpha]_J}
\\
\node[3]{\stackrel{\pi_0}{\longrightarrow}}
\\
\node{g^{-1}(I)}
	\arrow{e,t}{\subseteq}
\node{g^{-1}(J)}
\node[2]{\Gfunc(I)}
	\arrow{e,t}{\Gfunc[I \subseteq J]}
\node{\Gfunc(J)}
\end{diagram}
\]
Since $\CC[-]$ respects composition and identities, we have a functor from $\R$-spaces to pre-cosheaves.

\begin{prop}
\label{prop:Cf-cosheaf}
The pre-cosheaf $\Ffunc = \CC(f)$ is a cosheaf.
\end{prop}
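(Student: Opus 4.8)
The plan is to reduce the cosheaf condition for $\Ffunc$ to a single general fact about the behaviour of path-components under open covers, and then to establish that fact by a connectedness argument that never uses the interval structure. Fix an open interval $U$ together with a family of open intervals $(I_p \mid p \in P)$ with $\bigcup_p I_p = U$. Writing $Y = f^{-1}(U)$ and $U_p = f^{-1}(I_p)$, the continuity of $f$ makes each $U_p$ an open subset of $Y$, with $\bigcup_p U_p = Y$ and $U_p \cap U_q = f^{-1}(I_p \cap I_q)$. Hence the diagram in \eqref{eq:cosheaf} is precisely
\[
\coprod_{p,q} \pi_0(U_p \cap U_q)
\rightrightarrows
\coprod_p \pi_0(U_p),
\]
and the claim $\Ffunc(U) = \colim(\cdots)$ becomes the purely topological assertion that $\pi_0(Y)$ is the colimit of the path-components of the pieces $U_p$, glued along the path-components of their pairwise overlaps. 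The case $U = \emptyset$ is immediate since $\pi_0(\emptyset) = \emptyset$, so I would treat the nondegenerate covers.

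Following interpretation (1) of the gluing property, I would describe the colimit $C$ in $\Set$ concretely as the quotient of $\coprod_p \pi_0(U_p)$ by the equivalence relation generated by identifying, for each component $c$ of $U_p \cap U_q$, the classes $\pi_0[U_p \cap U_q \subseteq U_p](c)$ and $\pi_0[U_p \cap U_q \subseteq U_q](c)$. The inclusions $U_p \subseteq Y$ induce maps $\pi_0(U_p) \to \pi_0(Y)$ that agree on overlaps, so by the universal property they factor through a canonical comparison map $\Phi \colon C \to \pi_0(Y)$. The whole proposition amounts to showing that $\Phi$ is a bijection, and its naturality in $U$ will then be routine.

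Surjectivity is immediate: any point of $Y$ lies in some $U_p$, so every path-component of $Y$ is hit by a component of some $U_p$. The heart of the argument is injectivity, which is where I expect the only real work to lie. Suppose a component $c$ of $U_p$ and a component $c'$ of $U_q$ land in the same path-component of $Y$; then there is a path $\gamma \colon [0,1] \to Y$ from a point of $c$ to a point of $c'$. Since $[0,1]$ is compact and the $U_p$ cover $Y$, the Lebesgue number lemma yields a subdivision $0 = t_0 < t_1 < \cdots < t_m = 1$ with each $\gamma([t_k, t_{k+1}])$ contained in a single cover element $U_{p_k}$. Consecutive pieces meet at $\gamma(t_{k+1}) \in U_{p_k} \cap U_{p_{k+1}}$, and the component of that overlap through $\gamma(t_{k+1})$ maps to the component of $U_{p_k}$ and the component of $U_{p_{k+1}}$ carrying the adjacent path-pieces. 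The generating identifications therefore chain these components together in $C$; including the end identifications coming from $\gamma(0) \in c \cap U_{p_0}$ and $\gamma(1) \in c' \cap U_{p_{m-1}}$ produces a chain in $C$ from the class of $c$ to that of $c'$, so they are equal.

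The main obstacle is exactly this subdivision step: it is the only place where the \emph{path}-connectedness built into $\pi_0$ is used, and it is precisely what fails for the connected-components functor $\overline\pi_0$ (cf. Example~\ref{ex:non-cosheaf}), since a connecting path may be chopped into finitely many pieces subordinate to the cover whereas a merely connected linking set need not be. Once $\Phi$ is shown to be a bijection for an arbitrary cover, the cosheaf condition for $\Ffunc$ follows.
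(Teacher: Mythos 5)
Your proof is correct, but it runs along a genuinely different track from the paper's. The paper verifies the universal property (interpretation (3)) directly: given a compatible family $\zeta_p : \Ffunc(I_p) \to Z$, it writes down the forced definition $\zeta([y]_U) = \zeta_p([y]_{I_p})$, checks independence of $p$ via \eqref{eq:cosheaf-3}, and handles independence of the representative $y$ by showing that along a connecting path $(y_t)$ the candidate value is \emph{locally constant} in $t$, hence constant since $[0,1]$ is connected. You instead realize the colimit concretely as a quotient set (interpretation (1)), build the comparison map $\Phi$ into $\pi_0(f^{-1}(U))$, and prove bijectivity, with injectivity resting on the Lebesgue number lemma: compactness of $[0,1]$ yields a finite subdivision of the path subordinate to the cover, and the generating identifications chain the endpoint components together. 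The two arguments encode the same topological content --- a path can be cut to fit the cover, which is exactly what fails for $\overline\pi_0$, as you rightly note against Example~\ref{ex:non-cosheaf} --- but they package it differently: your finite-zigzag version is more combinatorial and makes the identifications explicit (it is essentially interpretation (2), the graph-of-components picture of Figure~\ref{Fig:CosheafConditionExample}, in action), at the cost of invoking compactness, whereas the paper's local-constancy argument needs only connectedness of $[0,1]$ and avoids constructing the colimit altogether, which makes it slightly leaner. One bookkeeping point you handled correctly but should keep visible in a final write-up: the end identifications joining $c$ to $[\gamma(0)]_{U_{p_0}}$ (and similarly at $t = 1$) are legitimate generators of the equivalence relation because $f(\gamma(0)) \in I_p \cap I_{p_0}$, so those overlap terms genuinely appear in the diagram \eqref{eq:cosheaf}, which by convention omits empty intersections.
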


\begin{proof} 
Let $U$ be an open interval and let $(I_p \mid p \in A)$ be a cover of~$U$ by open intervals. We show that $\Ffunc(U)$ satisfies the universal property for the colimit of~\eqref{eq:cosheaf}. Accordingly, let $Z$ be a set and let $\zeta_p: \Ffunc(I_p) \to Z$ be functions satisfying the consistency condition~\eqref{eq:cosheaf-3}. We show that there is a unique $\zeta: \Ffunc(U) \to Z$ satisfying Eqn.~\eqref{eq:cosheaf-4}. 

Let $[y]_U$ denote the path-component of a point $y$ in $f^{-1}(U)$. Since $f(y)$ must belong to some~$I_p$, we are forced to define
\[
\zeta([y]_U)
= \zeta \circ \Ffunc[I_p \subseteq U] ([y]_{I_p}) 
= \zeta_p([y]_{I_p})
\]
and moreover the right-hand side does not depend on the choice of~$p$, because
\[
\zeta_p([y]_{I_p})
=
\zeta_p \circ \Ffunc[I_p \cap I_q \subseteq I_p]([y]_{I_p \cap I_q})
=
\zeta_q \circ \Ffunc[I_p \cap I_q \subseteq I_q]([y]_{I_p \cap I_q})
=
\zeta_q([y]_{I_q})
\]
by condition~\eqref{eq:cosheaf-3}, if $f(y) \in I_p \cap I_q$.

It remains to show that this definition is independent of the point~$y$ used to identify the component. Suppose $[y_0]_U = [y_1]_U$. Then there is  a continuous path $(y_t)$ in $f^{-1}(U)$ from $y_0$ to~$y_1$. Now every point in $[0,1]$ has a neighbourhood over which $f(y_t)$ is contained in some fixed~$I_p$. Over that neighbourhood, $[y_t]_{I_p}$ is constant and therefore $\zeta_p([y_t]_{I_p})$ is constant.
Since $[0,1]$ is connected, this local constancy implies global constancy and so $\zeta([y_0]_U) = \zeta([y_1]_U)$.
\end{proof}

\begin{prop}
\label{prop:Cf-constructible}
If $f$ is a constructible $\R$-space then $\Ffunc = \CC(f)$ is a constructible cosheaf with the same critical set.
\end{prop}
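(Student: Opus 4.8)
The plan is to verify the two defining conditions of Definition~\ref{defn:constructible-cosheaf} for $\Ffunc = \CC(f)$, where $\Ffunc(I) = \pi_0 f^{-1}(I)$, taking the critical set of the cosheaf to be the critical set $S = \{a_0, \dots, a_n\}$ of the $\R$-space. We already know $\Ffunc$ is a cosheaf (Proposition~\ref{prop:Cf-cosheaf}), so only finiteness and the local-constancy (isomorphism) condition remain. The emptiness clause is immediate: the construction places $\X$ entirely over $[a_0, a_n]$, so $f^{-1}(I) = \emptyset$ whenever $I \subseteq (-\infty, a_0)$ or $I \subseteq (a_n, +\infty)$, and hence $\Ffunc(I) = \pi_0(\emptyset) = \emptyset$.

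The geometric heart of the argument is a mild generalization of the cylinder principle (Lemma~\ref{lem:cylinder}). Fix an open interval $I$ and let $I \cap S = \{a_j, \dots, a_k\}$ be the (possibly empty) run of critical values it contains. I claim that $f^{-1}(I)$ strong-deformation-retracts onto its \emph{core}: the subspace $f^{-1}([a_j, a_k])$ when $I \cap S \neq \emptyset$, and a single fiber $\E_i \times \{\text{pt}\}$ when $I \cap S = \emptyset$ and $I \subseteq (a_i, a_{i+1})$. Over the open subintervals $(\inf I, a_j)$ and $(a_k, \sup I)$ the space $f^{-1}(I)$ consists of dangling product cylinders $\E_{j-1} \times (\inf I, a_j)$ and $\E_k \times (a_k, \sup I)$, attached to the critical fibers $\V_j$ and $\V_k$ along $\mathbbm{r}_{j-1}$ and $\mathbbm{l}_k$; the retraction slides these cylinders monotonically into $\V_j$ and $\V_k$ while fixing the core. (At the extremes $j = 0$ or $k = n$ there is no dangling cylinder, since nothing lies below $a_0$ or above $a_n$.) This is precisely the deformation used to prove Lemma~\ref{lem:cylinder}, now carried out over an arbitrary interval rather than a short one; continuity at the gluing interfaces $t = a_j$ and $t = a_k$ is the only delicate point, and it holds because the retraction is the identity there. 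In particular the inclusion of the core into $f^{-1}(I)$ induces a bijection on $\pi_0$.

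With the retraction in hand the remaining conditions follow formally. For the isomorphism condition, suppose $I \subseteq J$ with $I \cap S = J \cap S = \{a_j, \dots, a_k\}$. Then both $f^{-1}(I)$ and $f^{-1}(J)$ contain the common core $f^{-1}([a_j, a_k])$, whose inclusion into each induces a $\pi_0$-bijection by the previous paragraph; since the core-inclusion into $f^{-1}(J)$ factors as the core-inclusion into $f^{-1}(I)$ followed by $f^{-1}(I) \subseteq f^{-1}(J)$, applying $\pi_0$ and cancelling the two bijections shows that $\Ffunc[I \subseteq J]$ is a bijection (the case $I \cap S = \emptyset$ being immediate from the product structure $f^{-1}(I) = \E_i \times I$). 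Finiteness follows similarly: the cylinder principle gives $V_i^\Ffunc \cong \pi_0(\V_i)$ and $E_i^\Ffunc \cong \pi_0(\E_i)$, which are finite because a compact, locally path-connected space has finitely many path-components (they are open, disjoint, and cover it); and every value $\Ffunc(I)$ is either empty or, by the isomorphism condition just proved, isomorphic to $\Ffunc(\hat I)$ for the maximal interval $\hat I = (a_{j-1}, a_{k+1})$ sharing its critical values, which the cosheaf gluing property realizes as a finite colimit $\colim[\coprod_i E_i^\Ffunc \rightrightarrows \coprod_i V_i^\Ffunc]$ of finite sets.

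I expect the main obstacle to be the careful setup of the deformation retraction in the second paragraph: organizing the dangling cylinders uniformly across all positions of $I$ relative to $S$ (including the boundary cases $j = 0$, $k = n$, and $I \cap S = \emptyset$) and confirming continuity where the sliding cylinders meet the fixed critical fibers. Everything downstream of that retraction is formal cancellation of bijections and a finite-colimit-of-finite-sets count.
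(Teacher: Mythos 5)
Your proof is correct and takes essentially the same route as the paper's: the paper's one-line argument likewise deduces from the cylindrical (product) structure over the components of $\R \setminus S$ that $f^{-1}(I) \subseteq f^{-1}(J)$ is a homotopy equivalence whenever $I$ and $J$ meet $S$ in the same set, and your deformation retraction onto the common core $f^{-1}([a_j,a_k])$ is just an explicit realization of that fact. The finiteness and emptiness verifications you spell out are left implicit in the paper's proof, so your version is, if anything, more complete.
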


\begin{proof}
Let $I \subseteq J$ be intervals which meet~$S$ in the same set of points. The product structure over the components of $\R \setminus S$ implies that $f^{-1}(I) \subseteq f^{-1}(J)$ is a homotopy equivalence and therefore $\Ffunc[I \subseteq J]$ is an isomorphism.
\end{proof}

It follows from Propositions \ref{prop:Cf-cosheaf} and~\ref{prop:Cf-constructible} that the operation~$\CC$ defines functors as follows:
\[
\CC : \Rtop \to \Csh,
\quad
\CC' : \Rtopc \to \Cshc,
\quad
\CC'' : \Reeb \to \Cshc.
\]
We use the symbols $\CC'$ and $\CC''$ when we wish to be precise about the domain and range of our functors. When that is not important, we simply write~$\CC$.

\begin{thm}
\label{Thm:CommutativeTriangle}
The functors $\CC'$ and $\CC''\RR: \Rtopc \to \Cshc$ are naturally isomorphic.
\end{thm}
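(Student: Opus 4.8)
The plan is to build the required isomorphism directly out of the canonical projection $\rho$ and then to verify the isomorphism condition one short interval at a time using the cylinder principle.

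First I would produce the underlying natural transformation by ``whiskering'' $\rho$ with $\CC$. Recall from Proposition~\ref{prop:Reeb-functor} that the canonical projection assembles into a natural transformation $\rho: \onefunc_{\Rtop} \tO \RR$; restricting attention to $\Rtopc$, each constructible $\R$-space $(\X,f)$ comes with a morphism $\rho_f: (\X,f) \to \RR(\X,f)$ whose target is the $\R$-graph $(\X_f, \bar f)$. Applying the functor $\CC$ to $\rho_f$ yields a cosheaf morphism
\[
\eta_{(\X,f)} := \CC[\rho_f] : \CC'(\X,f) \tO \CC''\RR(\X,f),
\]
where I have used that $\CC$ agrees with $\CC'$ on $\Rtopc$ and with $\CC''$ on $\Reeb$. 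Naturality of the collection $\eta = (\eta_{(\X,f)})$ is then automatic: for a morphism $\alpha: (\X,f) \to (\Y,g)$, naturality of $\rho$ gives $\RR[\alpha]\,\rho_f = \rho_g\,\alpha$, and applying the functor $\CC$ turns this identity into $(\CC''\RR)[\alpha]\,\eta_{(\X,f)} = \eta_{(\Y,g)}\,\CC'[\alpha]$, which is exactly the naturality square for $\eta: \CC' \tO \CC''\RR$.

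It remains to show each $\eta_{(\X,f)} = \CC[\rho_f]$ is an isomorphism in $\Cshc$. Fix a common critical set $S = \{a_0, \dots, a_n\}$, noting that the $\R$-graph $(\X_f, \bar f)$ inherits this $S$ with vertex sets $V_i = \pi_0(\V_i)$ and edge sets $E_i = \pi_0(\E_i)$. By Corollary~\ref{cor:cshc-isomorphism} it suffices to check that $\CC[\rho_f]_I = \pi_0\!\left[\rho_f|_{f^{-1}(I)}\right]$ is a bijection for each short interval $I = (a_{i-1}, a_{i+1})$ and $I = (a_i, a_{i+1})$. For this I would invoke the naturality of the cylinder principle, Lemma~\ref{lem:cylinder}, applied to the morphism $\alpha = \rho_f$ with $g = \bar f$: diagram~\eqref{eq:transfer-2} supplies commuting squares whose horizontal maps are the fiber inclusions $\V_i \hookrightarrow f^{-1}(a_{i-1},a_{i+1})$ and $V_i \hookrightarrow \bar f^{-1}(a_{i-1},a_{i+1})$ (and similarly the edge fibers over $(a_i,a_{i+1})$). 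These are homotopy equivalences, so after applying $\pi_0$ the horizontal maps become bijections, and $\CC[\rho_f]_I$ is conjugate to the map $\pi_0$ induces on the fibers.

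The proof is finished by identifying that fiber map. On the critical fiber $f^{-1}(a_i) = \V_i$ the projection $\rho_f$ sends each point to its own path-component, so $\rho_f|_{\V_i}$ is precisely the canonical quotient $\V_i \to \pi_0(\V_i) = V_i$ onto the discrete set of components; likewise $\rho_f|_{\E_i}$ is the quotient $\E_i \to \pi_0(\E_i) = E_i$ on a regular fiber. Applying $\pi_0$ to such a quotient onto a discrete space is a bijection (surjectivity is clear, and distinct components of $\V_i$ land in distinct, hence path-disconnected, points of the discrete target), so chasing this bijection through the two horizontal $\pi_0$-bijections shows $\CC[\rho_f]_I$ is a bijection for every short interval. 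Hence each $\eta_{(\X,f)}$ is an isomorphism and $\eta$ is the desired natural isomorphism $\CC' \simeq \CC''\RR$. The main obstacle is bookkeeping rather than conceptual: one must confirm that $(\X_f, \bar f)$ is presented as a constructible $\R$-space over the \emph{same} critical set $S$ and that $\rho_f$ respects the two presentations (carrying $\V_i$ into $V_i$ and $\E_i$ into $E_i$), so that the naturality statement~\eqref{eq:transfer-2} genuinely applies; once that alignment is in place the rest is immediate.
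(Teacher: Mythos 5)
Your proposal is correct and follows essentially the same route as the paper's own proof: it defines $\eta = \CC\rho$ by whiskering the canonical projection, reduces the isomorphism check to short intervals via Corollary~\ref{cor:cshc-isomorphism}, and resolves those cases with the naturality diagrams~\eqref{eq:transfer-2} of the cylinder principle (Lemma~\ref{lem:cylinder}). The only difference is cosmetic---you spell out more explicitly that $\pi_0$ applied to the fiber quotients $\V_i \to V_i$ and $\E_i \to E_i$ gives bijections, a step the paper states more briefly.
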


In other words when starting with a constructible $\R$-space, we can immediately use~$\CC$ to convert it to a cosheaf or we can take its geometric Reeb graph and then use~$\CC$ to convert it to a cosheaf; either way the result is the same.

\begin{proof}
First, we compare $\CC, \CC\RR$ regarded as functors $\Rtop \to \Csh$. There is a natural transformation $\eta = \CC\rho$ defined by applying the functor~$\CC$ to the canonical projection~$\rho$ (Observation~\ref{obs:Reeb-rho}). Specfically:
\[
\eta_f
= \CC [\rho_f]
= \CC[ (\X,f) \to (\X_f, \bar{f}) ]
\;:\;
\CC(f) \to \CC\RR(f) = \CC(\bar{f})
\]

We must show that $\eta$ is an isomorphism at each constructible $\R$-space, meaning that it restricts to a natural isomorphism $\CC' \tO \CC''\RR$.
Let $f = (\X,f)$ be constructible; then the cosheaves $\CC(f)$, $\CC\RR(f)$ are themselves constructible with the same critical set.
To show that $\eta_f: \CC(f) \tO \CC\RR(f)$ is an isomorphism it is enough, by Corollary~\ref{cor:cshc-isomorphism}, to show that $(\eta_f)_I: [\CC(f)](I) \to [\CC\RR(f)](I)$ is an isomorphism whenever $I$ is a short interval.

The cases $I = (a_{i-1},a_{i+1})$ are handled by the left diagram, and the cases $I = (a_{i}, a_{i+1})$ are handled by the right diagram:
\[
\begin{diagram}
\dgARROWLENGTH=2em
\node{\V_i}
	\arrow{e,t}{\simeq}
	\arrow{s}
\node{f^{-1}(I)}
	\arrow{s}
\\
\node{V_i}
	\arrow{e,t}{\simeq}
\node{\bar{f}^{-1}(I)}
\end{diagram}
\qquad
\qquad
\begin{diagram}
\dgARROWLENGTH=2em
\node{\E_i}
	\arrow{e,t}{\simeq}
	\arrow{s}
\node{f^{-1}(I)}
	\arrow{s}
\\
\node{E_i}
	\arrow{e,t}{\simeq}
\node{\bar{f}^{-1}(I)}
\end{diagram}
\]
This is~\eqref{eq:transfer-2} from the cylinder principle (Lemma~\ref{lem:cylinder}) for the canonical projection $(\X,f) \to (\X_f,\bar{f})$.
The horizontal inclusions are homotopy equivalences, and the left-hand map of each diagram induces a bijection of path-components, so the same is true of the right-hand maps.
In each case, we see that $(\eta_f)_I = \pi_0[f^{-1}(I) \to \bar{f}^{-1}(I)]$ is an isomorphism.
\end{proof}

We round out this section with a result promised earlier and two cautionary examples.

\begin{prop}[combinatorial description of constructible cosheaves]
\label{prop:cshc-data}
Given a critical set $S = \{a_0, a_1, \dots, a_n\}$, finite sets
\[
V_0, \dots, V_n
\quad
\text{and}
\quad
E_0, \dots, E_{n-1}
\]
and maps
\[
\ell_i: E_i \to V_i
\quad
\text{and}
\quad
r_i: E_i \to V_{i+1}
\]
for all~$i$. Then there exists a constructible cosheaf~$\Ffunc$ with critical set~$S$, together with (in Notation~\ref{notn:cosheaf-comb}) bijections $V^\Ffunc_i \cong V_i$ and $E^\Ffunc_i \cong E_i$ such that the maps $\ell^\Ffunc_i$ and~$r^\Ffunc_i$ correspond to the maps $\ell_i$ and~$r_i$.
Any two such cosheaves are canonically isomorphic.
\end{prop}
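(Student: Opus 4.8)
The plan is to prove existence by geometric realization and then read off uniqueness from the combinatorial description of cosheaf morphisms already established. This routes all of the genuinely hard work through results proved earlier in the section, so the argument is short.

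For existence, I would first assemble the given data into a genuine $\R$-graph rather than directly into a cosheaf. Using Notation~\ref{notn:Reeb-comb}, take the $V_i$ as vertex sets, the $E_i$ as edge sets, and $\ell_i, r_i$ as attaching maps; this produces an $\R$-graph $(\X,f) \in \Reeb$ with critical set exactly~$S$. I then set $\Ffunc = \CC''(\X,f)$, which by Propositions~\ref{prop:Cf-cosheaf} and~\ref{prop:Cf-constructible} is a constructible cosheaf with critical set~$S$. It remains to identify its combinatorial invariants, and here the cylinder principle (Lemma~\ref{lem:cylinder}) does everything: since the fibers $\V_i = V_i$ and $\E_i = E_i$ of an $\R$-graph are discrete, the fiber inclusions $V_i \hookrightarrow f^{-1}(a_{i-1},a_{i+1})$ and $E_i \hookrightarrow f^{-1}(a_i,a_{i+1})$ are homotopy equivalences, so applying $\pi_0$ yields canonical bijections $V_i^\Ffunc = \pi_0 f^{-1}(a_{i-1},a_{i+1}) \cong V_i$ and $E_i^\Ffunc \cong E_i$. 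The homotopy-commuting squares~\eqref{eq:transfer-1} then identify the inclusion-induced maps $\ell_i^\Ffunc, r_i^\Ffunc$ with $\pi_0[\mathbbm{l}_i] = \ell_i$ and $\pi_0[\mathbbm{r}_i] = r_i$, which is precisely the required compatibility.

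For uniqueness, suppose $\Ffunc$ and $\Ffunc'$ are two constructible cosheaves with critical set~$S$, each carrying the prescribed bijections to $(V_i, E_i, \ell_i, r_i)$. I would build the comparison isomorphism using Proposition~\ref{prop:cshc-morphism}: define $\psi_i^V : V_i^\Ffunc \to V_i^{\Ffunc'}$ and $\psi_i^E : E_i^\Ffunc \to E_i^{\Ffunc'}$ as the composites through the common data $V_i$ and $E_i$. The consistency conditions hold because both $\psi_i^V \ell_i^\Ffunc$ and $\ell_i^{\Ffunc'} \psi_i^E$ reduce to $\ell_i$ under the identifications (and likewise for $r_i$), so Proposition~\ref{prop:cshc-morphism} produces a unique morphism $\psi: \Ffunc \tO \Ffunc'$ inducing these maps on short intervals. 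Since each $\psi_i^V, \psi_i^E$ is a bijection, Corollary~\ref{cor:cshc-isomorphism} promotes $\psi$ to a natural isomorphism, and the uniqueness clause of Proposition~\ref{prop:cshc-morphism} is exactly what makes this isomorphism canonical.

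The honest assessment is that, routed this way, there is no serious obstacle: the two substantive inputs---that $\CC''$ lands in constructible cosheaves, and that such cosheaves are rigidly controlled by their short-interval data---are already in hand. A genuine obstacle appears only if one insists on constructing $\Ffunc$ directly in $\Cshc$, bypassing the geometric realization. One would then set $\Ffunc(I) = E_i$ on subintervals of the gaps $(a_i,a_{i+1})$, $\Ffunc(I) = \emptyset$ on the two unbounded gaps, and $\Ffunc(I) = \colim\bigl[\coprod_i E_i \rightrightarrows \coprod_i V_i\bigr]$ over the relevant span of the zigzag~\eqref{eq:entrance} for intervals meeting~$S$, extend the structure maps in the evident way, and verify the gluing axiom. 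Checking that axiom for an \emph{arbitrary} cover---not merely the standard cover by short intervals, where it holds by construction---is the one step demanding real care, and it is exactly this verification that we avoid by borrowing it, once and for all, from Proposition~\ref{prop:Cf-cosheaf}.
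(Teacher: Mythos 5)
Your proposal is correct and follows essentially the same route as the paper's own proof: realize the data as an $\R$-graph, take its Reeb cosheaf (constructible by Propositions~\ref{prop:Cf-cosheaf} and~\ref{prop:Cf-constructible}), and obtain uniqueness via the canonical short-interval identifications together with Proposition~\ref{prop:cshc-morphism} and Corollary~\ref{cor:cshc-isomorphism}. Your filled-in details (the cylinder principle for identifying $V^\Ffunc_i$, $E^\Ffunc_i$, $\ell^\Ffunc_i$, $r^\Ffunc_i$) and your closing remark on the direct colimit construction both match what the paper states or sketches in its proof and the remark immediately following it.
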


This fact, together with Proposition~\ref{prop:cshc-morphism}, amount to a particular instance of a more general result of MacPherson that we discuss in Section~\ref{sec:discussion}.

\begin{proof}
We may construct $\Ffunc$ as the Reeb cosheaf of the $\R$-graph with critical set~$S$ constructed from the same combinatorial data. The conditions on~$\Ffunc$ are easily verified; and it is a cosheaf with critical set~$S$, by Propositions \ref{prop:Cf-cosheaf} and~\ref{prop:Cf-constructible}.
Given another such cosheaf~$\Gfunc$, we have canonical identifications $V^\Ffunc_i \cong V_i \cong V^\Gfunc_i$ and $E^\Ffunc_i \cong E_i \cong E^\Gfunc_i$ through which $\ell^\Ffunc_i, r^\Ffunc_i$ correspond to $\ell_i, r_i$ and then to $\ell^\Gfunc_i, r^\Gfunc_i$. These identifications define an isomorphism of cosheaves, by Corollary~\ref{cor:cshc-isomorphism}.
\end{proof}

\begin{rmk}
For readers more familiar with category theory, the existence of the cosheaf~$\Ffunc$ may be proved more directly (i.e.
 without manufacturing a topological space) as follows. The starting data specifies the value of the cosheaf on every open interval that meets at most one critical point. Every interval~$I$ is the union of its subintervals of this type. We define $\Ffunc(I)$ to be the colimit associated to this union, and then invoke general properties of colimits to show that $\Ffunc$ is a cosheaf with the desired values on short intervals.
\end{rmk}

The good behavior of our functors on constructible $\R$-spaces does not extend to the non-constructible case. Here are two counterexamples based on the \emph{topologist's sine curve}
\[
\mathbb{S} = \{(x, \sin (1/x)) \mid 0 < x \leq 1\} \cup \{ (0, y) \mid y \in [-1,1]\}
\]
which is a connected but not path-connected compact subset of the plane.

\begin{ex}
\label{ex:non-cosheaf}
Proposition~\ref{prop:Cf-cosheaf} fails if we replace the path-component functor~$\pi_0$ by the connected-component functor~$\overline{\pi}_0$. 
Consider $(\mathbb{S},y)$ where $y$ is the projection onto the second coordinate.
Now $\mathbb{S}$ itself is connected. On the other hand, if $I$ is an interval that meets but does not contain $[-1,1]$ then $f^{-1}(I)$ consists of countably many connected components, one of which is the segment on the $y$-axis. Now cover the real line by intervals of that type. The associated colimit has at least two elements, since the segment on the $y$-axis is always separate from everything else. This breaks the cosheaf condition, since $\overline\pi_0 f^{-1}(\R) = \overline\pi_0(\mathbb{S})$ is a singleton.
%
%
\end{ex}

\begin{ex}
\label{ex:bad-CR}
The natural isomorphism of Theorem~\ref{Thm:CommutativeTriangle} does not extend to arbitrary $\R$-spaces. 
Consider $(\mathbb{S},x)$ where $x$ is the projection onto the first coordinate. We can identify its Reeb graph as follows: each levelset over $[0,1]$ is path-connected, so the projection to the $x$-axis induces a continuous bijection, and therefore homeomorphism, from the Reeb graph $\mathbb{S}_x$ to the interval $[0,1]$.
Then $\pi_0(\mathbb{S}_x) = \pi_0([0,1])$ is a singleton, whereas $\pi_0(\mathbb{S})$ is not. We deduce that the cosheaves $\CC(\mathbb{S},x)$ and $\CC\RR(\mathbb{S},x) = \CC(\mathbb{S}_x, x)$ are not isomorphic: they return non-isomorphic sets when evaluated at the interval~$\R$ (or indeed any interval containing~0).
\end{ex}

\subsection{Equivalence of categories}
\label{sec:equivalence}

This is the theorem that relates Reeb graphs to Reeb cosheaves.

\begin{thm}
\label{Thm:Equivalence}
The functor $\CC'': \Reeb \to \Cshc$ is an equivalence of categories.
\end{thm}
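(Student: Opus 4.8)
The plan is to prove that $\CC''$ is an equivalence of categories by exhibiting an explicit inverse functor $\DD: \Cshc \to \Reeb$ and constructing the two required natural isomorphisms $\DD\CC'' \tO \onefunc_\Reeb$ and $\CC''\DD \tO \onefunc_\Cshc$. The combinatorial descriptions already established---Proposition~\ref{Prop:MapData} for $\Reeb$-morphisms and Proposition~\ref{prop:cshc-morphism} for $\Cshc$-morphisms---match each other line for line (both are given by vertex maps $\phi^V_i$, edge maps $\phi^E_i$, and the same two consistency conditions), so the heart of the argument is to recognize that $\CC''$ translates one combinatorial presentation into the other in a way that is manifestly reversible.

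First I would define $\DD$ on objects: given a constructible cosheaf $\Ffunc$ with critical set $S$, extract the combinatorial data $(V^\Ffunc_i, E^\Ffunc_i, \ell^\Ffunc_i, r^\Ffunc_i)$ via Notation~\ref{notn:cosheaf-comb}, and let $\DD(\Ffunc)$ be the $\R$-graph built from exactly this data using Notation~\ref{notn:Reeb-comb}. On morphisms, I would use the combinatorial description: a cosheaf morphism $\psi$ yields maps $\psi^V_i, \psi^E_i$ satisfying the consistency conditions (Proposition~\ref{prop:cshc-morphism}), and these same maps define an $\R$-graph morphism $\DD[\psi]$ by Proposition~\ref{Prop:MapData}. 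Functoriality of $\DD$ is immediate because composition of morphisms is computed componentwise in both combinatorial descriptions. The content of Proposition~\ref{prop:cshc-data} guarantees that $\DD$ is well-defined up to canonical isomorphism and that its output genuinely recovers the prescribed data.

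Next I would verify the two natural isomorphisms. For $\CC''\DD \tO \onefunc_\Cshc$: starting from $\Ffunc$, forming $\DD(\Ffunc)$, and then applying $\CC''$ returns the Reeb cosheaf of an $\R$-graph whose combinatorial data agrees with that of $\Ffunc$ on short intervals; by the uniqueness clause of Proposition~\ref{prop:cshc-data} (equivalently, Corollary~\ref{cor:cshc-isomorphism}) this is canonically isomorphic to $\Ffunc$. For $\DD\CC'' \tO \onefunc_\Reeb$: starting from an $\R$-graph $(\X,f)$ and applying $\CC''$ produces a cosheaf whose vertex and edge sets are $\pi_0(V_i) = V_i$ and $\pi_0(E_i) = E_i$ (the $V_i, E_i$ being discrete), and whose attaching maps are $\pi_0[\ell_i] = \ell_i$, etc.; so $\DD\CC''(\X,f)$ is built from the very data that presents $(\X,f)$, giving a canonical isomorphism of $\R$-graphs. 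In both directions naturality reduces, via the matching combinatorial descriptions of morphisms, to the statement that the componentwise maps $\psi^V_i, \psi^E_i$ are literally unchanged by the round trip.

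The step I expect to be the main obstacle---or at least the one demanding genuine care rather than bookkeeping---is establishing that $\DD$ is well-defined as a functor independent of the choice of critical set $S$. The combinatorial data depends on $S$, and enlarging $S$ (by splitting edges, as in Remark~\ref{rmk:splitting} and Notation~\ref{notn:cosheaf-comb}) changes the sets $V^\Ffunc_i, E^\Ffunc_i$; I must check that the resulting $\R$-graphs are canonically isomorphic and that this is compatible with the identification of morphisms across different critical sets. Here I would lean on the fact that refining $S$ corresponds on the cosheaf side to inserting intervals on which the structure maps are isomorphisms (Definition~\ref{defn:constructible-cosheaf}), and on the $\R$-graph side to subdividing edges by new valence-two vertices, an operation that does not change the underlying $\R$-graph up to isomorphism. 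Once this invariance is secured, the rest of the proof is the essentially formal verification described above.
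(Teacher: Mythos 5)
Your proposal is correct in substance but organized differently from the paper. The paper does not construct an inverse functor at all inside the proof: it shows that $\CC''$ is \emph{fully faithful} (the hom-set bijection coming from matching Proposition~\ref{Prop:MapData} against Proposition~\ref{prop:cshc-morphism}, with the identifications $V_i^f \cong V_i^\Ffunc$, $E_i^f \cong E_i^\Ffunc$ supplied by applying $\pi_0$ to the cylinder principle, Lemma~\ref{lem:cylinder}, and diagrams \eqref{eq:transfer-1}--\eqref{eq:transfer-2}) and \emph{essentially surjective} (building an $\R$-graph from the cosheaf's combinatorial data and invoking Corollary~\ref{cor:cshc-isomorphism}), then cites the standard theorem \cite[$\S$IV.4]{MacLane_1998} that such a functor is an equivalence; the functor $\DD$ is only defined afterwards, exactly as in your second paragraph, its existence and coherence guaranteed by the abstract theorem. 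The computational core of both arguments is identical---same lemmas, same identifications---but the decompositions differ, and the trade-off is instructive. The paper's route deliberately sidesteps the issue you correctly single out as the main obstacle: essential surjectivity only requires, for each $\Ffunc$, the existence of \emph{some} $f$ with $\CC(f) \cong \Ffunc$, so no coherent choice of critical set across objects and morphisms is ever needed, and strict functoriality of $\DD$ comes for free. Your route, by contrast, yields an explicit quasi-inverse without appeal to MacLane's theorem, but it must then pay for this with genuine bookkeeping: $\DD$ must be defined on the nose (a fixed choice of critical set per cosheaf), morphisms between cosheaves with different critical sets must be handled via common refinements, and functoriality of composites must be checked across refinements---your sketch (refinement of $S$ corresponds to inserting intervals on which the structure maps are isomorphisms, hence to subdividing edges at valence-two vertices) is the right mechanism, but these verifications are where the real work of your version lives. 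One smaller point: your identification ``$\pi_0(V_i)=V_i$'' for the round trip $\DD\CC''$ is not literal, since $V_i^\Ffunc = \pi_0 f^{-1}(a_{i-1},a_{i+1})$ rather than $\pi_0(\V_i)$; the bridge is the deformation retraction of the cylinder principle, and the naturality of that identification (diagram~\eqref{eq:transfer-2}) is precisely what makes your claim that the componentwise maps are ``unchanged by the round trip'' hold---you use it implicitly, and a complete write-up should invoke it explicitly, as the paper does.
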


\begin{proof}
We will show that $\CC'' = \CC|_{\Reeb}$ is \emph{fully faithful} and \emph{essentially surjective}. This means:
\begin{itemize}
\item[(i)]
For every $f = (\X,f)$ and $g = (\Y,g)$  in $\Reeb$, the map
\[
\begin{diagram}
\node{\Hom(f,g)}
	\arrow{e,t}{\CC[-]}
\node{\Hom(\CC(f), \CC(g))}
\end{diagram}
\]
is a bijection of sets.

\item[(ii)]
For every $\Ffunc \in \Cshc$ there exists $f = (\X,f)$ in $\Reeb$ such that $\CC(f)$ is isomorphic to~$\Ffunc$.
\end{itemize}
It is a theorem~\cite[$\S$IV.4]{MacLane_1998} that such a functor is an equivalence of categories.

\medskip
(i)
We show that $\CC''$ is fully faithful.
Let $f \in \Reeb$ and write $\Ffunc = \CC(f)$. With respect to a critical set for both, we can describe $f$ and $\Ffunc$ by data
\begin{alignat*}{2}
V_i^f, E_i^f, \ell_i^f, r_i^f
&\qquad \text{(Notation~\ref{notn:Reeb-comb})}
\\
V_i^{\Ffunc}, E_i^{\Ffunc}, \ell_i^{\Ffunc}, r_i^{\Ffunc}
&\qquad \text{(Notation~\ref{notn:cosheaf-comb})}
\end{alignat*}
respectively. Applying $\pi_0$ to the homotopy equivalences from the cylinder principle (Lemma~\ref{lem:cylinder}) and to diagram~\eqref{eq:transfer-1}, we get isomorphisms
\begin{alignat*}{3}
V_i^f
&= \pi_0(V_i^f)
&&\,= \pi_0 f^{-1}(a_{i-1},a_{i+1})
&&\,= V^\Ffunc_i
\\
E_i^f
&= \pi_0(E_i^f)
&&\,= \pi_0 f^{-1}(a_{i},a_{i+1})
&&\,= E^\Ffunc_i
\end{alignat*}
which carry $\ell_i^f, r_i^f$ to $\ell_i^\Ffunc, r_i^\Ffunc$.

Now let $f, g \in \Reeb$ and write $\Ffunc = \CC(f)$, $\Gfunc = \CC(g)$. With respect to a common critical set, we can describe $f,g$ and $\Ffunc, \Gfunc$ by corresponding sets of data, with isomorphisms as above. From the characterizations of morphisms given in Propositions \ref{Prop:MapData} and~\ref{prop:cshc-morphism}, there is an obvious bijection between $\Hom(f,g)$ and $\Hom(\CC(f),\CC(g))$
\[
\bigg\{
\begin{array}{rl}
\phi_i^V: & V_i^f \to V_i^g
\\
\phi_i^E: & E_i^f \to E_i^g
\end{array}
\bigg\}
\quad
\longleftrightarrow
\quad
\bigg\{
\begin{array}{rl}
\psi_i^V: & V_i^{\Ffunc} \to V_i^{\Gfunc}
\\
\psi_i^E: & E_i^{\Ffunc} \to E_i^{\Gfunc}
\end{array}
\bigg\}
\]
defined using these isomorphisms.
To show that this bijection is given by $\CC[-]$ we apply $\pi_0$ to the diagrams~\eqref{eq:transfer-2} from the cylinder principle.
This completes the proof that $\CC''$ is fully faithful.

\medskip
(ii)
We show that $\CC''$ is essentially surjective.
Let $\Ffunc \in \Cshc$ with critical set $S = \{a_0, a_1, \dots, a_n\}$.
Let $(\X,f)$ be the $\R$-graph with critical set~$S$ defined by the following data (Notation~\ref{notn:Reeb-comb})
\begin{alignat}{2}
\label{eq:DD-obj-1}
V_i
&= \Ffunc((a_{i-1},a_{i+1}))
\qquad
&\ell_i
&= \Ffunc[(a_i, a_{i+1}) \subseteq (a_{i-1},a_{i+1})]
\\
\label{eq:DD-obj-2}
E_i
&= \Ffunc((a_{i},a_{i+1}))
&r_i
&= \Ffunc[(a_i, a_{i+1}) \subseteq (a_{i},a_{i+2})]
\end{alignat}
and write $\Ffunc' = \CC(f)$.

From the cylinder principle (Lemma~\ref{lem:cylinder}) and diagram~\eqref{eq:transfer-1} we obtain isomorphisms
\begin{alignat}{3}
\label{eq:DD-mor-1}
&\Ffunc'((a_{i-1},a_{i+1}))
&&\,= V_i
&&\,= \Ffunc((a_{i-1},a_{i+1}))
\\
\label{eq:DD-mor-2}
&\Ffunc'((a_{i},a_{i+1}))
&&\,= E_i
&&\,= \Ffunc((a_{i},a_{i+1}))
\end{alignat}
which are natural with respect to the inclusions $(a_i,a_{i+1}) \subseteq (a_{i-1}, a_{i+1})$ and $(a_i,a_{i+1}) \subseteq (a_{i}, a_{i+2})$.
Then Corollary~\ref{cor:cshc-isomorphism} implies that the cosheaves $\Ffunc$ and $\Ffunc' = \CC(f)$ are isomorphic.
\end{proof}

Since $\CC''$ is an equivalence of categories, it has an inverse functor $\DD: \Cshc \to \Reeb$ called the {\bf display locale functor}~\cite{Funk1995}.
There are various ways to define this functor; the result is unique up to a canonical natural isomorphism. We can define $\DD$ combinatorially as follows: given $\Ffunc \in \Cshc$ with critical set~$S$, let $\DD(\Ffunc)$ be the $\R$-graph defined by the data \eqref{eq:DD-obj-1} and~\eqref{eq:DD-obj-2}. A natural isomorphism $\CC\DD \tO \onefunc_{\Cshc}$ is defined by the identifications in \eqref{eq:DD-mor-1} and~\eqref{eq:DD-mor-2}.
These identifications uniquely determine the result $\DD[\alpha]$ of applying $\DD$ to a morphism~$\alpha$.

\begin{cor}
\label{cor:CRD}
The functors $\DD\CC', \RR: \Rtopc \to \Reeb$ are naturally isomorphic.
\end{cor}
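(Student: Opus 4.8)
The plan is to deduce the corollary formally from the relations already established, chaining together natural isomorphisms, rather than by direct computation. The two ingredients are Theorem~\ref{Thm:CommutativeTriangle} (the commuting triangle $\CC' \simeq \CC''\RR$) and the fact that $\DD$ is a two-sided inverse to the equivalence $\CC''$.

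First I would make explicit the second ingredient. Theorem~\ref{Thm:Equivalence} shows that $\CC''$ is an equivalence of categories with inverse $\DD$. An equivalence carries natural isomorphisms on \emph{both} sides: besides the isomorphism $\CC''\DD \tO \onefunc_{\Cshc}$ recorded after Theorem~\ref{Thm:Equivalence}, there is a natural isomorphism $\epsilon: \DD\CC'' \tO \onefunc_{\Reeb}$. Here I invoke the standard fact (already used to conclude Theorem~\ref{Thm:Equivalence} from full faithfulness and essential surjectivity) that such a functor is one half of an adjoint equivalence, so that both natural isomorphisms are available.

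Next I would transport the commuting triangle through $\DD$. Theorem~\ref{Thm:CommutativeTriangle} furnishes a natural isomorphism $\theta: \CC' \tO \CC''\RR$ of functors $\Rtopc \to \Cshc$. Applying the functor $\DD$ componentwise—functors send isomorphisms to isomorphisms and preserve the commuting naturality squares—produces a natural isomorphism $\DD\theta : \DD\CC' \tO \DD\CC''\RR$ of functors $\Rtopc \to \Reeb$. Whiskering $\epsilon$ with $\RR$ on the right then gives a natural isomorphism $\epsilon\RR : \DD\CC''\RR \tO \onefunc_{\Reeb}\RR = \RR$. Composing the two yields
\[
\DD\CC'
\;\xrightarrow{\ \DD\theta\ }\;
\DD\CC''\RR
\;\xrightarrow{\ \epsilon\RR\ }\;
\RR,
\]
the required natural isomorphism $\DD\CC' \simeq \RR$.

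I expect the only genuinely non-routine point to be the existence of $\epsilon: \DD\CC'' \tO \onefunc_{\Reeb}$; everything else is bookkeeping about composing and whiskering natural transformations. A reader wanting to avoid the appeal to adjoint equivalences could instead verify the corollary by hand: unwinding the combinatorial definitions \eqref{eq:DD-obj-1}--\eqref{eq:DD-obj-2} shows that $\DD\CC'(\X,f)$ is the $\R$-graph with $V_i = \pi_0 f^{-1}(a_{i-1},a_{i+1})$ and $E_i = \pi_0 f^{-1}(a_i,a_{i+1})$, while $\RR(\X,f)$ has $V_i = \pi_0(\V_i)$ and $E_i = \pi_0(\E_i)$; the cylinder principle (Lemma~\ref{lem:cylinder}) supplies canonical, natural identifications of these data carrying $\ell_i, r_i$ to $\ell_i, r_i$, so Proposition~\ref{Prop:MapData} assembles them into a natural isomorphism. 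Either route works, but the formal one is shorter and I would present it as the main argument.
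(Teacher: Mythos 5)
Your proof is correct and takes essentially the same route as the paper, whose entire proof is the one-line chain $\RR \simeq \DD\CC''\RR \simeq \DD\CC'$ by Theorems~\ref{Thm:Equivalence} and~\ref{Thm:CommutativeTriangle}---exactly your composite of $\DD\theta$ with the whiskered $\epsilon\RR$, just with the bookkeeping left implicit. Your one point of caution is also fine: the paper's definition of an equivalence (Section~\ref{Sect:BackgroundCategory}) includes both natural isomorphisms $\DD\CC'' \tO \onefunc_{\Reeb}$ and $\CC''\DD \tO \onefunc_{\Cshc}$ as part of the data, so $\epsilon$ is available without further argument.
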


That is, the Reeb graph of a constructible $\R$-space is equal to the display locale of its Reeb cosheaf.

\begin{proof}
We have $\RR \simeq \DD \CC'' \RR \simeq \DD \CC'$ by Theorems \ref{Thm:Equivalence} and~\ref{Thm:CommutativeTriangle}.
\end{proof}

\begin{rmk}
The display locale of a cosheaf can be defined more abstractly and generally~\cite{Funk1995}, yielding a functor $\DD: \Csh \to \Rtop$.
The fiber of $\DD(\Ffunc)$ at $a \in \R$ is defined to be the limit of $\Ffunc(I)$ over intervals $I$ containing~$a$. This is called the \emph{co-stalk} of the cosheaf at~$a$. The disjoint union of these co-stalks is topologized as follows: for each interval~$I$ and $x \in \Ffunc(I)$ there is a basic open set $U_{I,x}$ defined to be the elements of the co-stalks at all~$ a \in I$ which project to~$x$.
\end{rmk}

\section{The interleaving distance}
\label{Sect:CshInterleaving}

We are ready to define the distance between a pair of Reeb graphs. The abstract principle is quite simple (Section~\ref{Sect:funct-interleaving}): we regard the Reeb graphs as constructible cosheaves,  then we compare the cosheaves using an `interleaving distance'~\cite{Chazal2009b,Bubenik2014}. Of course, we would like to interpret this as geometrically as possible.
To do this, we consider two parallel operations: {smoothing} of pre-cosheaves (Section~\ref{Sect:smoothing}) and {thickening} of $\R$-spaces (Section~\ref{Sect:thickening}). The interleaving distance may be expressed in terms of smoothings; the resulting distance on Reeb graphs may be expressed in terms of thickenings.

The smoothing functors $(\SS_\e)$ and the thickening functors $(\TT_\e)$ give compatible transformations on the two sides of our road map: see Figure~\ref{Fig:MasterDiagram2}.
\begin{figure}
\begin{equation*}
\xymatrix{
 & \Pre
 	\ar@(dr,r)_{\SS_\e}\\
 \Rtop
 	\ar[r]^{\CC}
	\ar@(l,ul)^{\TT_\e}
 & \Csh
 	\ar[u]
	\ar@(dr,r)_{\SS_\e}
\\
 \Rtopc
 	\ar[r]^{\CC'}
	\ar[u]
	\ar@<2pt>[d]^{\RR}
	\ar@(l,ul)^{\TT_\e}
& \Cshc
	\ar[u]
	\ar@(dr,r)_{\SS_\e}
\\
 \Reeb
	\ar@<2pt>[r]^{\CC''}
 	\ar@<2pt>[u]
	\ar@(l,ul)^{\UU_\e}
 & \Cshc
 	\ar@{=}[u]
	\ar@<2pt>[l]^\DD 
}
\end{equation*}
\caption[Road map with smoothing and thickening functors]{Road map with functors for smoothing cosheaves $(\SS_\e)$, for thickening $\R$-spaces $(\TT_e)$, and for smoothing $\R$-graphs $(\UU_\e)$.
}
\label{Fig:MasterDiagram2}
\end{figure}
Smoothing preserves the subcategories of cosheaves and constructible cosheaves. In a sense, that explains why it has geometric significance and why the existence of the thickening functor is not a surprise.
Thickening preserves the subcategory of constructible $\R$-spaces.
This allows us to define a semigroup of topological smoothing functors $(\UU_\e)$ on Reeb graphs (Section~\ref{Sect:smooth-Reeb}).

\subsection{Interleaving of pre-cosheaves}
\label{Sect:funct-interleaving}

Interleavings of persistence modules were used, implicitly, in the proof of the persistence stability theorem of Cohen-Steiner et al.~\cite{Cohen-Steiner2007}. Chazal et al.\ defined the concept explicitly for their algebraic stability theorem~\cite{Chazal2009b}. More recently Bubenik and Scott have given a general formulation in categorical language~\cite{Bubenik2014}; we follow their ideas closely.

Interleavings are approximate isomorphisms. Let $\Ffunc, \Gfunc: \Int \to \Set$ be pre-cosheaves. Recall that an isomorphism between them consists of two families of maps
\[
\phi_I: \Ffunc(I) \to \Gfunc(I),
\quad
\psi_I: \Gfunc(I) \to \Ffunc(I)
\]
that are natural with respect to inclusions $I \subseteq J$, such that $\phi_I, \psi_I$ are inverses for all~$I$.

We can give ourselves $\e$~leeway by expanding the intervals slightly. For any interval $I = (a,b)$, let $I^\e = (a-\e,b+\e)$ denote the interval expanded by~$\e \geq 0$.

\begin{defn}
\label{def:interleaving}
An {\bf $\e$-interleaving} between $\Ffunc, \Gfunc$ is given by two families of maps
\begin{equation}
\label{eq:interleaving1}
\phi_I: \Ffunc(I) \to \Gfunc(I^\e),
\quad
\psi_I: \Gfunc(I) \to \Ffunc(I^\e)
\end{equation}
which are natural with respect to inclusions $I \subseteq J$ and such that
\begin{equation}
\label{eq:interleaving2}
\psi_{I^\e} \circ \phi_I = \Ffunc[I \subseteq I^{2\e}],
\quad
\phi_{I^\e} \circ \psi_I = \Gfunc[I \subseteq I^{2\e}]
\end{equation}
for all $I$. When $\e=0$ this is is precisely an isomorphism between $\Ffunc, \Gfunc$.
\end{defn}

\begin{defn}
\label{def:interleaving-dist}
The {\bf interleaving distance} between two co-presheaves $\Ffunc, \Gfunc: \Int \to \Set$ is defined
\[
\inter(\Ffunc,\Gfunc) =
\inf\left( \e \mid \text{there exists an $\e$-interleaving between $\Ffunc,\Gfunc$} \right).
\]
The {\bf Reeb distance} between two $\R$-graphs $f = (\X,f)$ and $g = (\Y,g)$ is defined
\[
\rdist(f,g) = \inter(\CC(f), \CC(g)).
\]
(The infimum of an empty set is understood to be~$\infty$.)
\end{defn}

This definition of~$\rdist$ may not seem immediately helpful: it requires converting the Reeb graphs into cosheaves, and then comparing the cosheaves by a metric that is itself somewhat mysterious.
In the next few sections we will develop a more geometric formulation that allows us---at least in principle---to compute the distance function.
Having said that, certain geometric assertions are immediately accessible. We present some of these results now.

\begin{prop}
\label{prop:inter}
The interleaving distance $\inter$ is an extended pseudometric on $\Pre$: it takes values in $[0,\infty]$, it is symmetric, it satisfies the triangle inequality, and $\inter(\Ffunc,\Ffunc) = 0$.
It follows that $\rdist$ is an extended pseudometric on $\Reeb$.
\end{prop}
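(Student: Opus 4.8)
The plan is to verify each of the four defining properties of an extended pseudometric directly from Definition~\ref{def:interleaving}, and then deduce the corresponding statement for $\rdist$ formally. Since an $\e$-interleaving is a symmetric notion in $\Ffunc$ and $\Gfunc$ by construction---the two families $\phi, \psi$ play interchangeable roles in \eqref{eq:interleaving1} and \eqref{eq:interleaving2}---symmetry is immediate: any $\e$-interleaving between $\Ffunc$ and $\Gfunc$ is, after swapping the roles of the two maps, an $\e$-interleaving between $\Gfunc$ and $\Ffunc$, so the two sets of admissible $\e$ coincide and their infima agree. That $\inter$ takes values in $[0,\infty]$ is built into the definition (with the convention that $\inf \emptyset = \infty$). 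For the reflexivity statement $\inter(\Ffunc,\Ffunc) = 0$, I would exhibit a $0$-interleaving of $\Ffunc$ with itself, namely $\phi_I = \psi_I = \mathbb{1}_{\Ffunc(I)}$; the naturality condition is just the functoriality of $\Ffunc$, and the conditions \eqref{eq:interleaving2} reduce to $\Ffunc[I \subseteq I] = \mathbb{1}_{\Ffunc(I)}$, which holds. Hence $0$ belongs to the set over which we take the infimum, forcing $\inter(\Ffunc,\Ffunc) = 0$.

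The substantive step is the triangle inequality. The strategy is to show that interleavings compose: if $(\phi, \psi)$ is an $\e$-interleaving between $\Ffunc$ and $\Gfunc$, and $(\phi', \psi')$ is a $\delta$-interleaving between $\Gfunc$ and $\Hfunc$, then I would construct an $(\e + \delta)$-interleaving between $\Ffunc$ and $\Hfunc$. The natural candidate for the forward family is $\Phi_I = \phi'_{I^\e} \circ \phi_I : \Ffunc(I) \to \Hfunc(I^{\e+\delta})$, and for the backward family $\Psi_I = \psi_{I^\delta} \circ \psi'_I : \Hfunc(I) \to \Ffunc(I^{\e+\delta})$. Naturality of $\Phi$ and $\Psi$ with respect to inclusions follows from naturality of the four given families together with the fact that interval-expansion $I \mapsto I^\e$ respects inclusions. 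The key identities to check are the two triangle conditions of the form $\Psi_{I^{\e+\delta}} \circ \Phi_I = \Ffunc[I \subseteq I^{2(\e+\delta)}]$. Here I would expand $\Psi_{I^{\e+\delta}} \circ \Phi_I = \psi_{I^{\e+2\delta}} \circ \psi'_{I^{\e+\delta}} \circ \phi'_{I^\e} \circ \phi_I$, apply the interleaving relation $\psi'_{J^\delta} \circ \phi'_J = \Gfunc[J \subseteq J^{2\delta}]$ with $J = I^\e$ to collapse the middle, then use naturality of $\psi$ to slide the resulting $\Gfunc$-map past $\psi$, and finally apply $\psi_{I^\e} \circ \phi_I = \Ffunc[I \subseteq I^{2\e}]$ to finish. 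Once composition of interleavings is established, the triangle inequality for $\inter$ follows by the standard infimum argument: given $\e > \inter(\Ffunc,\Gfunc)$ and $\delta > \inter(\Gfunc,\Hfunc)$, we obtain an $(\e+\delta)$-interleaving, so $\inter(\Ffunc,\Hfunc) \leq \e + \delta$, and taking infima over such $\e,\delta$ yields $\inter(\Ffunc,\Hfunc) \leq \inter(\Ffunc,\Gfunc) + \inter(\Gfunc,\Hfunc)$.

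The main obstacle is the bookkeeping in the triangle-condition computation above: one must track the indices of the expanded intervals carefully and invoke naturality at exactly the right spot to slide a structure map past a component of $\psi$. Everything is formally forced by the definitions, but it is easy to misplace an index; the cleanest way to present it is probably as a pasting of naturality squares rather than a linear chain of equalities. I would also note explicitly that these arguments use only the formal structure of $\Pre$ as a functor category equipped with the shift $(-)^\e$, not anything specific to $\Set$, which is why the result is stated for all of $\Pre$.

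Finally, the statement about $\rdist$ is a pure corollary requiring no new work. Since $\rdist(f,g) = \inter(\CC(f), \CC(g))$ by Definition~\ref{def:interleaving-dist}, the values lie in $[0,\infty]$, symmetry of $\rdist$ is inherited from symmetry of $\inter$, the triangle inequality $\rdist(f,h) \leq \rdist(f,g) + \rdist(g,h)$ is just the triangle inequality for $\inter$ applied to the images $\CC(f), \CC(g), \CC(h)$, and $\rdist(f,f) = \inter(\CC(f),\CC(f)) = 0$. Thus $\rdist$ is an extended pseudometric on $\Reeb$, as claimed.
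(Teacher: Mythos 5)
Your proposal is correct and takes essentially the same approach as the paper: the paper's proof likewise composes interleavings, setting $\phi^3_I = \phi^2_{I^{\e_1}} \circ \phi^1_I$ and $\psi^3_I = \psi^1_{I^{\e_2}} \circ \psi^2_I$ to obtain an $(\e_1+\e_2)$-interleaving, and dismisses the remaining properties as obvious, whereas you additionally spell out the symmetry, reflexivity, and triangle-condition bookkeeping---all of which checks out. Incidentally, the paper's displayed formula has a typo ($\phi^1_{I^{\e_2}}$ where $\psi^1_{I^{\e_2}}$ is meant), which your version gets right.
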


\begin{proof}
For the triangle inequality, note that if $(\phi^1_I)$, $(\psi^1_I)$ define an $\e_1$-interleaving between $\Ffunc, \Gfunc$ and $(\phi^2_I)$, $(\psi^2_I)$ define an $\e_2$-interleaving between $\Gfunc, \Hfunc$ then
\[
\phi^3_{I} = \phi^2_{I^{\e_1}} \circ \phi^1_I,
\quad
\psi^3_{I} = \phi^1_{I^{\e_2}} \circ \psi^2_I
\]
define an $(\e_1+\e_2)$-interleaving between $\Ffunc, \Hfunc$. The remaining statements are obvious.
\end{proof}

This approach to interleaving distances~\cite{Chazal2009b,Bubenik2014} is designed to make the next theorem as easy as possible.

\begin{thm}[Stability of Reeb distance]
\label{thm:stability}
(i) Let $(\X,f)$ and $(\X,g)$ be $\R$-spaces (with the same total space~$\X$). Then:
\[
\inter(\CC(f),\CC(g)) \leq \| f - g \|_\infty
\]
(ii)
Let $(\X,f)$ and $(\X,g)$ be constructible $\R$-spaces. Then the Reeb graphs $(\X_f, \bar{f})$ and $(\X_g, \bar{g})$ satisfy:
\[
\rdist(\bar{f}, \bar{g}) \leq \| f - g \|_\infty
\]
\end{thm}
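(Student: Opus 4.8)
The two parts are really the same statement viewed through different functors, so I would prove part~(i) directly and then derive part~(ii) using the machinery already established. The whole point of the categorical framework, as the text emphasizes, is to make the stability theorem nearly automatic once the right interleaving has been written down. So the strategy is: exhibit an explicit $\e$-interleaving between $\CC(f)$ and $\CC(g)$ where $\e = \|f-g\|_\infty$, and then reduce part~(ii) to part~(i).

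\emph{Part (i).} Set $\e = \|f-g\|_\infty$ and write $\Ffunc = \CC(f)$, $\Gfunc = \CC(g)$. The key geometric observation is that $\|f-g\|_\infty \leq \e$ means $|f(x) - g(x)| \leq \e$ for every $x \in \X$, and therefore for any open interval $I$ we have the containments
\[
f^{-1}(I) \subseteq g^{-1}(I^\e)
\quad\text{and}\quad
g^{-1}(I) \subseteq f^{-1}(I^\e),
\]
since a point $x$ with $f(x) \in I$ has $g(x)$ within $\e$ of $f(x)$, hence $g(x) \in I^\e$, and symmetrically. Applying the path-component functor $\pi_0$ to these inclusions yields maps
\[
\phi_I = \pi_0[f^{-1}(I) \subseteq g^{-1}(I^\e)] : \Ffunc(I) \to \Gfunc(I^\e),
\qquad
\psi_I = \pi_0[g^{-1}(I) \subseteq f^{-1}(I^\e)] : \Gfunc(I) \to \Ffunc(I^\e).
\]
I would then verify the two conditions of Definition~\ref{def:interleaving}. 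Naturality with respect to $I \subseteq J$ is immediate because all the maps in sight are induced by inclusions of preimages, and inclusions commute; one applies $\pi_0$ to an evident commuting square of spaces. For the triangle identities \eqref{eq:interleaving2}, the composite $\psi_{I^\e} \circ \phi_I$ is obtained by applying $\pi_0$ to the chain of inclusions $f^{-1}(I) \subseteq g^{-1}(I^\e) \subseteq f^{-1}(I^{2\e})$, whose total composite is exactly the inclusion $f^{-1}(I) \subseteq f^{-1}(I^{2\e})$; applying $\pi_0$ gives $\Ffunc[I \subseteq I^{2\e}]$ as required, and the other identity is symmetric. Hence $\phi,\psi$ form an $\e$-interleaving and $\inter(\CC(f),\CC(g)) \leq \e$.

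\emph{Part (ii).} For constructible $(\X,f)$ and $(\X,g)$, I would invoke Theorem~\ref{Thm:CommutativeTriangle}, which gives natural isomorphisms $\CC'(f) \simeq \CC''\RR(f) = \CC''(\X_f,\bar f)$ and likewise for $g$. Since the interleaving distance is defined on the cosheaf side and is unchanged under isomorphism of cosheaves (isomorphic cosheaves are $0$-interleaved, so by the triangle inequality of Proposition~\ref{prop:inter} they are interchangeable), we get
\[
\rdist(\bar f, \bar g)
= \inter\bigl(\CC''(\X_f,\bar f),\, \CC''(\X_g,\bar g)\bigr)
= \inter\bigl(\CC'(f),\, \CC'(g)\bigr)
= \inter\bigl(\CC(f),\CC(g)\bigr)
\leq \|f-g\|_\infty,
\]
the last inequality being part~(i). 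This is exactly where sharing the same total space $\X$ matters: it lets part~(i) apply verbatim before passing to the quotient Reeb graphs.

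\emph{Anticipated obstacle.} The genuinely nontrivial point is the containment of preimages $f^{-1}(I) \subseteq g^{-1}(I^\e)$ and the verification that the two composites collapse to the honest inclusion maps $\Ffunc[I \subseteq I^{2\e}]$ and $\Gfunc[I \subseteq I^{2\e}]$ rather than to something merely homotopic. I expect this to work cleanly precisely because everything is literally an inclusion of subspaces of $\X$ and $\pi_0$ is functorial, so composites of inclusions are inclusions on the nose; there is no room for a correction term. The main conceptual care is in part~(ii), ensuring that the isomorphisms supplied by Theorem~\ref{Thm:CommutativeTriangle} are genuinely natural (not just objectwise) so that transporting an interleaving across them preserves the interleaving conditions. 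Given Proposition~\ref{prop:inter}'s triangle inequality, though, even this reduces to composing a $0$-interleaving with the $\e$-interleaving from part~(i), so I anticipate no real difficulty.
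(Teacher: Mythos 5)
Your proposal is correct and takes essentially the same route as the paper's own proof: part (i) via the inclusions $f^{-1}(I) \subseteq g^{-1}(I^\e)$ and $g^{-1}(I) \subseteq f^{-1}(I^\e)$, applying $\pi_0$ and noting that naturality and the conditions of Definition~\ref{def:interleaving} hold because diagrams of inclusions commute on the nose; part (ii) by transporting the interleaving across the natural isomorphism $\CC''\RR \simeq \CC'$ of Theorem~\ref{Thm:CommutativeTriangle}. Your added care about isomorphism-invariance of $\inter$ (via Proposition~\ref{prop:inter}) is a harmless elaboration of what the paper leaves implicit, and there are no gaps.
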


\begin{proof}
(i) Suppose $\| f - g \|_\infty \leq \e$. We show that there is an $\e$-interleaving between $\CC(f), \CC(g)$. The supremum bound implies that we have inclusions
\[
f^{-1}(I) \subseteq g^{-1}(I^\e),
\quad
g^{-1}(I) \subseteq f^{-1}(I^\e)
\]
for all~$I$. Accordingly, we define
\[
\phi_I = \pi_0[f^{-1}(I) \subseteq g^{-1}(I^\e)],
\quad
\psi_I = \pi_0[g^{-1}(I) \subseteq f^{-1}(I^\e)].
\]
Naturality and the other conditions are satisfied because diagrams of inclusions always commute.

\medskip
(ii) This follows from part~(i) because the natural isomorphism $\CC''\RR = \CC'$ (Theorem~\ref{Thm:CommutativeTriangle}) implies that $\CC(\bar{f}), \CC(\bar{g})$ are isomorphic to $\CC(f), \CC(g)$.
\end{proof}

The Reeb distance is sometimes infinite.

\begin{prop}
\label{Prop:IsoReeb}
The Reeb distance between two $\R$-graphs $(\X, f)$, $(\Y, g)$ is finite if and only if $\X,\Y$ have the same number of path components.
\end{prop}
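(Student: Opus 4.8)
The plan is to work entirely on the cosheaf side, using $\Ffunc = \CC(f)$ and $\Gfunc = \CC(g)$ together with the identity $\rdist(f,g) = \inter(\Ffunc,\Gfunc)$, and to exploit the fact that an $\R$-graph is compact, so that $f(\X)$ is a bounded subset of $\R$. The one idea driving both directions is that the Reeb cosheaf of a compact $\R$-space stabilizes on large intervals: if an open interval $I$ contains $f(\X)$ then $f^{-1}(I) = \X$, so $\Ffunc(I) = \pi_0(\X)$, and since $I^\e \supseteq I$ also contains $f(\X)$, the structure map $\Ffunc[I \subseteq I^\e]$ is the identity of $\pi_0(\X)$. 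The analogous statements hold for $\Gfunc$ and $\pi_0(\Y)$. Recall also that $\pi_0(\X)$ and $\pi_0(\Y)$ are finite because $\X,\Y$ are compact polyhedra.

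For the forward implication, suppose $\rdist(f,g)$ is finite and fix an $\e$-interleaving $(\phi_I),(\psi_I)$ with $\e<\infty$. Choose a bounded open interval $I$ containing $f(\X)\cup g(\Y)$; then $I$, $I^\e$ and $I^{2\e}$ all contain both images, so $\Ffunc(I)=\Ffunc(I^\e)=\Ffunc(I^{2\e})=\pi_0(\X)$ with identity structure maps, and likewise $\Gfunc$ stabilizes at $\pi_0(\Y)$. The interleaving equations $\psi_{I^\e}\phi_I=\Ffunc[I\subseteq I^{2\e}]=\mathbb{1}$ and $\phi_{I^\e}\psi_I=\Gfunc[I\subseteq I^{2\e}]=\mathbb{1}$ then exhibit $\phi_I\colon \pi_0(\X)\to\pi_0(\Y)$ and $\psi_I\colon \pi_0(\Y)\to\pi_0(\X)$ as split monomorphisms in each direction, whence $|\pi_0(\X)|\le|\pi_0(\Y)|\le|\pi_0(\X)|$. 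Finiteness forces equality.

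For the converse, assume $\X$ and $\Y$ have the same number of path components and fix a bijection $\beta\colon \pi_0(\X)\to\pi_0(\Y)$. Let $[c,d]$ be a bounded closed interval containing $f(\X)\cup g(\Y)$ and take any $\e > d-c$; the purpose of this choice is that whenever $I$ meets $[c,d]$ we get $I^\e\supseteq[c,d]$, so that $\Ffunc(I^\e)=\pi_0(\X)$ and $\Gfunc(I^\e)=\pi_0(\Y)$. I would then define, for every interval $I$,
\[
\phi_I = \beta \circ \Ffunc[I \subseteq I^\e],
\qquad
\psi_I = \beta^{-1} \circ \Gfunc[I \subseteq I^\e],
\]
interpreted as the unique empty map when the source is empty (and noting that $\Ffunc(I)\neq\emptyset$ forces $I$ to meet $f(\X)\subseteq[c,d]$, so the stabilization above applies). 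Naturality in $I\subseteq J$ holds because the structure maps among the stabilized values $\pi_0(\X)$ and $\pi_0(\Y)$ are identities while the maps $\Ffunc(I)\to\pi_0(\X)$ induced by $f^{-1}(I)\subseteq\X$ are compatible with inclusions; the two interleaving identities reduce on nonempty intervals to $\beta^{-1}\beta=\mathbb{1}$ (resp. $\beta\beta^{-1}=\mathbb{1}$) post-composed with a stabilized identity map, and hold vacuously where the source is empty. This yields a genuine $\e$-interleaving, so $\rdist(f,g)\le\e<\infty$.

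The genuinely mathematical content is confined to the stabilization observation of the first paragraph; once that is in place, both directions collapse to elementary statements about a single bijection $\beta$. I expect the main obstacle to be organizational rather than conceptual: verifying naturality and the two interleaving equations \emph{uniformly} across all intervals, in particular treating correctly the unbounded intervals and the intervals on which $\Ffunc$ or $\Gfunc$ is empty, where the only available map is the empty map and one must check it does not obstruct the commuting squares.
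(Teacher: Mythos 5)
Your proof is correct and follows essentially the same route as the paper: both rest on the observation that $\Ffunc,\Gfunc$ stabilize to $\pi_0(\X),\pi_0(\Y)$ on intervals containing $f(\X)\cup g(\Y)$, extract a bijection from any interleaving evaluated at such an interval, and conversely build an $\e$-interleaving for $\e$ exceeding the diameter of the images by composing the structure maps $\Ffunc[I\subseteq I^\e]$, $\Gfunc[I\subseteq I^\e]$ with a chosen bijection $\beta$. Your split-monomorphism phrasing of the forward direction and your uniform use of empty maps (where the paper simply notes $\Ffunc(I)=\Gfunc(I)=\emptyset$ on intervals missing both images) are cosmetic variations, not a different argument.
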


\begin{proof}
Let $I$ be a very large interval containing $f(\X) \cup g(\Y)$. Writing $\Ffunc = \CC(f)$ and $\Gfunc = \CC(g)$ we have
\[
\pi_0(\X) = \Ffunc(I) = \Ffunc(I^\e) = \Ffunc(I^{2\e}),
\quad
\pi_0(\Y) = \Gfunc(I) = \Gfunc(I^\e) = \Gfunc(I^{2\e})
\]
for any $\e \geq 0$. Thus any $\e$-interleaving defines a bijection $\pi_0(\X) \cong \pi_0(Y)$ through its maps $\phi_I, \psi_I$.

Conversely, suppose there is a bijection $\pi_0(\X) \cong \pi_0(\Y)$. Let $\e$ be larger than the diameter of $f(\X) \cup g(\Y)$. This implies that if $I$ meets either $f(\X)$ or $g(\Y)$ then $\Ffunc(I^\e) = \pi_0(\X)$ and $\Gfunc(I^\e) = \pi_0(\Y)$.
For these intervals we define $\phi_I, \psi_I$ as the following composites, using the bijection.
\begin{alignat*}{5}
\phi_I &:
\Ffunc(I) 
  \stackrel{\Ffunc[I \subseteq I^\e]}{\xrightarrow{\hspace*{2.5em}}}
\Ffunc(I^\e)
&&= \pi_0(\X)
&&\cong \pi_0(\Y)
&&= \Gfunc(I^\e)
\\
\psi_I &:
\Gfunc(I)
  \stackrel{\Gfunc[I \subseteq I^\e]}{\xrightarrow{\hspace*{2.5em}}}
\Gfunc(I^\e)
&&= \pi_0(\Y)
&&\cong \pi_0(\X)
&&= \Ffunc(I^\e)
\end{alignat*}
For the remaining intervals, $\Ffunc(I) = \Gfunc(I) = \emptyset$ so nothing needs to be done.
It is not difficult to verify that these maps define an $\e$-interleaving.
\end{proof}

\begin{prop}
\label{prop:rdist=0}
The Reeb distance between two $\R$-graphs is zero if and only if they are isomorphic.
\end{prop}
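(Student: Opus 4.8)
The plan is to prove both implications, with the reverse implication being the substantive one. For the easy direction, suppose the two $\R$-graphs $f$ and $g$ are isomorphic. Since $\CC$ is a functor, their Reeb cosheaves $\Ffunc = \CC(f)$ and $\Gfunc = \CC(g)$ are isomorphic cosheaves; an isomorphism is exactly a $0$-interleaving in the sense of Definition~\ref{def:interleaving} (the case $\e = 0$), so $\rdist(f,g) = \inter(\Ffunc,\Gfunc) = 0$.

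For the forward direction I would exploit constructibility to rigidify small interleavings. Choose a common critical set $S = \{a_0,\dots,a_n\}$ for $\Ffunc$ and $\Gfunc$, and let $\delta$ be the minimum gap between consecutive critical values. Since $\rdist(f,g) = 0$, the infimum in Definition~\ref{def:interleaving-dist} is zero, so for any threshold we can find an $\e$-interleaving $(\phi,\psi)$ with $\e$ as small as we like; fix one with $\e$ much smaller than $\delta$. The key observation is that, unlike the wide intervals $(a_{i-1},a_{i+1})$ used in Notation~\ref{notn:cosheaf-comb}, a tiny interval straddling a single critical value does not acquire new critical values when expanded by~$\e$. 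Concretely, for a tiny $J_i = (a_i - t, a_i + t)$ with $t + 2\e < \delta$, all of $J_i$, $J_i^\e$, $J_i^{2\e}$ meet $S$ only in $\{a_i\}$, so by constructibility both $\Ffunc$ and $\Gfunc$ are canonically constant on them, and by naturality the various maps $\phi_{J_i}, \phi_{J_i^\e}$ agree after these identifications. Under them, $\phi_{J_i}$ and $\psi_{J_i^\e}$ become maps $\phi_i^V : V_i^\Ffunc \to V_i^\Gfunc$ and $\psi_i^V : V_i^\Gfunc \to V_i^\Ffunc$; and the interleaving identities~\eqref{eq:interleaving2}, which equate $\psi_{J_i^\e}\phi_{J_i}$ with $\Ffunc[J_i \subseteq J_i^{2\e}]$ and symmetrically for the other composite (both isomorphisms, hence identities after identification), force $\psi_i^V \phi_i^V$ and $\phi_i^V \psi_i^V$ to be identities. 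Thus $\phi_i^V$ is a bijection. Running the same argument on a tiny interval inside the gap $(a_i,a_{i+1})$ produces bijections $\phi_i^E : E_i^\Ffunc \to E_i^\Gfunc$.

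Next I would verify the consistency conditions of Proposition~\ref{prop:cshc-morphism}. Choosing the tiny edge interval to sit inside the tiny vertex interval $J_i$, the inclusion induces the attaching map $\ell_i^\Ffunc$ on $\Ffunc$ and $\ell_i^\Gfunc$ on $\Gfunc$; naturality of the interleaving family $(\phi_I)$ with respect to this inclusion then yields $\phi_i^V \ell_i^\Ffunc = \ell_i^\Gfunc \phi_i^E$, and the analogous inclusion on the right gives the $r_i$ condition. By Proposition~\ref{prop:cshc-morphism} the maps $\phi_i^V, \phi_i^E$ assemble into a cosheaf morphism $\Ffunc \tO \Gfunc$, and since every short-interval component is a bijection, Corollary~\ref{cor:cshc-isomorphism} upgrades it to an isomorphism of cosheaves. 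Finally, because $\CC''$ is an equivalence of categories (Theorem~\ref{Thm:Equivalence}), it reflects isomorphisms of objects: $\Ffunc \cong \Gfunc$ implies $f \cong g$ in $\Reeb$, completing the proof.

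The main obstacle is precisely the rigidification step in the second paragraph. Interleaving distance is only a pseudometric on general pre-cosheaves (Proposition~\ref{prop:inter}), so $\inter = 0$ does not by itself produce an isomorphism; one genuinely needs the constructibility hypothesis (finite values and a finite critical set) to guarantee that a sufficiently small interleaving cannot drift across critical values, and hence that its components are pinned down to bijections by~\eqref{eq:interleaving2}. The remaining work---choosing representative intervals and thresholds so that all the $\e$- and $2\e$-expansions meet $S$ in the intended set---is routine epsilon bookkeeping.
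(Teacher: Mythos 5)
Your proof is correct and follows essentially the same route as the paper: both reduce via the equivalence $\CC''$ to showing that constructible cosheaves at interleaving distance zero are isomorphic, fix an $\e$-interleaving with $\e$ below a fraction of the minimal critical gap (the paper quantifies this as $\e < \hbar/4$), use constructibility to turn the interleaving maps on well-chosen intervals into vertex and edge bijections pinned down by~\eqref{eq:interleaving2}, check consistency by naturality, and conclude with Proposition~\ref{prop:cshc-morphism} and Corollary~\ref{cor:cshc-isomorphism}. The only real difference is interval selection, and it is worth noting: the paper takes large overlapping intervals $I_i \ni a_i$ whose intersections $J_i = I_i \cap I_{i+1}$ serve as the edge intervals and lie inside \emph{both} adjacent vertex intervals, so single naturality checks deliver the $\ell_i$ and $r_i$ conditions simultaneously, whereas your tiny straddling intervals make the edge interval near $a_i$ give only the $\ell_i$ condition, so the $r_i$ condition needs a second tiny edge interval near $a_{i+1}$ together with one bridging naturality argument (via a common superinterval inside the gap whose $\e$-expansion stays in the gap) identifying the two induced edge maps --- routine bookkeeping of the kind you already flag, but currently glossed by ``the analogous inclusion on the right.''
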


\begin{proof}
The nontrivial part is to show that Reeb distance zero implies that the $\R$-graphs are isomorphic.
We will show that if $\Ffunc, \Gfunc$ are constructible cosheaves with $\inter(\Ffunc, \Gfunc) = 0$ then they are isomorphic. This is an equivalent statement since $\CC''$ is an equivalence of categories.

Here is a quantified statement that implies the result.
Let $S = \{a_0, a_1, \dots, a_n\}$ be a common critical set for $\Ffunc, \Gfunc$ and let $\hbar = \min_i(a_{i+1}-a_{i}) > 0$. We claim that if $\Ffunc, \Gfunc$ are $\e$-interleaved for $\e < \hbar/4$ then $\Ffunc, \Gfunc$ are isomorphic.

Indeed, for such~$\e$ we can find intervals $I_0, I_1, \dots, I_n$ such that
\[
a_i \in I_i \subseteq I_i^{2\e} \subseteq (a_{i-1}, a_{i+1})
\]
where each $J_i = I_i \cap I_{i+1}$ is nonempty.\footnote{%
Specifically, $I_i =  \left( \tfrac{1}{2}(a_{i-1} + a_i) - \delta, \tfrac{1}{2}(a_{i} + a_{i+1}) + \delta\right)$ will do, for sufficiently small~$\delta > 0$.
}
Note that $J_i \subseteq J_i^{2\e} \subseteq (a_i,a_{i+1})$ automatically.
By the constructibility of $\Ffunc, \Gfunc$ the various inclusions of intervals induce isomorphisms
\begin{alignat*}{7}
\Ffunc(I_i) &= 
\Ffunc(I_i^\e) &&= 
\Ffunc(I_i^{2\e}) &&=
\Ffunc((a_{i-1},a_{i+1})),
\quad
&&
\Ffunc(J_i) &&= 
\Ffunc(J_i^\e) &&= 
\Ffunc(J_i^{2\e}) &&=
\Ffunc((a_{i},a_{i+1})),
\\
\Gfunc(I_i) &= 
\Gfunc(I_i^\e) &&= 
\Gfunc(I_i^{2\e}) &&=
\Gfunc((a_{i-1},a_{i+1})),
\quad
&&
\Gfunc(J_i) &&= 
\Gfunc(J_i^\e) &&= 
\Gfunc(J_i^{2\e}) &&=
\Gfunc((a_{i},a_{i+1})).
\end{alignat*}
It follows that the following maps from an $\e$-interleaving
\[
\begin{diagram}
\dgARROWLENGTH=1.5em
\node{\Ffunc(I_i)}
	\arrow{e,=}
	\arrow{se}
\node{\Ffunc(I_i^\e)}
	\arrow{e,=}
	\arrow{se}
\node{\Ffunc(I_i^{2\e})}
\\
\node{\Gfunc(I_i)}
	\arrow{e,=}
	\arrow{ne}
\node{\Gfunc(I_i^\e)}
	\arrow{e,=}
	\arrow{ne}
\node{\Gfunc(I_i^{2\e})}
\end{diagram}
\qquad
\text{and}
\qquad
\begin{diagram}
\dgARROWLENGTH=1.5em
\node{\Ffunc(J_i)}
	\arrow{e,=}
	\arrow{se}
\node{\Ffunc(J_i^\e)}
	\arrow{e,=}
	\arrow{se}
\node{\Ffunc(J_i^{2\e})}
\\
\node{\Gfunc(J_i)}
	\arrow{e,=}
	\arrow{ne}
\node{\Gfunc(J_i^\e)}
	\arrow{e,=}
	\arrow{ne}
\node{\Gfunc(J_i^{2\e})}
\end{diagram}
\]
induce isomorphisms $\Ffunc((a_{i-1},a_{i+1})) \cong \Gfunc((a_{i-1},a_{i+1}))$ and $\Ffunc((a_{i},a_{i+1})) \cong \Gfunc((a_{i},a_{i+1}))$. 
These isomorphisms are natural with respect to the inclusions $(a_{i-1},a_i) \subseteq (a_{i-1},a_{i+1})$ and $(a_{i},a_{i+1}) \subseteq (a_{i-1},a_{i+1})$ because the interleaving maps are natural with respect to $J_{i-1} \subseteq I_i$ and $J_i \subseteq I_i$.

Proposition~\ref{prop:cshc-morphism} and Corollary~\ref{cor:cshc-isomorphism} imply that this collection of isomorphisms extends to an cosheaf isomorphism between $\Ffunc, \Gfunc$.
\end{proof}

\begin{cor}
The Reeb distance~$\rdist$ is an extended \emph{metric} on isomorphism classes in $\Reeb$.
\qed
\end{cor}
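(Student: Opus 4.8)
The plan is to assemble the two preceding results. Proposition~\ref{prop:inter} already establishes that $\rdist$ is an extended pseudometric on $\Reeb$: it takes values in $[0,\infty]$, it is symmetric, it obeys the triangle inequality, and $\rdist(f,f)=0$. The only gap between a pseudometric and a genuine metric is the strict identity of indiscernibles, and this is exactly what Proposition~\ref{prop:rdist=0} supplies, namely that $\rdist(f,g)=0$ if and only if $f$ and $g$ are isomorphic. So the whole content of the corollary is a matter of packaging these two facts together, with some care taken over the passage to isomorphism classes.

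First I would verify that $\rdist$ is well-defined on isomorphism classes, i.e.\ that it is invariant under replacing either argument by an isomorphic $\R$-graph. If $f \cong f'$ and $g \cong g'$, then $\rdist(f,f')=0$ and $\rdist(g,g')=0$ by the easy direction of Proposition~\ref{prop:rdist=0} (an isomorphism of cosheaves is a $0$-interleaving). Two applications of the triangle inequality from Proposition~\ref{prop:inter} then give
\[
\rdist(f',g') \leq \rdist(f',f) + \rdist(f,g) + \rdist(g,g') = \rdist(f,g),
\]
and the reverse inequality follows by symmetry. Hence $\rdist$ descends to a well-defined function on isomorphism classes in $\Reeb$.

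The four metric axioms on isomorphism classes then follow at once. Non-negativity, the range $[0,\infty]$, symmetry, and the triangle inequality are inherited directly from Proposition~\ref{prop:inter}. For identity of indiscernibles, $\rdist([f],[g])=0$ means $\rdist(f,g)=0$, which by Proposition~\ref{prop:rdist=0} holds precisely when $f \cong g$, that is, when $[f]=[g]$. I expect no genuine obstacle here: the only substantive step---that a vanishing Reeb distance forces an isomorphism---has already been handled in Proposition~\ref{prop:rdist=0}, whose proof exploited the rigidity of constructible cosheaves under sufficiently fine interleavings (the $\e < \hbar/4$ argument). The present corollary is simply the formal consequence of that work.
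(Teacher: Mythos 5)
Your proof is correct and follows exactly the route the paper intends: the corollary is stated with an immediate \textsc{qed} because it is just Proposition~\ref{prop:inter} (pseudometric axioms) combined with Proposition~\ref{prop:rdist=0} (vanishing distance iff isomorphism). Your extra check that $\rdist$ descends to isomorphism classes via the triangle inequality is a detail the paper leaves implicit, and it is handled correctly.
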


\subsection{Smoothing functors}
\label{Sect:smoothing}

We can express the notion of $\e$-interleaving in categorical language, following~\cite{Bubenik2014}. 
Think of the expansion operation on intervals
\[
\Omega_\e: \Int \to \Int;\; I \mapsto I^\e
\]
as a functor (since $I \subseteq J$ implies $I^\e \subseteq J^\e$).
Any pre-cosheaf $\Ffunc: \Int \to \Set$ can be `$\e$-smoothed' to obtain a new pre-cosheaf $\Ffunc \Omega_\e: \Int \to \Set$. Thus $\Ffunc\Omega_\e(I) = \Ffunc(I^\e)$ by definition.

\begin{obs}
\label{obs:smooth-1}
Asking for natural families of maps $(\phi_I), (\psi_I)$ as in equation~\eqref{eq:interleaving1} is precisely the same as asking for natural transformations $\phi: \Ffunc \tO \Gfunc\Omega_\e$ and $\psi: \Gfunc \tO \Ffunc\Omega_\e$.
\end{obs}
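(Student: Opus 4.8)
The plan is to prove Observation~\ref{obs:smooth-1} by directly unwinding the two definitions and observing that they describe literally the same data subject to literally the same conditions. First I would check that the components match on the nose. A natural transformation $\phi: \Ffunc \tO \Gfunc\Omega_\e$ assigns to each interval $I$ a morphism $\phi_I: \Ffunc(I) \to (\Gfunc\Omega_\e)(I)$; by the definition of the composite functor, $(\Gfunc\Omega_\e)(I) = \Gfunc(\Omega_\e(I)) = \Gfunc(I^\e)$. Hence the target of each $\phi_I$ is exactly $\Gfunc(I^\e)$, which is precisely the type of map appearing in the first family of~\eqref{eq:interleaving1}.

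Next I would compare the two naturality conditions. For $\phi$ to be a natural transformation $\Ffunc \tO \Gfunc\Omega_\e$, the square~\eqref{Eqn:NatTrans} must commute for every morphism $I \subseteq J$ of~$\Int$; its bottom edge is the map $(\Gfunc\Omega_\e)[I \subseteq J]$. The key step is to evaluate this edge using functoriality of~$\Omega_\e$: since $\Omega_\e$ sends the inclusion $I \subseteq J$ to the inclusion $I^\e \subseteq J^\e$, the composite functor acts by $(\Gfunc\Omega_\e)[I \subseteq J] = \Gfunc[I^\e \subseteq J^\e]$. Therefore the naturality square for $\phi$ reads exactly as the commuting square asserting that the family $(\phi_I)$ is natural with respect to the inclusion $I \subseteq J$ in the sense demanded by Definition~\ref{def:interleaving}.

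The symmetric argument applies verbatim to $\psi$, identifying natural transformations $\Gfunc \tO \Ffunc\Omega_\e$ with the second family in~\eqref{eq:interleaving1}. Since these two translations are mutually inverse readings of the same data, they establish the asserted ``precisely the same'' correspondence. I do not anticipate a genuine obstacle: the entire content is the remark that post-composing the codomain pre-cosheaf with the expansion functor $\Omega_\e$ converts the ``$\e$-expanded'' naturality condition of an interleaving into the ordinary naturality condition for a transformation into the smoothed pre-cosheaf. The only point requiring care is the bookkeeping, namely confirming that functoriality of $\Omega_\e$ makes the composite $\Gfunc\Omega_\e$ send $I \subseteq J$ to $\Gfunc[I^\e \subseteq J^\e]$, which is immediate from the definition of~$\Omega_\e$ recorded just above.
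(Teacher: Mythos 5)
Your proof is correct and matches the paper's intent exactly: the paper states this as an unproved observation precisely because its content is the definition-unwinding you carry out, identifying $(\Gfunc\Omega_\e)(I) = \Gfunc(I^\e)$ and $(\Gfunc\Omega_\e)[I \subseteq J] = \Gfunc[I^\e \subseteq J^\e]$ so that the ordinary naturality square coincides with the interleaving naturality condition. Nothing further is needed.
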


\begin{obs}
\label{obs:smooth-2}
The functor~$\Omega_\e$ is a {pointed endofunctor} on~$\Int$, because the inclusions $I \subseteq I^\e$ define a natural transformation $\omega^\e: \onefunc_{\Int} \tO \Omega_\e$.
From this we get a natural transformation
\[
\sigma^\e_\Ffunc = \Ffunc \omega^\e: \Ffunc \tO \Ffunc\Omega_\e
\]
defined explicitly by $(\sigma^\e_\Ffunc)_I = \Ffunc[I \subseteq I^\e] : \Ffunc(I) \to \Ffunc(I^\e)$.
\end{obs}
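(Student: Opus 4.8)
The plan is to verify the two assertions in turn, exploiting the crucial fact that $\Int$ is a poset (Example~\ref{ex:poset}) and hence a \emph{thin} category: between any two objects there is at most one morphism. Consequently every diagram in $\Int$ commutes automatically, and all the naturality conditions below reduce to checking that the relevant inclusions of intervals hold.

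First I would confirm that $\omega^\e$ is a natural transformation $\onefunc_{\Int} \tO \Omega_\e$. Its component at an interval $I = (a,b)$ is the inclusion $\omega^\e_I : I \to I^\e = (a-\e, b+\e)$, which is a legitimate morphism of $\Int$ because $\e \geq 0$ gives $I \subseteq I^\e$. Naturality requires, for each morphism $I \subseteq J$, that the square with vertices $I, J, I^\e, J^\e$ commute; but both composites are morphisms $I \to J^\e$ in the thin category $\Int$, and such a morphism exists (since $I \subseteq J \subseteq J^\e$), hence is unique, so the square commutes. Here I also use that $\Omega_\e$ is already a functor, so that $\Omega_\e[I \subseteq J] : I^\e \to J^\e$ is defined. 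Since $\Omega_\e$ is an endofunctor of $\Int$ carrying a natural transformation $\omega^\e$ from the identity, it is by definition a pointed endofunctor (compare Observation~\ref{obs:Reeb-rho}).

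Next I would establish that $\sigma^\e_\Ffunc = \Ffunc\omega^\e$ is a natural transformation $\Ffunc \tO \Ffunc\Omega_\e$. This is the left-whiskering of $\omega^\e$ by the functor $\Ffunc$: its component at $I$ is obtained by applying $\Ffunc$ to the morphism $\omega^\e_I$, giving $(\sigma^\e_\Ffunc)_I = \Ffunc[\omega^\e_I] = \Ffunc[I \subseteq I^\e] : \Ffunc(I) \to \Ffunc(I^\e)$, exactly as asserted. Naturality follows by applying $\Ffunc$ to each (automatically commuting) naturality square of $\omega^\e$ and invoking the functoriality of $\Ffunc$, which preserves composition; equivalently one checks directly that $\Ffunc[I^\e \subseteq J^\e]\,\Ffunc[I \subseteq I^\e]$ and $\Ffunc[J \subseteq J^\e]\,\Ffunc[I \subseteq J]$ agree, since both equal $\Ffunc$ applied to the single morphism $I \subseteq J^\e$.

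I do not expect any genuine obstacle: the statement is a routine bookkeeping check, and its only subtlety is conceptual rather than technical---namely, recognizing that naturality in $\Int$ is automatic because $\Int$ is thin, and that $\sigma^\e_\Ffunc$ is nothing more than the standard whiskering construction. The entire argument amounts to confirming that the two maps deserve the names \emph{natural transformation} and \emph{pointed endofunctor}.
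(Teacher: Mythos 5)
Your proof is correct and matches the paper's intent exactly: the paper states this as an Observation with no written proof, precisely because the content is the standard whiskering construction $\sigma^\e_\Ffunc = \Ffunc\omega^\e$ together with the automatic commutativity of diagrams in the thin category $\Int$, both of which you verify carefully. Your explicit check that both composites $\Ffunc[I^\e \subseteq J^\e]\,\Ffunc[I \subseteq I^\e]$ and $\Ffunc[J \subseteq J^\e]\,\Ffunc[I \subseteq J]$ equal $\Ffunc[I \subseteq J^\e]$ is exactly the routine bookkeeping the paper leaves implicit.
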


\begin{obs}
\label{obs:smooth-3}
Conditions~\eqref{eq:interleaving2} can be written as
$(\psi \Omega_\e) \circ \phi = \sigma^{2\e}_{\Ffunc}$ and
$(\phi \Omega_\e) \circ \psi = \sigma^{2\e}_{\Gfunc}$.
\end{obs}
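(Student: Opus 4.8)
The plan is to verify the two displayed equalities componentwise, since a natural transformation is determined by its components, so that the statement reduces to a direct translation of~\eqref{eq:interleaving2}. First I would record the type-matching. Because expanding an interval by~$\e$ and then again by~$\e$ is the same as expanding once by~$2\e$, we have $\Omega_\e \Omega_\e = \Omega_{2\e}$ as endofunctors of~$\Int$. Consequently the right-whiskered transformation $\psi\Omega_\e$ has source $\Gfunc\Omega_\e$ and target $\Ffunc\Omega_\e\Omega_\e = \Ffunc\Omega_{2\e}$, so the composite $(\psi\Omega_\e)\circ\phi$ is a natural transformation $\Ffunc \tO \Ffunc\Omega_{2\e}$---precisely the type of $\sigma^{2\e}_\Ffunc$ (Observation~\ref{obs:smooth-2}). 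Both sides of the first claimed equation therefore live in the same hom-set, and likewise for the second.

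Next I would unwind each side at an arbitrary interval~$I$. The component of a right-whiskered transformation is $(\psi\Omega_\e)_I = \psi_{\Omega_\e(I)} = \psi_{I^\e}$, so that $((\psi\Omega_\e)\circ\phi)_I = \psi_{I^\e}\circ\phi_I$. On the other hand $(\sigma^{2\e}_\Ffunc)_I = \Ffunc[I \subseteq I^{2\e}]$ by Observation~\ref{obs:smooth-2}. Hence the equation $(\psi\Omega_\e)\circ\phi = \sigma^{2\e}_\Ffunc$ holds if and only if $\psi_{I^\e}\circ\phi_I = \Ffunc[I \subseteq I^{2\e}]$ for every~$I$, which is exactly the first condition of~\eqref{eq:interleaving2}. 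The symmetric computation---swapping the roles of $\phi,\psi$ and of $\Ffunc,\Gfunc$---shows that $(\phi\Omega_\e)\circ\psi = \sigma^{2\e}_\Gfunc$ is equivalent to the second condition of~\eqref{eq:interleaving2}.

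There is no substantive obstacle here beyond bookkeeping: the statement is a restatement of~\eqref{eq:interleaving2} in the language of pointed endofunctors established in Observations~\ref{obs:smooth-1} and~\ref{obs:smooth-2}. The only point requiring care is the whiskering convention---confirming that the $I$-component of $\psi\Omega_\e$ is $\psi_{I^\e}$, and not some composite involving $\Ffunc$ or~$\Gfunc$---together with the identification $\Omega_\e\Omega_\e = \Omega_{2\e}$ that makes the two sides comparable in the first place.
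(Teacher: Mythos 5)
Your proposal is correct and matches the paper's (implicit) justification exactly: the paper treats this as a bookkeeping observation, relying on the whiskering formula $(\psi\Omega_\e)_I = \psi_{I^\e}$ and the identity $\Omega_\e\Omega_\e = \Omega_{2\e}$, which are precisely the two points you verify. Your componentwise unwinding against Observation~\ref{obs:smooth-2} is the same route, just written out in full.
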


The two observations combined give a more purely categorical definition of $\e$-interleaving. The restatement of~\eqref{eq:interleaving2} asks that the following diagrams (of natural transformations) commute:
\begin{equation}
\label{eq:interleaving-cat}
\renewcommand{\labelstyle}{\textstyle}
\raisebox{9ex}{
\xymatrix
@=6ex
{
\Ffunc
	\ar@{=>}[dr]^(0.45){\phi}
	\ar@{=>}[dd]_(0.45){\sigma^{2\e}_{\Ffunc}}
\\
&
\Gfunc\Omega_\e
	\ar@{=>}[dl]^(0.45){\psi \Omega_\e} 
\\
\Ffunc\Omega_{2\e}
}}
\qquad
{\text{and}}
\qquad
\raisebox{9ex}{
\xymatrix
@=6ex
{
&
\Gfunc
	\ar@{=>}[dl]_(0.45){\psi}
	\ar@{=>}[dd]^(0.45){\sigma^{2\e}_{\Gfunc}}
\\
\Ffunc\Omega_\e
	\ar@{=>}[dr]_(0.45){\phi \Omega_\e} 
\\
&
\Gfunc\Omega_{2\e}
}}
\end{equation}
Implicitly we are using $\Omega_\e \Omega_\e = \Omega_{2\e}$.

\medskip
Now we consider the smoothing operation $\Ffunc \mapsto \Ffunc\Omega_\e$ in its own right.
This, we claim, is a functor on pre-cosheaves. Indeed, any natural transformation
$
\phi: \Ffunc \tO \Gfunc
$
gives rise to a natural transformation
$
\phi \Omega_\e: \Ffunc\Omega_\e \tO \Gfunc\Omega_\e
$
defined explicitly by $(\phi \Omega_\e)_I = \phi_{I^\e}: \Ffunc(I^\e) \to \Gfunc(I^\e)$.
One verifies immediately that this procedure respects composition and identities. Thus:

\begin{defn}[Smoothing functor]
Let $\SS_\e$ be the endofunctor of $\Pre = \Set^\Int$ defined by precomposition with $\Omega_\e$. Thus $\SS_\e(\Ffunc) = \Ffunc\Omega_\e$, and $\SS_\e[\phi] = \phi\Omega_\e: \Ffunc\Omega_\e\tO\Gfunc\Omega_\e$ for a morphism $\phi:\Ffunc \tO \Gfunc$.
\end{defn}

\begin{obs}
\label{obs:smooth-2b}
It follows from Observation~\ref{obs:smooth-2} that each $\SS_\e$ is a pointed endofunctor of $\Pre$, in the sense that there is a natural map~$\sigma^\e_\Ffunc$ from each pre-cosheaf~$\Ffunc$ to its smoothing $\SS_\e(\Ffunc)$. This is defined at each interval~$I$ to be
\[
\Ffunc[I \subseteq I^\e] : \Ffunc(I) \to \Ffunc(I^\e) = \SS_\e(\Ffunc)(I).
\]
Succinctly, these maps comprise a natural transformation $\sigma^\e: \onefunc_\Pre \tO \SS_\e$, defined $\sigma^\e_\Ffunc = \Ffunc \omega^\e$.
\end{obs}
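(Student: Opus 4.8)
The plan is to verify the two conditions bundled into the claim that $(\SS_\e,\sigma^\e)$ is a pointed endofunctor: first, that each $\sigma^\e_\Ffunc$ is genuinely a morphism in $\Pre$, i.e.\ a natural transformation $\Ffunc \tO \SS_\e(\Ffunc)$; and second, that these maps are themselves natural in~$\Ffunc$, so that they assemble into a single natural transformation $\sigma^\e: \onefunc_\Pre \tO \SS_\e$.

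The first condition is precisely the content of Observation~\ref{obs:smooth-2}, so I would simply invoke it: there $\sigma^\e_\Ffunc = \Ffunc\omega^\e$ is already shown to be a natural transformation $\Ffunc \tO \Ffunc\Omega_\e = \SS_\e(\Ffunc)$. (Its naturality square over an inclusion $I \subseteq J$ commutes because both composites $\Ffunc(I) \to \Ffunc(J^\e)$ collapse to $\Ffunc[I \subseteq J^\e]$, using only that $\Ffunc$ is a functor on the poset~$\Int$.)

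For the second condition I would take an arbitrary morphism $\phi: \Ffunc \tO \Gfunc$ of $\Pre$ and check that the square
\[
\begin{diagram}
\node{\Ffunc}
	\arrow{e,t}{\phi}
	\arrow{s,l}{\sigma^\e_\Ffunc}
\node{\Gfunc}
	\arrow{s,r}{\sigma^\e_\Gfunc}
\\
\node{\SS_\e(\Ffunc)}
	\arrow{e,t}{\SS_\e[\phi]}
\node{\SS_\e(\Gfunc)}
\end{diagram}
\]
commutes in $\Pre$. Recalling that $\SS_\e[\phi] = \phi\Omega_\e$ acts at an interval~$I$ by $\phi_{I^\e}$, evaluating this square at~$I$ reduces it to the equation $\Gfunc[I \subseteq I^\e]\circ\phi_I = \phi_{I^\e}\circ\Ffunc[I \subseteq I^\e]$, which is nothing other than the naturality of~$\phi$ applied to the inclusion $I \subseteq I^\e$. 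As $I$ is arbitrary, the square commutes and naturality in~$\Ffunc$ follows.

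I do not expect a genuine obstacle here; the content is entirely formal. The only point meriting care is to keep the two layers of naturality apart: naturality of each $\sigma^\e_\Ffunc$ in the interval variable~$I$ (supplied by Observation~\ref{obs:smooth-2}) versus naturality of the whole family in the pre-cosheaf variable~$\Ffunc$ (the substance of the present observation). Both are instances of a single structural principle, namely that $\sigma^\e$ is the whiskering of the fixed natural transformation $\omega^\e: \onefunc_\Int \tO \Omega_\e$ against pre-cosheaves. Since precomposition with $\onefunc_\Int$ is $\onefunc_\Pre$ and precomposition with $\Omega_\e$ is $\SS_\e$, the interchange law for horizontal composition---applied to $\phi$ and $\omega^\e$---yields the equation $\SS_\e[\phi]\circ\sigma^\e_\Ffunc = \sigma^\e_\Gfunc\circ\phi$ directly, which is exactly naturality in~$\Ffunc$.
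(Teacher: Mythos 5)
Your proposal is correct and matches the paper's reasoning: the paper justifies this observation exactly by whiskering the pointing $\omega^\e: \onefunc_\Int \tO \Omega_\e$ of Observation~\ref{obs:smooth-2} against pre-cosheaves, setting $\sigma^\e_\Ffunc = \Ffunc\omega^\e$, with naturality in~$\Ffunc$ following formally (the paper leaves the pointwise verification implicit, and your explicit check of the square via naturality of $\phi$ at $I \subseteq I^\e$ is precisely the omitted routine step). Your closing remark identifying the interchange law as the structural principle is exactly the ``succinct'' formulation the paper intends.
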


In the remainder of this section, we study the properties of~$\SS_\e$. 
The next two propositions support an analogy between smoothing of pre-cosheaves and smoothing in functional analysis (for example by convolution with a heat kernel).

\begin{prop}
\label{prop:smoothing-semigroup}
$(\SS_\e)$ is a semigroup of endofunctors (on $\Pre$) because $(\Omega_\e)$ is a semigroup of endofunctors (on~$\Int$): the relation $\Omega_{\e_1+\e_2} = \Omega_{\e_2} \Omega_{\e_1}$ implies $\SS_{\e_1+\e_2} = \SS_{\e_1} \SS_{\e_2}$.
\qed
\end{prop}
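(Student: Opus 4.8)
The plan is to reduce the claim to the (contravariant) functoriality of precomposition, so that the semigroup law for $(\SS_\e)$ becomes a direct consequence of the semigroup law for $(\Omega_\e)$ together with the associativity of functor composition. The only genuine verification is the identity $\Omega_{\e_2}\Omega_{\e_1} = \Omega_{\e_1+\e_2}$ of endofunctors of $\Int$; everything else is formal.

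First I would establish that identity. On objects it is immediate: for $I = (a,b)$ we have $\Omega_{\e_2}(\Omega_{\e_1}(I)) = (I^{\e_1})^{\e_2} = (a-\e_1-\e_2,\, b+\e_1+\e_2) = I^{\e_1+\e_2}$. On morphisms there is nothing further to check, since $\Int$ is a poset category (Example~\ref{ex:poset}): between any two comparable intervals there is at most one morphism, so two functors that agree on objects and preserve the inclusion relation automatically agree on morphisms. This yields the stated relation $\Omega_{\e_1+\e_2} = \Omega_{\e_2}\Omega_{\e_1}$, the order of indices being reversed precisely because precomposition is contravariant.

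Next I would unwind the definition of $\SS_\e$ as precomposition by $\Omega_\e$ and compute both on objects and on morphisms. For a pre-cosheaf $\Ffunc$,
\[
\SS_{\e_1}\SS_{\e_2}(\Ffunc)
= \SS_{\e_1}(\Ffunc\Omega_{\e_2})
= (\Ffunc\Omega_{\e_2})\Omega_{\e_1}
= \Ffunc(\Omega_{\e_2}\Omega_{\e_1})
= \Ffunc\Omega_{\e_1+\e_2}
= \SS_{\e_1+\e_2}(\Ffunc),
\]
where the middle equality is the associativity of composition of functors and the fourth uses the object-level identity above. Running the same computation with a morphism $\phi:\Ffunc\tO\Gfunc$ in place of $\Ffunc$, and using $\SS_\e[\phi]=\phi\Omega_\e$ (whiskering of a natural transformation by a functor), gives $\SS_{\e_1}\SS_{\e_2}[\phi] = \phi\Omega_{\e_1+\e_2} = \SS_{\e_1+\e_2}[\phi]$. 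Hence $\SS_{\e_1}\SS_{\e_2} = \SS_{\e_1+\e_2}$ as endofunctors of $\Pre$.

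There is no real obstacle: the entire content is the elementary object-level equality $(I^{\e_1})^{\e_2}=I^{\e_1+\e_2}$. The only point needing a moment's care is the bookkeeping of variance. Since $\SS_\e$ is defined by \emph{pre}composition, applying $\SS_{\e_1}$ after $\SS_{\e_2}$ composes the $\Omega$'s in the opposite order, which is why the underlying relation is naturally written $\Omega_{\e_1+\e_2}=\Omega_{\e_2}\Omega_{\e_1}$; because the expansion functors in fact commute, this reordering is harmless.
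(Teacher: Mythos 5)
Your proof is correct and follows exactly the route the paper intends: the paper states the proposition with an immediate \qed, leaving the verification implicit, and your writeup supplies precisely that verification (the object-level identity $(I^{\e_1})^{\e_2}=I^{\e_1+\e_2}$, the observation that morphisms come for free in the poset category $\Int$, and associativity of precomposition, with the variance bookkeeping correctly explaining the reversed order $\Omega_{\e_2}\Omega_{\e_1}$). Nothing is missing and nothing differs in substance from the paper's argument.
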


\begin{prop}
\label{prop:smoothing-contraction}
Smoothing is a contraction: $\inter(\SS_\e(\Ffunc), \SS_\e(\Gfunc)) \leq \inter(\Ffunc, \Gfunc)$.
\end{prop}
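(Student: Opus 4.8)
The plan is to show that the functor $\SS_\e$ sends a $\delta$-interleaving to a $\delta$-interleaving \emph{of the same parameter}. Granting this, every $\delta$-interleaving between $\Ffunc$ and $\Gfunc$ produces one between $\SS_\e(\Ffunc)$ and $\SS_\e(\Gfunc)$, so the set of admissible $\delta$ can only grow under smoothing, and taking infima yields $\inter(\SS_\e(\Ffunc),\SS_\e(\Gfunc)) \le \inter(\Ffunc,\Gfunc)$. The one structural fact I would isolate first is that the expansion functors commute, $\Omega_\e\Omega_\delta = \Omega_{\delta+\e} = \Omega_\delta\Omega_\e$, and hence so do the smoothing functors, $\SS_\e\SS_\delta = \SS_\delta\SS_\e$ (Proposition~\ref{prop:smoothing-semigroup}).

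Next I would take a $\delta$-interleaving in the categorical form of Observations~\ref{obs:smooth-1} and~\ref{obs:smooth-3}: natural transformations $\phi:\Ffunc \tO \SS_\delta\Gfunc$ and $\psi:\Gfunc \tO \SS_\delta\Ffunc$ with $(\psi\Omega_\delta)\circ\phi = \sigma^{2\delta}_\Ffunc$ and $(\phi\Omega_\delta)\circ\psi = \sigma^{2\delta}_\Gfunc$. I then apply the functor $\SS_\e$ to $\phi$ and $\psi$, setting $\phi' = \SS_\e[\phi] = \phi\Omega_\e$ and $\psi' = \SS_\e[\psi] = \psi\Omega_\e$. Using the commutation of the expansions, their codomains rewrite as $\SS_\e\SS_\delta\Gfunc = \SS_\delta(\SS_\e\Gfunc)$ and $\SS_\e\SS_\delta\Ffunc = \SS_\delta(\SS_\e\Ffunc)$, so that $\phi':\SS_\e\Ffunc \tO \SS_\delta(\SS_\e\Gfunc)$ and $\psi':\SS_\e\Gfunc \tO \SS_\delta(\SS_\e\Ffunc)$ already have exactly the shape required of a $\delta$-interleaving between the smoothed pre-cosheaves.

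The verification of the two triangle identities for $\phi',\psi'$ is then purely formal. Since $\SS_\e$ (being right whiskering by $\Omega_\e$) respects vertical composition, and since whiskering obeys $(\eta\Omega_\delta)\Omega_\e = \eta\Omega_{\delta+\e} = (\eta\Omega_\e)\Omega_\delta$, one computes $(\psi'\Omega_\delta)\circ\phi' = \SS_\e\big((\psi\Omega_\delta)\circ\phi\big) = \SS_\e(\sigma^{2\delta}_\Ffunc)$, and symmetrically $(\phi'\Omega_\delta)\circ\psi' = \SS_\e(\sigma^{2\delta}_\Gfunc)$. The only point I would check by hand is the identity $\SS_\e(\sigma^{2\delta}_\Ffunc) = \sigma^{2\delta}_{\SS_\e\Ffunc}$: evaluating both sides at an interval $I$ gives the inclusion-induced map $\Ffunc[I^\e \subseteq I^{\e+2\delta}]$ in each case, so they coincide. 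Hence $\phi',\psi'$ form a genuine $\delta$-interleaving, and $\SS_\e\Ffunc,\SS_\e\Gfunc$ are $\delta$-interleaved whenever $\Ffunc,\Gfunc$ are.

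I expect no genuine obstacle: the entire content is that smoothing commutes with the shift that defines interleaving, so a fixed interleaving survives smoothing without any change of parameter. The only care required is the bookkeeping of left/right whiskering and the identification $\SS_\e(\sigma^{2\delta}_\Ffunc) = \sigma^{2\delta}_{\SS_\e\Ffunc}$; once that is in hand, the stated inequality follows immediately by passing to the infimum over all $\delta$ admitting an interleaving.
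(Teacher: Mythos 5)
Your proposal is correct and takes essentially the same route as the paper: the paper's proof is exactly the one-line observation that if $\phi, \psi$ define a $\delta$-interleaving between $\Ffunc, \Gfunc$ then $\phi\Omega_\e, \psi\Omega_\e$ define a $\delta$-interleaving between $\SS_\e(\Ffunc)$ and $\SS_\e(\Gfunc)$, from which the inequality follows by taking infima. Your write-up simply makes explicit the bookkeeping the paper leaves implicit---the commutation $\Omega_\delta\Omega_\e = \Omega_\e\Omega_\delta$, the whiskering identities, and the check $\SS_\e(\sigma^{2\delta}_\Ffunc) = \sigma^{2\delta}_{\SS_\e\Ffunc}$---all of which are verified correctly.
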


\begin{proof}
If $\phi, \psi$ define a $\delta$-interleaving between $\Ffunc, \Gfunc$ then $\phi\Omega_\e, \psi\Omega_\e$ define an $\delta$-interleaving between $\SS_\e(\Ffunc) = \Ffunc\Omega_\e$ and $\SS_\e(\Gfunc) = \Gfunc\Omega_\e$.
\end{proof}

The  next theorem indicates that we can make geometric use of~$\SS_\e$. 

\begin{thm}
\label{thm:S-restrictions}
The functor $\SS_\e$ restricts to functors $\SS_\e: \Csh \to \Csh$ and $\SS_\e: \Cshc \to \Cshc$.
\end{thm}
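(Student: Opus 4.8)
The plan is to prove the two restrictions separately, treating the cosheaf condition as the substantive part and constructibility as a short bookkeeping step. Throughout I write $\Ffunc\Omega_\e$ for $\SS_\e(\Ffunc)$, so that $(\Ffunc\Omega_\e)(I) = \Ffunc(I^\e)$ and $(\Ffunc\Omega_\e)[I\subseteq J] = \Ffunc[I^\e \subseteq J^\e]$. Two elementary geometric facts will drive everything: (a) if $(I_p)$ is an open-interval cover of an open interval $U$, then the expanded family $(I_p^\e)$ is an open-interval cover of $U^\e$; and (b) for any two intervals with $I_p \cap I_q \neq \emptyset$ one has $(I_p \cap I_q)^\e = I_p^\e \cap I_q^\e$. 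Fact (a) is a short endpoint computation (left endpoints of the $I_p$ approach that of $U$, and expansion reaches the extra $\e$ on each side), and (b) holds because expansion commutes with intersection of overlapping intervals.

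For $\SS_\e : \Csh \to \Csh$, let $\Ffunc$ be a cosheaf, $U$ an open interval, and $(I_p \mid p \in P)$ a cover. I would verify the universal property (the third interpretation of the gluing property) for $\Ffunc\Omega_\e$ directly. Suppose given a set $Z$ and maps $\zeta_p : \Ffunc(I_p^\e) \to Z$ satisfying the compatibility \eqref{eq:cosheaf-3} for the cover $(I_p)$ of $\Ffunc\Omega_\e$, that is, agreement on $(\Ffunc\Omega_\e)(I_p \cap I_q) = \Ffunc((I_p \cap I_q)^\e)$ for every pair with $I_p \cap I_q \neq \emptyset$. Since $(I_p^\e)$ covers $U^\e$ by (a) and $\Ffunc$ is a cosheaf, $\Ffunc(U^\e) = (\Ffunc\Omega_\e)(U)$ is already the colimit for the cover $(I_p^\e)$; so it suffices to upgrade our data to compatibility for that cover, i.e.\ agreement on $\Ffunc(I_p^\e \cap I_q^\e)$ for every pair with $I_p^\e \cap I_q^\e \neq \emptyset$. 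Granting that upgrade, the unique factoring map supplied by the cosheaf $\Ffunc$ is the map required by \eqref{eq:cosheaf-4}, and uniqueness transfers from $\Ffunc$ as well.

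The main obstacle is precisely this upgrade, which has real content because two intervals disjoint in the original cover can overlap after expansion, contributing identifications absent from the $\Ffunc\Omega_\e$-diagram. I would dispatch it by a connectedness argument. Fix $p,q$ with $I_p \cap I_q = \emptyset$ but $N := I_p^\e \cap I_q^\e \neq \emptyset$, say $I_p = (a_p,b_p)$ to the left of $I_q = (a_q,b_q)$. Set $\mathcal{C} = \{\, I_r : N \subseteq I_r^\e \,\}$; a direct inequality check shows $I_r \in \mathcal{C}$ iff $a_r \le a_q$ and $b_r \ge b_p$, so $I_p, I_q \in \mathcal{C}$ and, crucially, every member of the cover meeting the closed gap $[b_p,a_q]$ lies in $\mathcal{C}$. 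Hence $\mathcal{C}$ covers the connected interval $(a_p,b_q)$, its overlap (nerve) graph is connected, and I may choose a chain $I_p = I_{r_0}, I_{r_1}, \dots, I_{r_k} = I_q$ in $\mathcal{C}$ with $I_{r_i} \cap I_{r_{i+1}} \neq \emptyset$. For consecutive members, fact (b) gives $N \subseteq I_{r_i}^\e \cap I_{r_{i+1}}^\e = (I_{r_i}\cap I_{r_{i+1}})^\e$, so precomposing the given compatibility relation for the pair $(r_i,r_{i+1})$ with $\Ffunc[N \subseteq (I_{r_i}\cap I_{r_{i+1}})^\e]$ and using functoriality shows $\zeta_{r_i}\circ \Ffunc[N\subseteq I_{r_i}^\e]$ is independent of $i$; comparing $i=0$ with $i=k$ yields the desired agreement for $(p,q)$.

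Finally, for $\SS_\e : \Cshc \to \Cshc$ the previous part already makes $\Ffunc\Omega_\e$ a cosheaf, so only constructibility (Definition~\ref{defn:constructible-cosheaf}) remains, with candidate critical set $S' = (S-\e)\cup(S+\e)$, which is finite. Each value $(\Ffunc\Omega_\e)(I) = \Ffunc(I^\e)$ is finite; if $I \subseteq (-\infty,\min S')$ or $I \subseteq (\max S', +\infty)$ then $I^\e$ avoids $[\min S,\max S]$ and the value is empty; and if $I \subseteq J$ meet $S'$ in the same points, then $I^\e \subseteq J^\e$ meet $S$ in the same points, so $(\Ffunc\Omega_\e)[I\subseteq J] = \Ffunc[I^\e \subseteq J^\e]$ is an isomorphism. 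These are exactly the three defining conditions, completing the argument.
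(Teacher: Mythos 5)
Your proof is correct and follows essentially the same route as the paper's: the theorem splits into constructibility-preservation (a bookkeeping check with critical set $(S-\e)\cup(S+\e)$) and cosheaf-preservation, the latter handled by comparing the colimit diagram for the expanded cover of $U^\e$ with that of $\Ffunc\Omega_\e$ and bridging the extra overlap terms $I_p^\e\cap I_q^\e\neq\emptyset$, $I_p\cap I_q=\emptyset$ via a finite chain of cover elements whose expansions all contain the overlap. Your only deviations are cosmetic: you extract the chain from connectedness of the nerve of the subfamily $\mathcal{C}$ rather than from compactness of the gap interval, and you make explicit the endpoint inequalities and the identity $(I\cap J)^\e = I^\e\cap J^\e$ that the paper summarizes as ``by metric considerations.''
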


We split the theorem into two propositions.

\begin{prop}
\label{prop:smooth-constructible}
The functor $\SS_\e$ carries constructible pre-cosheaves to constructible pre-cosheaves.
\end{prop}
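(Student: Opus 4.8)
The plan is to exhibit an explicit critical set for $\SS_\e(\Ffunc) = \Ffunc\Omega_\e$ and then verify the two conditions of Definition~\ref{defn:constructible-cosheaf} directly. Finiteness of the values requires no work: $\SS_\e(\Ffunc)(I) = \Ffunc(I^\e)$ is finite because $\Ffunc$ is constructible. So the content lies entirely in producing a finite set $S'$ that governs both the isomorphisms and the vanishing.

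Suppose $\Ffunc$ has critical set $S = \{a_0, \dots, a_n\}$. Since expanding $I = (a,b)$ to $I^\e = (a-\e,b+\e)$ moves the left endpoint down by $\e$ and the right endpoint up by $\e$, the thresholds at which $I^\e \cap S$ can change are the values of $S$ shifted in both directions. Accordingly I would take $S' = \{a_i + \e \mid 0 \leq i \leq n\} \cup \{a_i - \e \mid 0 \leq i \leq n\}$, which is finite. Both shifts are genuinely needed: the left endpoint $a$ crossing some $a_i + \e$ and the right endpoint $b$ crossing some $a_i - \e$ are exactly the two events that alter which critical values $I^\e$ contains.

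For the isomorphism condition, let $I \subseteq J$ be open intervals with $I \cap S' = J \cap S'$. By definition $\SS_\e(\Ffunc)[I \subseteq J] = \Ffunc[I^\e \subseteq J^\e]$, so by the constructibility of $\Ffunc$ it suffices to show $I^\e \cap S = J^\e \cap S$; and since $I^\e \subseteq J^\e$, this reduces to showing no $a_i \in S$ lies in $J^\e \setminus I^\e$. I would argue the contrapositive. Writing $I = (a,b)$ and $J = (c,d)$ with $c \leq a < b \leq d$, one has $J^\e \setminus I^\e = (c-\e, a-\e] \cup [b+\e, d+\e)$, so a critical value $a_i$ in this difference lies in one of the two pieces. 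If $a_i \in (c-\e, a-\e]$ then $a_i + \e \in (c,a] \subseteq J \setminus I$; if $a_i \in [b+\e, d+\e)$ then $a_i - \e \in [b,d) \subseteq J \setminus I$. In either case $S'$ contains a point of $J$ that is not in $I$, contradicting $I \cap S' = J \cap S'$. The vanishing condition is then a direct computation: $\min S' = \min S - \e$ and $\max S' = \max S + \e$, and if $I \subseteq (-\infty, \min S')$ then $I^\e \subseteq (-\infty, \min S)$, forcing $\Ffunc(I^\e) = \emptyset$, with the analogous statement at the right end.

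The only real obstacle is bookkeeping. In the implication $I \cap S' = J \cap S' \Rightarrow I^\e \cap S = J^\e \cap S$ one must track carefully which interval endpoints are open and which shifted value, $a_i + \e$ or $a_i - \e$, witnesses a discrepancy, because a critical value of $\Ffunc$ can land exactly on an endpoint of $I^\e$ even while the endpoints of $I$ avoid $S'$. Pinning down the open versus half-open inclusions in the two cases above is where care is required; once those are correct, the rest of the argument is formal.
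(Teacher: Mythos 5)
Your proof is correct and follows essentially the same route as the paper: both take the critical set $S^\e = (S-\e) \cup (S+\e)$ and reduce the isomorphism condition to showing that $J^\e \setminus I^\e$ misses $S$ whenever $J \setminus I$ misses $S^\e$, with the same vanishing argument at the ends. The only difference is that you spell out the half-open interval bookkeeping for the contrapositive, which the paper's one-line claim leaves implicit.
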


\begin{proof}
Let $S$ be a critical set for~$\Ffunc$. We claim that $S^\e:= (S-\e) \cup (S+\e)$ is a critical set for $\Ffunc\Omega_\e$.
Indeed, if $I \subseteq J$ and $J \setminus I$ does not meet $S^\e$ then $J^\e \setminus I^\e$ does not meet~$S$. Thus $\Ffunc\Omega_\e[I \subseteq J] = \Ffunc[I^\e \subseteq J^\e]$ is an isomorphism. And if $I$ is contained in $(-\infty, \min(S^\e)) \cup (\max(S^\e), +\infty)$ then $I^\e$ is contained in $(-\infty, \min(S)) \cup (\max(S),+\infty)$ so $\Ffunc\Omega_\e(I) = \Ffunc(I^\e) = \emptyset$.
\end{proof}

\begin{prop}
\label{prop:smooth-cosheaf}
The functor $\SS_\e$ carries cosheaves to cosheaves.
\end{prop}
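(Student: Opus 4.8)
The plan is to verify the gluing (universal) property for $\Ffunc\Omega_\e$ directly, by comparing it against the gluing property that $\Ffunc$ already enjoys. Fix an open interval $U$ and a cover $(I_p \mid p \in P)$ of $U$ by open intervals. Since $\Ffunc\Omega_\e(U) = \Ffunc(U^\e)$, $\Ffunc\Omega_\e(I_p) = \Ffunc(I_p^\e)$ and $\Ffunc\Omega_\e(I_p \cap I_q) = \Ffunc((I_p \cap I_q)^\e)$ by definition, what I must show is that $\Ffunc(U^\e)$ is the colimit of $\coprod_{p,q} \Ffunc((I_p \cap I_q)^\e) \rightrightarrows \coprod_p \Ffunc(I_p^\e)$, with cocone maps $\Ffunc[I_p^\e \subseteq U^\e]$.

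Two elementary geometric facts drive everything, and I would record each as a one-line lemma. First, $\Omega_\e$ preserves unions: $U^\e = \bigcup_p I_p^\e$, so $(I_p^\e)$ is a genuine cover of $U^\e$. Second, $\Omega_\e$ preserves \emph{nonempty} pairwise intersections of intervals: if $I_p \cap I_q \neq \emptyset$ then $(I_p \cap I_q)^\e = I_p^\e \cap I_q^\e$, since both are obtained by shifting the two relevant endpoints outward by~$\e$ (the $\max$/$\min$ describing the intersection commutes with the shift). Note the hypothesis is essential: when $I_p \cap I_q = \emptyset$ the left side is empty but $I_p^\e \cap I_q^\e$ need not be.

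Now apply the cosheaf property of $\Ffunc$ to the cover $(I_p^\e)$ of $U^\e$: this already exhibits $\Ffunc(U^\e)$ as the colimit of $\coprod_{p,q} \Ffunc(I_p^\e \cap I_q^\e) \rightrightarrows \coprod_p \Ffunc(I_p^\e)$, with the same cocone $\Ffunc[I_p^\e \subseteq U^\e]$. Using interpretation~(2) of the gluing property, both colimits are computed as $\pi_0$ of a graph on the common vertex set $\coprod_p \Ffunc(I_p^\e)$. By the second lemma the two graphs share exactly the edges indexed by pairs with $I_p \cap I_q \neq \emptyset$, while the genuine cover $(I_p^\e)$ carries \emph{additional} edges, one $\Ffunc(I_p^\e \cap I_q^\e)$ for each pair with $I_p \cap I_q = \emptyset$ but $I_p^\e \cap I_q^\e \neq \emptyset$. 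Adding edges can only merge components, so there is a canonical surjection from the colimit I want onto $\Ffunc(U^\e)$; it remains to prove injectivity, i.e.\ that each extra edge joins vertices already joined by the genuine edges.

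This last point is the main obstacle, and it is precisely where connectedness of $\R$ enters. If $I_p \cap I_q = \emptyset$ yet $I_p^\e \cap I_q^\e \neq \emptyset$, then (say $I_p$ lies to the left) the gap separating $I_p$ from $I_q$ has length $< 2\e$; since both intervals lie in the connected set $U$, this gap lies in $U$ and is therefore covered by some $I_s$ meeting both $I_p$ and $I_q$. I would then check the two containments $I_p^\e \cap I_q^\e \subseteq I_p^\e \cap I_s^\e$ and $I_p^\e \cap I_q^\e \subseteq I_s^\e \cap I_q^\e$, transport a given $x \in \Ffunc(I_p^\e \cap I_q^\e)$ along the structure maps into $\Ffunc(I_s^\e)$, and use functoriality to show that the identification forced by the extra edge $(p,q)$ already follows from the two genuine edges $(p,s)$ and $(s,q)$ routed through $\Ffunc(I_s^\e)$. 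Hence the extra edges create no new identifications, the surjection is a bijection, and $\Ffunc\Omega_\e(U)$ is the required colimit. As $U$ and the cover were arbitrary, $\SS_\e(\Ffunc) = \Ffunc\Omega_\e$ is a cosheaf.
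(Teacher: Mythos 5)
Your reduction is exactly right, and it matches the paper's strategy: both arguments compare the colimit diagram for the cover $(I_p)$ of $U$, transported by $\Omega_\e$, with the colimit diagram that the cosheaf property of $\Ffunc$ supplies for the genuine cover $(I_p^\e)$ of $U^\e$; both observe that the two diagrams differ only in extra terms $\Ffunc(I_p^\e \cap I_q^\e)$ for pairs with $I_p \cap I_q = \emptyset$ but $I_p^\e \cap I_q^\e \neq \emptyset$; and both must show those extra terms force no new identifications. Your two preliminary lemmas ($U^\e = \bigcup_p I_p^\e$, and $(I_p \cap I_q)^\e = I_p^\e \cap I_q^\e$ when $I_p \cap I_q \neq \emptyset$) are correct, and running the comparison through interpretation~(2) (graphs and components) rather than the universal property of interpretation~(3), which the paper uses, is an equivalent reformulation in $\Set$. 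However, the step you yourself flag as ``the main obstacle'' contains a genuine error: it is \emph{not} true that the gap between $I_p$ and $I_q$ is covered by a \emph{single} cover interval $I_s$ meeting both. Take $U = (0,3)$ covered by $(0,1)$, $(0.9,1.5)$, $(1.4,2.1)$, $(2,3)$, with $I_p = (0,1)$, $I_q = (2,3)$ and $\e = 1.1$: then $I_p^\e \cap I_q^\e = (0.9,2.1) \neq \emptyset$, but no interval of the cover meets both $I_p$ and $I_q$. The gap may be covered by arbitrarily many short intervals, none of which spans it.

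The repair is the chain argument the paper uses. Write $I_p = (a,b)$ and $I_q = (c,d)$ with $b \leq c$, so that $J = I_p^\e \cap I_q^\e = (c-\e,\, b+\e)$ and the gap is $K = [b,c]$, of length $< 2\e$ and contained in~$U$. Since $K$ is compact and connected, finitely many cover intervals meeting $K$ form a connected chain $I_p = I_{p_0}, I_{p_1}, \dots, I_{p_n} = I_q$ with each $I_{p_i} \cap I_{p_{i+1}}$ nonempty. The metric fact your single-interval version implicitly relied on still holds link by link: each $I_{p_i}$ contains some point $t \in K$, whence $I_{p_i}^\e \supseteq (t-\e,\, t+\e) \supseteq (c-\e,\, b+\e) = J$. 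Consequently $J \subseteq I_{p_i}^\e \cap I_{p_{i+1}}^\e = (I_{p_i} \cap I_{p_{i+1}})^\e$ for every link, and transporting a given $x \in \Ffunc(J)$ through these genuine edges, functoriality shows that the two endpoints of the extra edge $(p,q)$ lie in the same component of the smaller graph---after $n$ steps rather than your proposed two. With that substitution, your surjection-is-a-bijection conclusion goes through and the remainder of your proof is sound.
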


\begin{proof}
Let $\Ffunc: \Int \to \Set$ be a cosheaf. We show that $\Ffunc\Omega_\e$ is also a cosheaf.

Let $U$ be an open interval covered by open intervals $(I_p \mid a \in A)$. Then $U^\e$ is covered by $(I_p^\e \mid a \in A)$.
We want to show that $\Ffunc\Omega^\e(U)$ is the colimit of the diagram
\[
\tag{$\dagger$}
\coprod_{p,q} \Ffunc \Omega_\e(I_p \cap I_q)
\rightrightarrows
\coprod_p \Ffunc \Omega_\e(I_p),
\]
knowing that $\Ffunc(U^\e)$ is the colimit of the diagram
\[
\tag{$\ddagger$}
\coprod_{p,q} \Ffunc (I^\e_p \cap I^\e_q)
\rightrightarrows
\coprod_p \Ffunc (I^\e_p).
\]
The two diagrams are almost identical, except that ($\ddagger$) has extra terms on the left-hand side whenever $I_p^\e \cap I_q^\e$ is nonempty but $I_p \cap I_q$ is empty. We will show that these extra terms do not affect the colimit.

Let $Z$ be a set, and suppose we are given maps $\zeta_p: \Ffunc\Omega_\e(I_p) = \Ffunc(I_p^\e) \to Z$ for all~$p$, such that
\[
\zeta_p \circ \Ffunc[I_p^\e \cap I_q^\e \subseteq I_p^\e]
=
\zeta_q \circ \Ffunc[I_p^\e \cap I_q^\e \subseteq I_q^\e]
\tag{$\S$}
\]
whenever $I_p \cap I_q \ne \emptyset$. We will show that equation~($\S$) holds in the additional cases where $I_p^\e \cap I_q^\e \ne \emptyset$. Then by the universal property for the colimit of~($\ddagger$) there will be a unique map $\zeta: \Ffunc(U^\e) = \Ffunc\Omega_\e(U) \to Z$ such that
\[
\zeta_p
= \zeta \circ \Ffunc[I_p^\e \subseteq U^\e]
= \zeta \circ \Ffunc\Omega_\e [I_p \subseteq U]
\]
and this will confirm that $\Ffunc\Omega_\e(U)$ satisfies the universal property for the colimit of~($\dagger$).

To this end, let $J = I_p^\e \cap I_q^\e$ be nonempty with $I_p \cap I_q = \emptyset$, so that the two open intervals sandwich between them a nonempty closed interval $K$. Since $K$ is compact and connected and contained in~$U$, we can find a finite connected chain $(I_{p_i})$ of intervals meeting~$K$ which connects the two ends of~$K$; so $I_p = I_{p_0}, I_{p_1}, \dots, I_{p_n} = I_q$ with each $I_{p_i} \cap I_{p_{i+1}}$ nonempty. Now, by metric considerations, each thickened interval $I_{p_i}^\e$ contains~$J$. Then for each $i$ we have the following diagram:
\[
\renewcommand{\labelstyle}{\textstyle}
\xymatrix@R=8ex@C=6ex{
&
\Ffunc(I_{p_i}^\e)
	\ar[dr]^{\zeta_{p_i}}
\\
\Ffunc(J)
	\ar[ur]
	\ar[r]
	\ar[dr]
&
\Ffunc(I_{p_i}^\e \cap I_{p_{i+1}}^\e)
	\ar[u]
	\ar[d]
&
Z
\\
&
\Ffunc(I_{p_{i+1}}^\e)
	\ar[ur]_{\zeta_{p_{i+1}}}
}
\]
The five maps on the left are those assigned by~$\Ffunc$ to the corresponding inclusions of intervals. The two triangles on the left commute since $\Ffunc$ is a functor, and the quadrilateral on the right commutes by~($\S$)
because $I_{p_i} \cap I_{p_{i+1}}$ is nonempty.
It follows that 
\[
\zeta_{p_i} \circ \Ffunc[J \subseteq I_{p_i}^\e]
=
\zeta_{p_{i+1}} \circ \Ffunc[J \subseteq I_{p_{i+1}}^\e]
\]
so, following the chain, we deduce ($\S$) for the pair $p,q$.

In sum, we have shown that if ($\S$) holds for all $p,q$ with $I_p \cap I_q$ nonempty, then it holds for all $p,q$ with $I_p^\e \cap I_q^\e$ nonempty. Thus the extra terms do not affect the colimit, and the proof is complete.
\end{proof}

\begin{rmk}
The proof is not specific to the category $\Set$. If $\Ffunc: \Int \to \mathbf{C}$ is a cosheaf in an arbitrary category~$\mathbf{C}$ (with an initial object), then its smoothing $\Ffunc \Omega_\e$ is a cosheaf, by the same argument.
\end{rmk}

\subsection{Thickening functors}
\label{Sect:thickening}

On the geometric side there is a family of functors $(\TT_\e)$ acting in parallel to the smoothing functors $(\SS_\e)$ that act on the cosheaf side. We study these functors now.

\begin{defn}
For $\e \geq 0$, the thickening functor $\TT_\e: \Rtop \to \Rtop$ is defined as follows.
\begin{itemize}
\item
Let  $(\X,f)$ be an $\R$-space. Then $\TT_\e(\X,f) = (\X_\e,f_\e)$ where $\X_\e = \X \times [-\e,\e]$ and $f_\e(x,t) = f(x) + t$.

\item
Let $\alpha: (\X,f) \to (\Y,g)$ be a morphism. Then $\TT_\e[\alpha]: (\X_\e,f_\e) \to (\Y_\e,f_\e); \,(x,t) \mapsto (\alpha(x),t)$.
\end{itemize}
It is easily confirmed that this is a functor.
\end{defn}

\begin{obs}
\label{obs:thick}
The thickening functor $\TT_\e$ is a pointed endofunctor of $\Rtop$. Indeed, the canonical embedding of $\X$ as the zero section of~$\X_\e$ defines a natural transformation  $\tau^\e: \onefunc_{\Rtop} \tO \TT_\e$.
Schematically we draw the picture
\[
\includegraphics[scale=0.5]{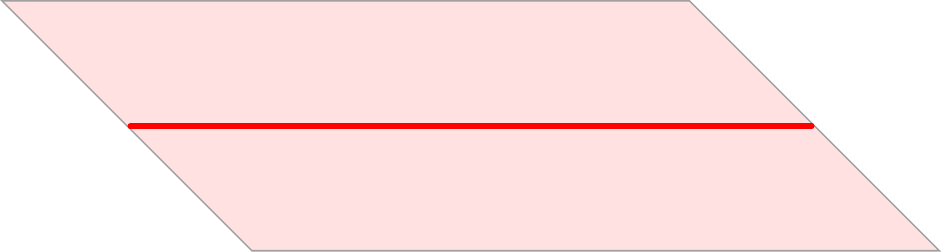}
\]
and formally we define $\tau^\e = (\tau^\e_f)$ by
\[
\tau^\e_f:
(\X,f) \to (\X_\e, f_\e);
\;
x \mapsto (x,0).
\]
Naturality follows trivially from the fomula.
\end{obs}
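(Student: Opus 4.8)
The plan is to unpack the phrase ``pointed endofunctor''---an endofunctor equipped with a natural transformation from the identity functor---and to check its ingredients one at a time. The endofunctor part is already in hand, since the preceding definition records that $\TT_\e$ is a functor $\Rtop \to \Rtop$. So the entire content of the observation is that the family $\tau^\e = (\tau^\e_f)$ assembles into a natural transformation $\onefunc_{\Rtop} \tO \TT_\e$, and this is what I would verify.

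First I would confirm that each component $\tau^\e_f : (\X,f) \to (\X_\e, f_\e)$, defined by $x \mapsto (x,0)$, is a genuine morphism of $\Rtop$. It is continuous, being the inclusion of $\X$ as the zero section of $\X_\e = \X \times [-\e,\e]$, and it respects the projections to the base, since $f_\e(x,0) = f(x) + 0 = f(x)$. Next I would verify naturality: given any morphism $\alpha : (\X,f) \to (\Y,g)$, I would chase the square whose horizontal edges are $\alpha$ and $\TT_\e[\alpha] : (x,t) \mapsto (\alpha(x),t)$ and whose vertical edges are $\tau^\e_f$ and $\tau^\e_g$. A point $x \in \X$ is carried to $(\alpha(x),0)$ along both composites, so the square commutes and $\tau^\e$ is natural.

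I do not anticipate a real obstacle; the only place to be attentive is the bookkeeping of the two coordinates---ensuring that the second-coordinate translation built into $f_\e$ is compatible with the $\Rtop$ projection condition, and that the zero section lands in the fiber over $f(x)$ rather than a shifted value. Once those compatibilities are noted, each verification reduces to a one-line diagram chase, matching the paper's remark that naturality follows trivially from the formula.
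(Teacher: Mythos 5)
Your proposal is correct and follows the paper's own (largely implicit) argument exactly: the paper simply asserts that naturality follows trivially from the formula $x \mapsto (x,0)$, and your verification---that each $\tau^\e_f$ is a morphism in $\Rtop$ since $f_\e(x,0) = f(x)$, and that the naturality square commutes because both composites send $x$ to $(\alpha(x),0)$---is precisely the routine diagram chase being alluded to. Nothing is missing; you have merely made explicit what the paper leaves to the reader.
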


The next theorem gives the precise meaning of `acting in parallel': one can $\e$-thicken before taking the Reeb cosheaf, or $\e$-smooth after taking the Reeb cosheaf, and the result is the same.

\begin{thm}
\label{Thm:CommuteWithC}
We have $ \CC\TT_\e \simeq \SS_\e\CC$. That is, the functors $\CC\TT_\e$, $\SS_\e\CC$ are naturally isomorphic.
\end{thm}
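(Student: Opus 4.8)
The plan is to build an explicit natural isomorphism $\Theta: \CC\TT_\e \tO \SS_\e\CC$ by first producing a comparison map one level below, between $\Top$-valued pre-cosheaves, and then post-composing with $\pi_0$. Unwinding the definitions, for an $\R$-space $f = (\X,f)$ and an open interval $I = (a,b)$ we have $\CC\TT_\e(f)(I) = \pi_0\, f_\e^{-1}(I)$ and $\SS_\e\CC(f)(I) = \pi_0\, f^{-1}(I^\e)$, where $f_\e(x,t) = f(x)+t$ on $\X_\e = \X\times[-\e,\e]$. The natural candidate is the restriction of the projection $p:\X_\e \to \X$, $(x,t)\mapsto x$. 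First I would verify that $p$ carries $f_\e^{-1}(I)$ into $f^{-1}(I^\e)$: if $f(x)+t \in (a,b)$ with $t \in [-\e,\e]$, then $f(x) \in (a-\e,\,b+\e) = I^\e$. These projections are manifestly compatible with interval inclusions $I \subseteq J$ and with the maps $\TT_\e[\alpha]:(x,t)\mapsto(\alpha(x),t)$ induced by a morphism $\alpha$, since $p$ commutes with both $\TT_\e[\alpha]$ and $\alpha$. Hence the family $(p_I)$ is a natural transformation of $\Top$-valued pre-cosheaves, natural both in $I$ and in $f$, and applying the functor $\pi_0$ yields the sought natural transformation $\Theta$ between the two cosheaves.

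It then remains to show $\Theta$ is an isomorphism, i.e.\ that each $p_I: f_\e^{-1}(I) \to f^{-1}(I^\e)$ induces a bijection on path-components. I would prove the stronger statement that $p_I$ is a homotopy equivalence. The key construction is a continuous section: for $x \in f^{-1}(I^\e)$ the set of admissible heights $\{t \in [-\e,\e] : f(x)+t \in (a,b)\}$ equals $(a-f(x),\, b-f(x)) \cap [-\e,\e]$, a nonempty interval. Setting $m(x) = \tfrac12\big(\max(a-f(x),-\e) + \min(b-f(x),\e)\big)$ gives a continuous selection of an interior point, so $q: f^{-1}(I^\e)\to f_\e^{-1}(I)$, $x \mapsto (x, m(x))$, is a well-defined continuous map with $p_I\, q = \mathrm{id}$. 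Moreover $q\, p_I \simeq \mathrm{id}$ via the straight-line homotopy $\big((x,t),u\big) \mapsto (x,\,(1-u)t + u\,m(x))$, which stays inside $f_\e^{-1}(I)$ because the admissible-height set is convex and contains both $t$ and $m(x)$.

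The main obstacle is precisely the nondegeneracy that makes $m$ a valid selection: one must check $\max(a-f(x),-\e) < \min(b-f(x),\e)$ for every $x \in f^{-1}(I^\e)$. This follows from the four inequalities $a-f(x) < \e$, $-\e < b-f(x)$, $a-f(x) < b-f(x)$, and $-\e < \e$, the last needing $\e > 0$ (the case $\e = 0$ being trivial, as $\TT_0$ and $\SS_0$ are identities). Once this is established, $p_I$ is a homotopy equivalence, hence $\pi_0[p_I] = (\Theta_f)_I$ is a bijection, and $\Theta$ is the desired natural isomorphism $\CC\TT_\e \simeq \SS_\e\CC$. I expect the only genuine work to be this elementary continuous-selection argument; the functoriality and the naturality bookkeeping are routine once the projection $p$ is identified as the comparison map.
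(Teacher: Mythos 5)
Your proof is correct and follows essentially the same route as the paper: the comparison map is the projection $p$, its restrictions $p_I : f_\e^{-1}(I) \to f^{-1}(I^\e)$ are shown to be homotopy equivalences via a continuous section and a straight-line homotopy in the $t$-coordinate (exactly the content of Lemma~\ref{Lem:Homotopy}), and naturality in both $I$ and $f$ is checked by applying $\pi_0$ to commuting squares of inclusions and projections. The only difference is cosmetic: your explicit midpoint formula $m(x)$ is just a concrete choice of the paper's auxiliary function $\lambda$ (namely $\lambda(s) = \tfrac12\bigl(\max(a-s,-\e) + \min(b-s,\e)\bigr)$), whose existence the paper asserts via the parallelogram picture.
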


The main part of the proof is understanding the relationship between inverse images of $f$ and $f_\e$.
Let $p: \X_\e = \X \times [-\e, \e] \to \X$ denote the projection onto the first factor.

\begin{lemma}
\label{Lem:Homotopy}
The map~$p$ restricts to a homotopy equivalence $f_\e^{-1}(I) \stackrel{\sim}{\longrightarrow} f^{-1}(I^\e)$ for each interval~$I$.
\end{lemma}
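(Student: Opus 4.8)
The plan is to realize $p|_{f_\e^{-1}(I)}$ as a homotopy equivalence by producing an explicit continuous section together with a fiberwise straight-line homotopy. First I would unwind the definitions. Writing $I = (a,b)$, a point $(x,t) \in \X \times [-\e,\e]$ lies in $f_\e^{-1}(I)$ exactly when $a < f(x) + t < b$, that is, when $t$ lies in the set $J_x = (a - f(x),\, b - f(x)) \cap [-\e, \e]$. A short calculation (the open interval $(a-f(x), b-f(x))$ has positive length $b-a$ and meets $[-\e,\e]$ iff $a - f(x) < \e$ and $b - f(x) > -\e$) shows that $J_x$ is nonempty precisely when $f(x) \in (a-\e, b+\e) = I^\e$. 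Hence $p$ carries $f_\e^{-1}(I)$ \emph{onto} $f^{-1}(I^\e)$, and the fiber of $p$ over each such $x$ is the interval $J_x$, which is convex.

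Next I would build the section. Set $t_-(x) = \max(a - f(x), -\e)$ and $t_+(x) = \min(b - f(x), \e)$; these are continuous in $x$, and for $x \in f^{-1}(I^\e)$ one checks that $t_-(x) < t_+(x)$, so the midpoint $t_0(x) = \tfrac12\bigl(t_-(x) + t_+(x)\bigr)$ lies in $J_x$. Thus $s(x) = (x, t_0(x))$ defines a continuous section $s: f^{-1}(I^\e) \to f_\e^{-1}(I)$ with $p \circ s$ equal to the identity on $f^{-1}(I^\e)$.

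For the reverse composite I would exploit the convexity of the fibers. The straight-line homotopy $H\bigl((x,t), u\bigr) = \bigl(x,\, (1-u)t + u\,t_0(x)\bigr)$ interpolates between the identity (at $u=0$) and $s \circ p$ (at $u=1$); since both $t$ and $t_0(x)$ belong to the convex set $J_x$, every intermediate value does too, so $H$ takes values in $f_\e^{-1}(I)$ and is continuous. Therefore $s \circ p$ is homotopic to the identity, and together with $p \circ s = {}$identity this proves that $p$ restricts to a homotopy equivalence.

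The main obstacle---really the only delicate point---is verifying that the chosen section lands inside $f_\e^{-1}(I)$ and varies continuously; this rests on the strict inequality $t_-(x) < t_+(x)$ over $f^{-1}(I^\e)$, which uses both that $I$ is nonempty ($a<b$) and that $x$ maps \emph{strictly} inside the open interval $I^\e$. The degenerate case $\e = 0$, where $p$ is already a homeomorphism, is subsumed, since then $J_x$ collapses to the single point $\{0\}$. Everything else is formal, flowing from the convexity of the fibers.
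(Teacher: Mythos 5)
Your proposal is correct and takes essentially the same approach as the paper: your midpoint section $s(x) = (x, t_0(x))$ is precisely an explicit instance of the paper's auxiliary function $\lambda$ (since $t_0(x)$ depends on $x$ only through $f(x)$, its graph is the midline of the parallelogram the paper draws), and your fiberwise straight-line homotopy justified by convexity of the fibers $J_x$ is the paper's concluding step verbatim. You are slightly more careful than the paper in two respects---verifying surjectivity of $p$ onto $f^{-1}(I^\e)$ directly and flagging the degenerate case $\e = 0$, where the strict inequality $t_-(x) < t_+(x)$ fails but the midpoint still lands in $J_x = \{0\}$---but these are refinements, not a different argument.
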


\begin{proof}
Let $p_I$ denote the restriction of~$p$ to $f_\e^{-1}(I)$. Then $p_I$ carries $f_\e^{-1}(I)$ into $f^{-1}(I^\e)$ because $f(x) + t \in I$ implies $f(x) \in I^\e$.

To define a homotopy inverse $q_I: f^{-1}(I^\e) \to f_\e^{-1}(I)$, we select a continuous function $\lambda: I^\e \to [-\e,\e]$ such that $s + \lambda(s) \in I$ for any $s \in I^\e$. For instance, if we write $I = (a,b)$ then $\lambda$ can be any continuous function on $(a-\e,b+\e)$ whose graph lies in the interior of the parallelogram shown here:
\[
\includegraphics[scale=0.5]{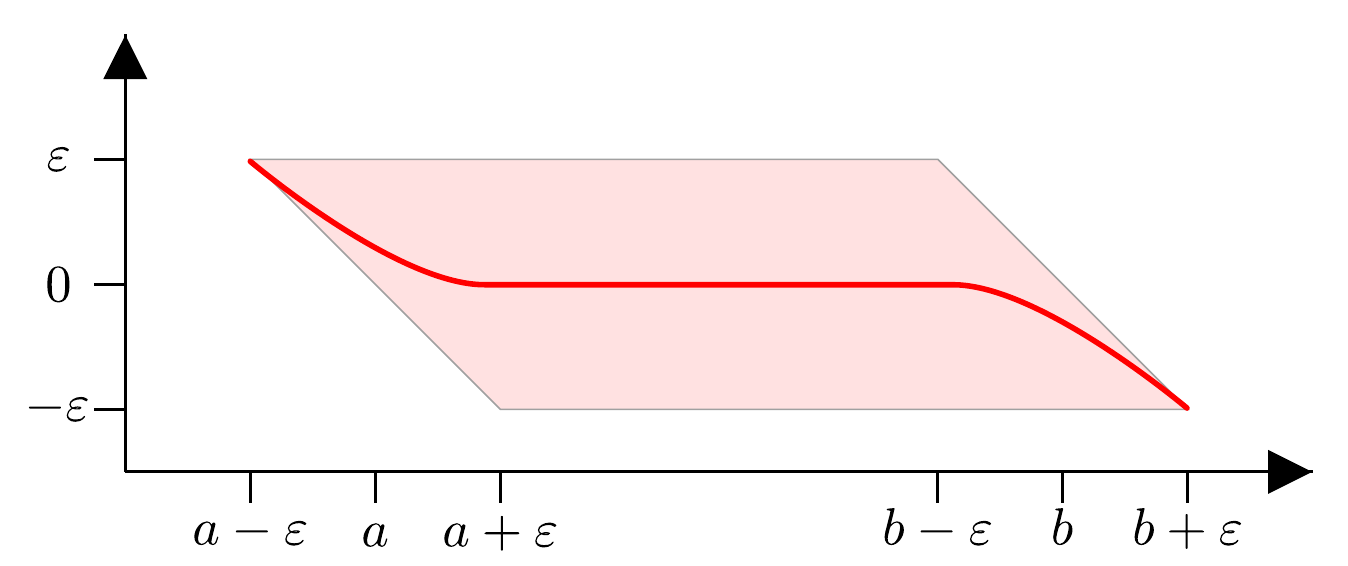}
\]
There is no problem choosing such a function for any given interval~$I$. We set $q_I(x) = (x,\lambda(f(x)))$. Then $q_I$ carries $f^{-1}(I^\e)$ into $f_\e^{-1}(I)$, since $f(x) \in I^\e$ implies $f(x) + \lambda(f(x)) \in I$ by the condition on~$\lambda$.

Certainly $p_I q_I$ is equal to the identity on $f^{-1}(I^\e)$. In the other direction, $q_I p_I(x,t) = (x, \lambda(f(x)))$ and this function is homotopic to the identity on $f_\e^{-1}(I)$ via linear interpolation in the $t$-coordinate. This works because for any fixed~$x$ the set $f_\e^{-1}(I)$ meets the fiber $\{x\} \times [-\e,\e]$ in an interval.
\end{proof}

\begin{proof}[Proof of Theorem~\ref{Thm:CommuteWithC}]
We will define a natural transformation $\rho: \CC\TT_\e \tO \SS_\e\CC$ and show that $\rho_f$ is an isomorphism for each object $f = (\X,f)$ in $\Rtop$.

Expanding the definitions,  $\CC\TT_\e(f)$ and $\SS_\e\CC(f)$ are pre-cosheaves which evaluate on intervals and morphisms as follows:
\begin{alignat*}{4}
&\big[ \CC \TT_\e(f) \big] (I) &&= \pi_0 f_\e^{-1}(I)
&&\big[ \CC \TT_\e(f) \big][I \subseteq J]
&&= \pi_0[ f_\e^{-1}(I) \subseteq f_\e^{-1}(J) ]
\\
&\big[ \SS_\e\CC(f) \big] (I) &&= \pi_0 f^{-1}(I^\e)
\qquad
&&\big[ \SS_\e\CC(f) \big][I \subseteq J]
&&= \pi_0[ f^{-1}(I^\e) \subseteq f^{-1}(J^\e) ]
\end{alignat*}
We define the natural transformation $\rho_f: \CC \TT_\e(f) \tO \SS_\e\CC(f)$ by the formula $(\rho_f)_I = \pi_0[p_I]$. Here $p_I$ is the map defined in Lemma~\ref{Lem:Homotopy}. Since it is a homotopy equivalence, it follows that $(\rho_f)_I$ is an isomorphism.
Applying $\pi_0$ to the left square of the commutative diagram
\[
\begin{diagram}
\dgARROWLENGTH=2em
\node{f_\e^{-1}(I)}
	\arrow{e}
	\arrow{s,l}{p_I}
\node{f_\e^{-1}(J)}
	\arrow{s,l}{p_J}
	\arrow{e}
\node{\X \times [-\e,\e]}
	\arrow{s,r}{p}
\\
\node{f^{-1}(I^\e)}
	\arrow{e}
\node{f^{-1}(J^\e)}
	\arrow{e}
\node{\X}
\end{diagram}
\]
we confirm that $\rho_f$ is a natural transformation; that is, a morphism of pre-cosheaves. Since each $(\rho_f)_I$ is an isomorphism it follows that $\rho_f$ is an isomorphism of pre-cosheaves.

To finish we must show that the family of pre-cosheaf isomorphisms $\rho = (\rho_f)$ is natural with respect to morphisms in $\Rtop$. In fact, for any morphism $\alpha: (\X,f) \to (\Y,g)$ we have a commutative diagram
\[
\begin{diagram}
\dgARROWLENGTH=2em
\node{f_\e^{-1}(I)}
	\arrow{e,t}{\alpha \times \mathbb{1}}
	\arrow{s,l}{p_I^f}
\node{g_\e^{-1}(I)}
	\arrow{s,r}{p_I^g}
\\
\node{f^{-1}(I^\e)}
	\arrow{e,t}{\alpha}
\node{g^{-1}(I^\e)}
\end{diagram}
\]
for each interval~$I$, to which we can apply $\pi_0$ to get the required naturality condition.
\end{proof}

The thickening functors $\TT_\e$ preserve constructibility. We give a simpler result first, since we can state its proof more briskly and it is all we need for the topological smoothing of Reeb graphs.

\begin{prop}
\label{prop:thick-graph}
If $(\X,f) \in \Reeb$ then $\TT_\e(\X,f) \in \Rtopc$.
\end{prop}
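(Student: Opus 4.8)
The plan is to reduce the claim to the Example (stated earlier, item (ii)) that a compact polyhedron carrying a piecewise-linear function lies in $\Rtopc$. The only content is to check that $\TT_\e$ preserves the hypotheses of that example, namely compactness and the polyhedral PL structure.

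First I would record that an $\R$-graph $(\X,f)$ already satisfies those hypotheses. By definition $\X$ is a compact $1$-dimensional polyhedron, and in its presentation (Notation~\ref{notn:Reeb-comb}) the map $f$ restricts on each edge $E \times [a_i,a_{i+1}]$ to the projection onto the second factor, which is affine; hence $f$ is PL. Next I would observe that $\TT_\e$ stays inside this world. The total space $\X_\e = \X \times [-\e,\e]$ is compact, being a product of compact spaces, and it is a polyhedron, since the product of a polyhedron with an interval is again a polyhedron: one triangulates each product cell $E \times [-\e,\e]$. The function $f_\e(x,t) = f(x)+t$ is the sum of $f \circ \mathrm{pr}_1$ and $\mathrm{pr}_2$, each of which is PL, so $f_\e$ is PL; concretely, on each product cell it is affine, being a sum of two affine functions. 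Applying the polyhedron Example then gives $(\X_\e,f_\e) \in \Rtopc$, which is exactly the assertion.

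I do not expect a serious obstacle here, and this is precisely why the result can be stated so briskly: the entire force of the statement is carried by the fact that an $\R$-graph is a genuine PL object, so that the standard machinery of PL functions on compact polyhedra applies verbatim to the thickening. For a general constructible $\R$-space the total space need not be a polyhedron and the fibers $\V_i,\E_i$ are arbitrary compact locally path-connected spaces, which is exactly why the full statement that $\TT_\e$ preserves $\Rtopc$ requires the more careful argument given afterward.

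If one preferred an entirely self-contained argument, I would instead build the constructible presentation of $(\X_\e,f_\e)$ by hand: take the critical set $(S-\e)\cup(S+\e)$, identify the fiber $f_\e^{-1}(c)$ with the subgraph $f^{-1}([c-\e,c+\e])$ of $\X$ via $(x,t)\mapsto x$, and note that as $c$ ranges over an interval between two consecutive critical values neither $c-\e$ nor $c+\e$ crosses a vertex of $\X$, so these subgraphs form a trivial cylindrical family. The critical fibers are then the subgraphs sitting over the critical values, with the attaching maps given by the evident inclusions. The one point needing verification on this route is the cylindrical structure between consecutive critical values, but for a graph this is an elementary sliding homeomorphism, so even the self-contained version is short.
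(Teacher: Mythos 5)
Your proof is correct and is essentially the paper's own argument: the paper likewise disposes of the proposition in two lines by noting that an $\R$-graph is a piecewise linear function on a compact 1-dimensional polyhedron, that $\X_\e = \X \times [-\e,\e]$ is a compact polyhedron, and that $f_\e$ is piecewise linear, hence $(\X_\e,f_\e) \in \Rtopc$ by the PL example. Your sketched self-contained alternative (critical set $(S-\e)\cup(S+\e)$, fibers identified with $f^{-1}[c-\e,c+\e]$, sliding cylindrical structure) is precisely the strategy the paper reserves for the general Theorem~\ref{thm:thick-R-space}, so your assessment of why the brisk proof suffices here is also on target.
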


\begin{proof}
An $\R$-graph can be represented as a piecewise linear function~$f$ on a compact 1-dimensional polyhedron~$\X$. By construction, $\X_\e$ is a compact polyhedron and $f_\e$ is piecewise linear, so $\TT_\e(\X,f) = (\X_\e, f_\e)$ is a constructible $\R$-space.
\end{proof}

Here is the full result.

\begin{thm}
\label{thm:thick-R-space}
If $(\X,f) \in \Rtopc$, then $\TT_\e(\X,f) \in \Rtopc$.
%
\end{thm}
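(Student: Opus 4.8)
The plan is to exhibit $(\X_\e, f_\e)$ directly in the standard constructible form, using the critical set
\[
S^\e = (S-\e)\cup(S+\e)
\]
already singled out in Proposition~\ref{prop:smooth-constructible}. That this is the right set is no accident: by Theorem~\ref{Thm:CommuteWithC} the Reeb cosheaf $\CC\TT_\e(\X,f)\simeq\SS_\e\CC(\X,f)$ is constructible with critical set $S^\e$, so we know in advance where the critical fibers of $f_\e$ must sit. Constructibility of the cosheaf is not by itself enough, however; we must produce the geometric cylindrical structure. The starting point is the homeomorphism
\[
f_\e^{-1}(b)=\{(x,t)\in\X_\e : f(x)+t=b\}\;\cong\; f^{-1}([b-\e,\,b+\e]),
\]
given by $(x,\,b-f(x))\mapsto x$. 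The homeomorphism type of $f^{-1}([b-\e,b+\e])$ can only change when a window endpoint $b\pm\e$ meets a value of $S$, that is, when $b$ meets $S^\e$; this confirms that the critical values of $f_\e$ lie in $S^\e$.

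First I would treat a regular interval $(b_j,b_{j+1})$ between consecutive elements of $S^\e$. For $b$ in this interval neither endpoint $b\pm\e$ is a value of $S$, and the set of critical values of $f$ lying strictly inside $(b-\e,b+\e)$ is a fixed collection, independent of $b$. Fixing a reference value $c\in(b_j,b_{j+1})$ and setting $\E_j'=f^{-1}([c-\e,c+\e])$, I would build a homeomorphism $f_\e^{-1}((b_j,b_{j+1}))\cong \E_j'\times(b_j,b_{j+1})$ over the interval by \emph{sliding the window}: the interior critical fibers $\V_i$ and the complete cylinders between them stay fixed, while the two end cylinders $\E_i\times[\cdot]$ are reparametrised by the affine maps of intervals that move the non-critical endpoint $b\pm\e$ and fix the nearest critical value of $f$. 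These reparametrisations are the identity on the $\E_i$ factor, agree on overlaps, and depend continuously (indeed homeomorphically) on $b$, so they patch together to the required product structure and simultaneously define the edge space $\E_j'$.

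Next I would set $\V_j'=f_\e^{-1}(b_j)\cong f^{-1}([b_j-\e,\,b_j+\e])$, where now exactly one window endpoint equals some $a_i\in S$, and read off the attaching maps $\mathbbm{l}_j',\mathbbm{r}_j'$ as the limits of the sliding homeomorphism as $b\to b_j^+$ and $b\to b_{j+1}^-$. In this limit the cylinder at the endpoint that reaches the critical value collapses onto the corresponding critical fiber $\V_i$ through the original attaching map $\mathbbm{l}_i$ or $\mathbbm{r}_i$, while the opposite end is merely reparametrised within a cylindrical region; transported through the product structure, this collapse is the new attaching map. It then remains to check that every $\V_j'$ and $\E_j'$ is compact and locally path-connected. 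Compactness is immediate, since each is the preimage under $f$ of a closed interval, hence a closed subset of the compact space $\X$.

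The step I expect to be the main obstacle is the local path-connectedness of the fibers $f^{-1}([u,v])$ (together, to a lesser extent, with the continuity of the sliding homeomorphism where its patching pieces meet the fixed critical fibers). Such a fiber is glued from the locally path-connected pieces $\E_i\times[\cdot]$ and $\V_i$ along the attaching maps, and the delicate point is a neighbourhood of a point $v$ lying on a critical fiber $\V_i$, where several cylinder ends are identified via $\mathbbm{l}_i,\mathbbm{r}_i$. I would handle this by producing, around each such $v$, a neighbourhood basis of path-connected sets: given a path-connected neighbourhood $N$ of $v$ in $\V_i$ (available since $\V_i$ is locally path-connected), a point $(e,a_i+s)$ of an adjacent cylinder lying near $v$ has $\mathbbm{l}_i(e)\in N$ by continuity of the attaching map, so the straight path $s\mapsto 0$ reaches $\mathbbm{l}_i(e)\in N$ and thence $v$ within the neighbourhood. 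The openness of path-components in the locally path-connected pieces $\V_i,\E_i$ is what makes these neighbourhoods open, completing the verification that $\V_j',\E_j'$ are admissible fibers and hence that $\TT_\e(\X,f)\in\Rtopc$.
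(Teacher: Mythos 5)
Your proposal is correct and follows essentially the same route as the paper's proof: the reparametrization identifying $f_\e^{-1}(b)$ with the window preimage $f^{-1}[b-\e,b+\e]$, the critical set $S^\e=(S-\e)\cup(S+\e)$, the product structure over regular intervals built by affinely stretching the two end cylinders while fixing the interior (the paper's maps \textsc{l}, \textsc{m}, \textsc{r}), and the attaching maps obtained as the endpoint limits where an end cylinder collapses onto a critical fiber. Your explicit mapping-cylinder argument for local path-connectedness fleshes out what the paper asserts in one line, and the one step you leave implicit---that the glued comparison map is a homeomorphism, which the paper dispatches via the quotient principle plus the compact-to-Hausdorff criterion---is routine and does not affect correctness.
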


\begin{proof}
It will be helpful to reparametrize~$(\X_\e,f_\e)$.
\begin{figure}
\centering
 \includegraphics[width=.3\textwidth]{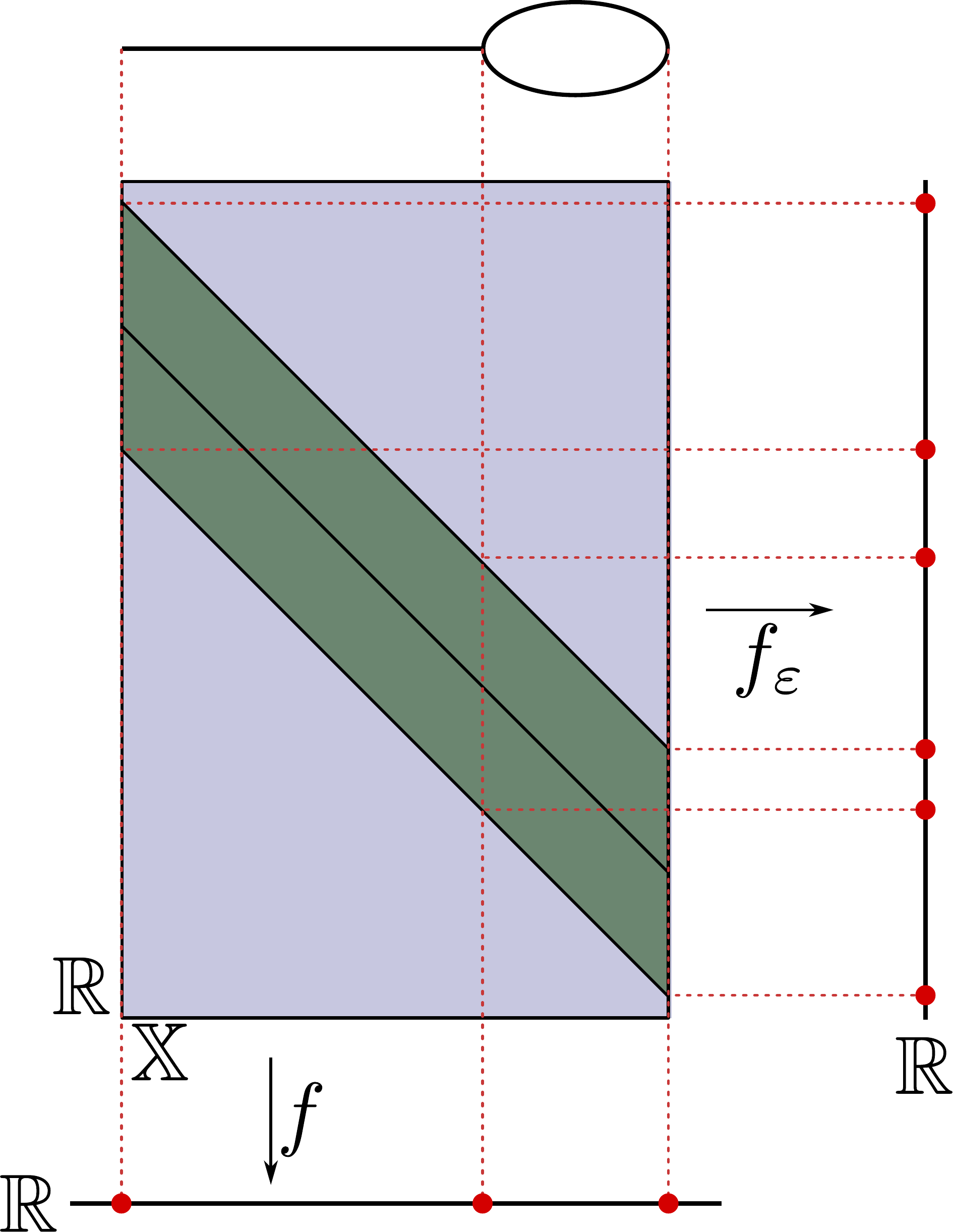}
 \caption[Critical points of a thickening]{The thickened space $\X_\e$ reparametrized as a subspace (green band) of $\X \times \R$ (tall rectangle). The map~$f_\e$ becomes the projection onto the factor~$\R$. In this example the three critical values of $(\X,f)$ give rise to six critical values of $(\X_\e,f_\e)$. The fibers at and between the critical values are illustrated in Figure~\ref{Fig:IntermediateSets}.
}
\label{Fig:Reparametrization}
\end{figure}
Consider the $\R$-space defined as follows:
\[
\tilde{\X}_\e
=
\{ (x,u) \in \X \times \R \mid  u - \e \leq f(x) \leq u + \e \},
\quad
\tilde{f}_\e(x,u) = u.
\]
We claim that $(\X_\e, f_\e)$ is isomorphic to $(\tilde{\X}_\e, \tilde{f}_\e)$. 
An inverse pair of morphisms defined as follows:
\begin{alignat*}{5}
(\X_\e,f_\e) &\to (\tilde{\X}_\e, \tilde{f}_\e);\;\;
&&(x,t) &&\mapsto (x, t + f(x))
\\
(\tilde{\X}_\e, \tilde{f}_\e) &\to (\X_\e,f_\e);\;\;
&&(x,u) &&\mapsto (x, u - f(x))
\end{alignat*}
See Figure~\ref{Fig:Reparametrization}.
For the rest of the proof, we drop the tildes and write $(\X_\e,f_\e)$ to mean $(\tilde{\X}_\e,\tilde{f}_\e)$.

\medskip
{\it Step 1: Compact locally path-connected fibers.}
Notice that (in the new coordinates) we have
\[
f_\e^{-1}(u) = f^{-1}[u-\e,u+\e] \times \{ u \}.
\]
Each point of $f^{-1}[u-\e,u+\e]$ has a neighborhood which looks like a cylinder on some~$\E_i$ (in the non-critical fibers) or a mapping cylinder to some~$\V_i$ (in the critical fibers). Since the $\V_i, \E_i$ are locally path-connected, the same is true for these neighborhoods. Thus each fiber is locally path-connected; and compact because $\X$ is compact.

\medskip
{\it Step 2: Critical set.}
We define $S_\e= \{ a+\e, a-\e \mid a \in S \}$.
These are precisely the values of~$u$ where one of the endpoints of $[u-\e,u+\e]$ meets the critical set~$S$.
Away from these values, we find that the fibers of~$\X_\e$ are locally constant in topological type. Write $S_\e = \{ b_0, b_1, \dots, b_m\}$ in increasing order.

\medskip
{\it Step 3: Critical fibers.}
We set $\W_k = f^{-1}[b_k-\e,b_k+\e]$ and note that $\W_k \times \{b_k\} = f_\e^{-1}(b_k)$.

\medskip
{\it Step 4: Non-critical fibers.}
We set $\F_k = f^{-1}(b)$ for some $b \in (b_k,b_{k+1})$. This fiber takes the following form. If $S$ meets $[b-\e,b+\e]$ in a nonempty set $\{a_i, a_{i+1}, \dots, a_j\}$ then
\[
\F_k
=
\E_{i-1} \times [b-\e, a_i]
\;\cup\;
f^{-1}[a_i, a_j]
\;\cup\;
\E_{j} \times [a_j,b+\e]
\]
If $S$ does not meet $[b-\e,b+\e]$ then simply $\F_k = \E_i \times [b-\e,b+\e]$ for some~$i$.

\medskip
{\it Step 5: Cylindrical structure maps.}
We define $\alpha_k: \F_k \times [b_k, b_{k+1}] \to \X_\e$ as follows.
First we define a map $\alpha_k^u: \F_k \to f^{-1}[u-\e,u+\e]$ for each $u \in [b_k,b_{k+1}]$.
If $S$ meets $[b-\e,b+\e]$ we use the following diagram:
\[
\begin{array}{rcccccccl}
\F_k
&=
& \E_{i-1} \times [b-\e, a_i]
& \cup
& f^{-1}[a_i, a_j]
& \cup
& \E_{j} \times [a_j,b+\e]
\\
&&
\text{\sc l}\big\downarrow
&&
\text{\sc m}\big\downarrow
&&
\text{\sc r}\big\downarrow
&&
\phantom{\Big\downarrow}
\\
&
& \E_{i-1} \times [u-\e, a_i]
& \cup
& f^{-1}[a_i, a_j]
& \cup
& \E_{j} \times [a_j,u+\e]
& \longrightarrow
& f^{-1}[u-\e,u+\e]
\end{array}
\]

The map \textsc{m} is the identity and each map \textsc{l} and~\textsc{r} is the homeomorphism defined by linearly stretching the second factor of the domain onto the second factor of the codomain. If $S$ does not meet $[b-\e,b+\e]$ then we use the diagram
\[
\begin{array}{ccccccc}
\F_k
& =
& \E_i \times [b-\e,b+\e]
& \stackrel{\textsc{m}}{\longrightarrow}
& \E_i \times [u-\e,u+\e]
& \longrightarrow
& f^{-1}[u-\e,u+\e]
\end{array}
\]
where \textsc{m} is the homeomorphism defined by linearly translating the second factor.

Then $\alpha_k(\xi,u) = (\alpha_k^u(\xi), u)$ is the required cylindrical structure map. It is continuous because the coefficients of the stretches or translations are continuous in~$u$.

Note that $\alpha_k^u$ is a homeomorphism when $u \in (b_k,b_{k+1})$. We interpret the two endpoint cases as attaching maps $\alpha_k^{b_k}: \F_k \to \W_k$ and $\alpha_k^{b_{k+1}}: \F_k \to \W_{k+1}$. These need not be homeomorphisms.

\medskip
{\it Step 6: Constructibility.}
We  now have maps from the spaces $\W_k \times \{b_k\}$ and $\F_k \times [b_k,b_{k+1}]$ to~$\X_\e$
which respect the attaching maps $\alpha_k^{b_k}, \alpha_k^{b_{k+1}}$. By the quotient principle this induces a continuous map from the constructible $\R$-space built from the $\W_k, \F_k$ and the corresponding attaching maps. This map is a bijection on each fiber and therefore a bijection. Since the domain is compact and the codomain is Hausdorff, the map is a homeomorphism.

\medskip
This completes the proof that $(\X_\e, f_\e)$ is constructible.
See Figure~\ref{Fig:IntermediateSets} for an example.
\end{proof}

\begin{figure}
\centerline{
\includegraphics[width=0.6\textwidth]{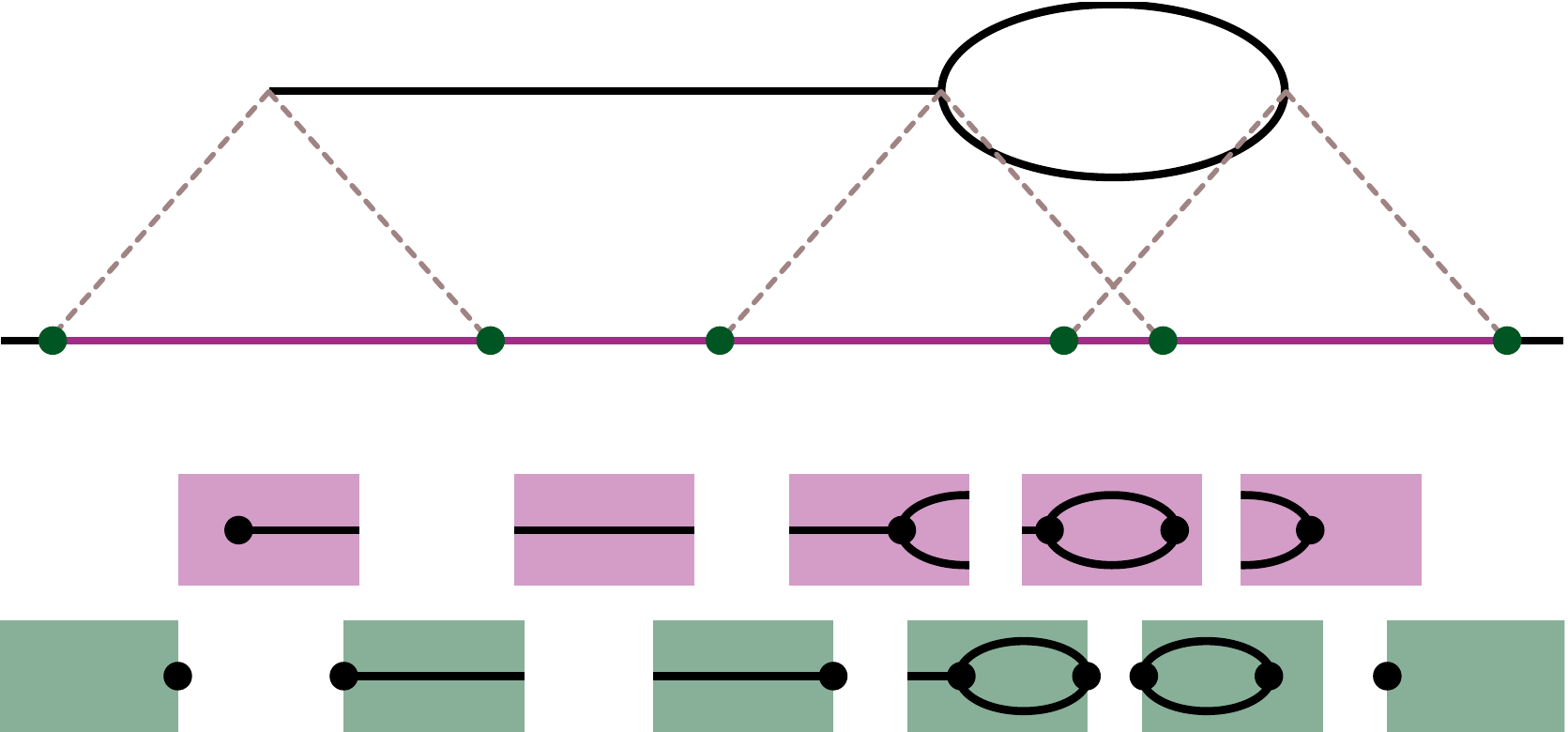}
}
 \caption[Critical and non-critical fibers of a thickening]{The critical fibers $\W_k$ (green) and non-critical fibers $\F_k$ (purple) of the thickened $\R$-space $(\X_\e,f_\e)$ in the example of Figure~\ref{Fig:Reparametrization}. The fibers are homeomorphic to interlevelsets $f^{-1}[a-\e,a-\e]$.
}
 \label{Fig:IntermediateSets}
\end{figure}

\begin{cor}
It follows from the proof that if $f$ has critical set~$S$ then $f_\e$ has critical set $S_e = (S-\e) \cup (S + \e) = \{ a-\e, a+\e \mid a \in S\}$.
\end{cor}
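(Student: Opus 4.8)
The plan is to extract the critical set directly from the constructible presentation of $(\X_\e, f_\e)$ that was manufactured in the proof of Theorem~\ref{thm:thick-R-space}; essentially all the work has already been done there, and the corollary merely records which set of values that construction used as its critical set.

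First I would recall that in Step~2 of that proof the set of critical values was \emph{defined} to be $S_\e = \{a - \e, a + \e \mid a \in S\}$, these being exactly the values of~$u$ at which an endpoint of the sliding interval $[u-\e, u+\e]$ crosses the critical set~$S$. Writing $S_\e = \{b_0, b_1, \dots, b_m\}$ in increasing order, Steps~3--6 then exhibited compact, locally path-connected critical fibers $\W_k$ over each~$b_k$ together with cylinder fibers $\F_k$ over each open interval $(b_k, b_{k+1})$, and showed that the structure maps $\alpha_k^u$ are homeomorphisms for $u \in (b_k, b_{k+1})$. This is precisely a constructible presentation in the sense of Section~\ref{sec:rtopc} whose critical set is $S_\e$, so the claimed formula holds with $S_\e = (S-\e) \cup (S+\e)$.

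The one point I would make explicit is that nothing outside $S_\e$ can be forced to be a critical value: away from $S_\e$ the interval $[u-\e, u+\e]$ meets~$S$ in a \emph{fixed} subset, so the fiber $f_\e^{-1}(u)$ is locally constant in topological type, which is exactly why the maps $\alpha_k^u$ can be taken to be homeomorphisms on each open interval between consecutive~$b_k$. I do not expect any real obstacle here: I am not claiming $S_\e$ is minimal (by Remark~\ref{rmk:splitting} that is never required), only that it serves as a valid critical set for $f_\e$, and this is immediate from the construction already carried out in Theorem~\ref{thm:thick-R-space}.
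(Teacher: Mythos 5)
Your proposal is correct and follows exactly the paper's route: the corollary is, as stated, immediate from the proof of Theorem~\ref{thm:thick-R-space}, where Step~2 defines $S_\e = \{a-\e, a+\e \mid a \in S\}$ and Steps~3--6 exhibit a constructible presentation of $(\X_\e, f_\e)$ with precisely that critical set. Your added remark that minimality is never required (Remark~\ref{rmk:splitting}) is a sensible clarification but does not change the argument.
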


\begin{rmk}
The thickened space $(\X_\e,f_\e)$ can be thought of as giving a natural topology on the family of `sliding windows' on $(\X,f)$ of width~$2\e$.
\end{rmk}

\subsection{Topological smoothing of $\R$-graphs}
\label{Sect:smooth-Reeb}

We are now in a position to define a semigroup $(\UU_\e)$ of `topological smoothing' functors on $\R$-graphs. We then use these functors to reinterpret the Reeb distance in a more purely geometric way.
There are two reasonable ways to define this semigroup:
\begin{itemize}
\item
Use the thickening functors $(\TT_\e)$ followed by a projection onto $\Reeb$.

\item
Transfer the smoothing functors $(\SS_\e)$ to~$\Reeb$ using the equivalence of categories.
\end{itemize}
%
%
We favour the first method, which gives the following explicit definition, for all $\e \geq 0$:
\begin{defn}
Define the \emph{Reeb smoothing functor} $\UU_\e : \Reeb \to \Reeb$ by $\UU_\e = \RR\TT_\e$.
\end{defn}

Given an $\R$-graph $f = (\X,f)$, it follows that the fiber of $\UU_\e f$ over $t \in \R$ can be identified with the set of connected components of $f^{-1}[t-\e,t+\e]$. If the vertices of the original graph occur over a critical set~$S$, then the vertices of the smoothed graph occur over the set $S_\e = (S + \e) \cup (S - \e)$.
These facts follow from Theorem~\ref{thm:thick-R-space} and its Corollary.

\begin{ex}
To better understand this construction, consider the examples of Figure \ref{Fig:SmoothedReebAll}.  
The initial $\R$-graph $\X$ is given in column $(a)$ with the function $f$ implied by the height.  
In order to visualize the space $\X \times [-\e,\e]$ in column $(b)$, $\X$ is redrawn with a $[-\e,\e]$ interval added at each point.
Since these intervals are drawn vertically, the function $f_\e$ can still be visualized  as the height function of this new space.
The Reeb graph of this space is overlaid in column $(c)$ and drawn by itself in column $(d)$.

\begin{itemize}
\item[(1)]
Here the $\R$-graph is a line. It gets stretched by $\e$ in both directions.

\item[(2)]
The up-fork in this example gets pushed up by~$\e$. (A down-fork would get pushed down by~$\e$.)

\item[(3)]
This example has a loop with height $(b-a) \leq 2\e$. After thickening, every levelset has only one connected component so the resulting Reeb graph has gotten rid of the loop entirely.

\item[(4)]
This is a more complicated mix of the ingredients above. Note that the height of the loop shrinks by $2\e$. There is interesting behavior on the right side where a down-fork interacts with an up-vee.

\end{itemize}
\end{ex}

It is easiest to access the properties of $(\UU_\e)$ by comparing these functors with $(\SS_\e)$.

\begin{prop}
\label{prop:U=S}
The functors $\CC''\UU_\e$ and $\SS_\e \CC'': \Reeb \to \Cshc$ are naturally isomorphic.
\end{prop}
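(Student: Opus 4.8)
The plan is to obtain the desired natural isomorphism purely formally, by composing the two natural isomorphisms established earlier in the paper; no new geometric input should be required. Unwinding the definitions, $\CC''\UU_\e = \CC''\RR\TT_\e$, where for $f \in \Reeb$ the thickening $\TT_\e f$ lies in $\Rtopc$ by Proposition~\ref{prop:thick-graph}, so that $\RR$ and then $\CC''$ may legitimately be applied. The target is $\SS_\e\CC''$, and since $\CC'' = \CC|_\Reeb$ the two functors we must compare are $\CC\RR\TT_\e$ and $\SS_\e\CC$, both restricted to $\Reeb$.

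First I would invoke Theorem~\ref{Thm:CommutativeTriangle}, which gives a natural isomorphism $\CC''\RR \simeq \CC'$ of functors $\Rtopc \to \Cshc$. Precomposing with $\TT_\e \colon \Reeb \to \Rtopc$ yields a natural isomorphism $\CC''\RR\TT_\e \simeq \CC'\TT_\e$ of functors $\Reeb \to \Cshc$. Next I would invoke Theorem~\ref{Thm:CommuteWithC}, the natural isomorphism $\CC\TT_\e \simeq \SS_\e\CC$ of functors $\Rtop \to \Csh$. Restricting the domain to $\Reeb$ gives $\CC\TT_\e|_\Reeb \simeq \SS_\e\CC|_\Reeb$; since $\CC|_\Reeb = \CC''$ and $\CC'\TT_\e = \CC\TT_\e|_\Reeb$ (because $\CC' = \CC|_{\Rtopc}$ and the image of $\TT_\e|_\Reeb$ lies in $\Rtopc$), this reads $\CC'\TT_\e \simeq \SS_\e\CC''$. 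Composing the two isomorphisms gives
\[
\CC''\UU_\e = \CC''\RR\TT_\e \;\simeq\; \CC'\TT_\e \;\simeq\; \SS_\e\CC'',
\]
as required.

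The one thing to check is that each restriction of a previously established isomorphism is legitimate, i.e. that the codomain is genuinely $\Cshc$ throughout. For $f \in \Reeb$ the space $\TT_\e f$ is constructible (Proposition~\ref{prop:thick-graph}), so $\CC\TT_\e f$ is a constructible cosheaf (Proposition~\ref{prop:Cf-constructible}); and $\CC'' f = \CC f$ is constructible, with $\SS_\e$ preserving both constructibility and the cosheaf condition (Theorem~\ref{thm:S-restrictions}), so $\SS_\e\CC'' f \in \Cshc$ as well. I expect the main obstacle to be purely bookkeeping: tracking the domains and codomains carefully enough that each restriction is valid, rather than any substantive new argument. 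Since both $\CC''\UU_\e$ and $\SS_\e\CC''$ are honest functors $\Reeb \to \Cshc$, and the composite of natural isomorphisms is again a natural isomorphism, the conclusion follows without further work.
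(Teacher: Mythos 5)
Your proposal is correct and is essentially identical to the paper's proof, which likewise obtains the chain $\CC''\RR\TT_\e \simeq \CC'\TT_\e \simeq \SS_\e\CC''$ by combining Theorem~\ref{Thm:CommutativeTriangle} (precomposed with $\TT_\e$) and Theorem~\ref{Thm:CommuteWithC} (restricted to $\Reeb$). Your additional bookkeeping that the restrictions land in $\Cshc$ --- via Propositions \ref{prop:thick-graph} and~\ref{prop:Cf-constructible} and Theorem~\ref{thm:S-restrictions} --- is left implicit in the paper but is a valid and welcome verification.
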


\begin{proof}
We have $\CC'' \RR \TT_\e \simeq \CC' \TT_\e \simeq \SS_\e \CC''$ by Theorems \ref{Thm:CommutativeTriangle} and~\ref{Thm:CommuteWithC}.
\end{proof}

This implies, in particular, that the functors $(\hat\UU_\e = \DD \SS_\e \CC'')$ suggested by the second method above are naturally isomorphic to the functors $(\UU_\e)$. We prefer the first method because it is more geometric and because the inverse functor~$\DD$ used by the second method has not been defined explicitly.

\begin{figure}
\centering
 \includegraphics[scale=.7]{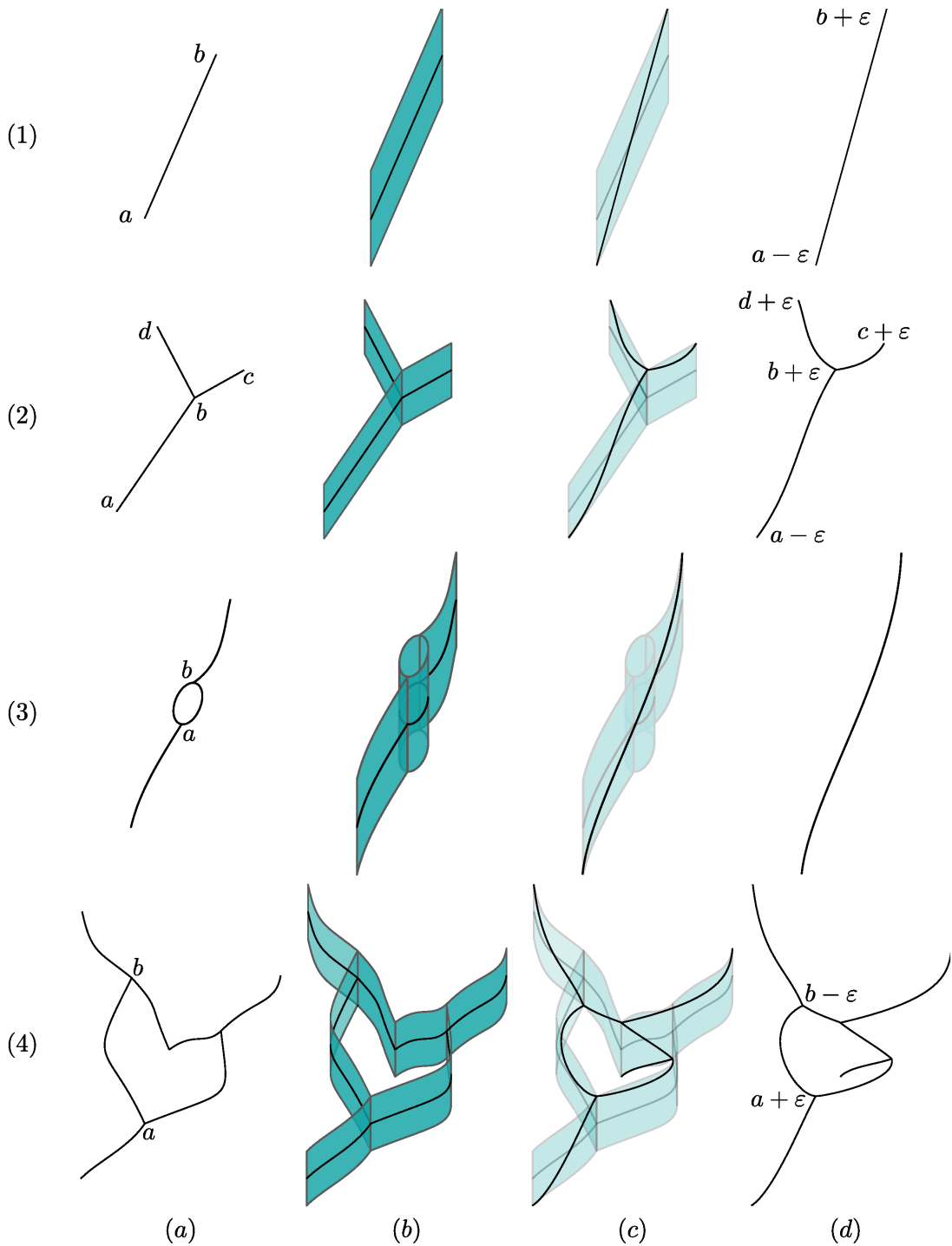}
 \caption[Smoothing of $\R$-graphs]{Examples of smoothing.  Column $(a)$ represents the original $\R$-graph~$\X$ with function $f$ shown by vertical height. Column $(b)$ shows the thickened space $\X \times [-\e, \e]$ with function $f_\e$ still shown by the vertical height. Column $(c)$ shows the Reeb graph of $(\X \times [-\e,\e], f_\e)$ superimposed on $\X \times [-\e, \e]$. Column $(d)$ shows this new $\R$-graph by itself. %
}
 \label{Fig:SmoothedReebAll}
\end{figure}

\begin{obs}
\label{obs:semigroup-g}
The family of functors $(\UU_\e)$ form a semigroup of contraction endofunctors (in $\Reeb$), in the sense that:
\begin{itemize}
\item
$\UU_0 \simeq \onefunc_\Reeb$ and $\UU_{\e_1+\e_2} \simeq \UU_{\e_1} \UU_{\e_2}$ for all $\e_1,\e_2 \geq 0$;

\item
$\rdist(\UU_\e(f), \UU_\e(g)) \leq \rdist(f,g)$ for all $f,g \in \Reeb$ and $\e \geq 0$.

\end{itemize}
These assertions follow immediately from the corresponding assertions (Propositions \ref{prop:smoothing-semigroup} and~\ref{prop:smoothing-contraction}) for the family of endofunctors $(\SS_\e)$ of $\Cshc$, thanks to Proposition~\ref{prop:U=S}.
For the semigroup property, we have to replace `$=$' with `$\simeq$' since that is all we can deduce from an equivalence.
\end{obs}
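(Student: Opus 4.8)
The plan is to push both assertions across the equivalence of categories $\CC''$ (Theorem~\ref{Thm:Equivalence}), reducing each to the corresponding fact already proved for the cosheaf smoothing functors $(\SS_\e)$. The single piece of machinery that makes this work is the natural isomorphism $\CC''\UU_\e \simeq \SS_\e\CC''$ of Proposition~\ref{prop:U=S}, which lets me move a $\CC''$ past any $\UU_\e$ at the cost of turning it into an $\SS_\e$; after that, everything is direct substitution into Propositions~\ref{prop:smoothing-semigroup}, \ref{prop:smoothing-contraction} and~\ref{prop:inter}.

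For the semigroup property I would first observe that $\SS_0 = \onefunc_\Pre$ exactly, since $\Omega_0 = \onefunc_\Int$; hence $\CC''\UU_0 \simeq \SS_0\CC'' = \CC'' = \CC''\onefunc_\Reeb$. For additivity I would compute
$$
\CC''\UU_{\e_1+\e_2}
\simeq \SS_{\e_1+\e_2}\CC''
= \SS_{\e_1}\SS_{\e_2}\CC''
\simeq \SS_{\e_1}\CC''\UU_{\e_2}
\simeq \CC''\UU_{\e_1}\UU_{\e_2},
$$
invoking Proposition~\ref{prop:U=S} at the first, third and fourth steps and the semigroup law $\SS_{\e_1+\e_2} = \SS_{\e_1}\SS_{\e_2}$ (Proposition~\ref{prop:smoothing-semigroup}) at the second. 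In each case this produces a natural isomorphism $\CC''\FF \simeq \CC''\GG$ between two endofunctors of $\Reeb$. Because $\CC''$ is an equivalence it is fully faithful, so composing with a quasi-inverse $\DD$ and using $\DD\CC'' \simeq \onefunc_\Reeb$ cancels the $\CC''$ and yields $\FF \simeq \GG$, giving $\UU_0 \simeq \onefunc_\Reeb$ and $\UU_{\e_1+\e_2} \simeq \UU_{\e_1}\UU_{\e_2}$. The weakening of `$=$' to `$\simeq$' is precisely what the statement itself anticipates.

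For the contraction property I would unwind the definition of the Reeb distance: $\rdist(\UU_\e f, \UU_\e g) = \inter(\CC''\UU_\e f, \CC''\UU_\e g)$. Proposition~\ref{prop:U=S} supplies cosheaf isomorphisms $\CC''\UU_\e f \cong \SS_\e\CC'' f$ and $\CC''\UU_\e g \cong \SS_\e\CC'' g$. Since isomorphic cosheaves are $0$-interleaved (Definition~\ref{def:interleaving}), the triangle inequality for $\inter$ (Proposition~\ref{prop:inter}) shows that replacing an argument by an isomorphic one leaves $\inter$ unchanged, so $\inter(\CC''\UU_\e f, \CC''\UU_\e g) = \inter(\SS_\e\CC'' f, \SS_\e\CC'' g)$. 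The contraction estimate of Proposition~\ref{prop:smoothing-contraction} then bounds this by $\inter(\CC'' f, \CC'' g) = \rdist(f,g)$, which finishes the argument.

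The only points that require genuine care are the legitimacy of cancelling $\CC''$ in the semigroup step---this is exactly where I must use that $\CC''$ is fully faithful, so that an isomorphism of the images lifts to one of the source functors---and the bookkeeping, needed for the contraction step, that $\inter$ is invariant under replacing either argument by an isomorphic cosheaf. Neither is deep, but they are the two places where the categorical reduction is not purely formal, so I would state them explicitly rather than leave them to the reader. Everything else is substitution into results established earlier in the paper.
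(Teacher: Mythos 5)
Your proof is correct and follows exactly the route the paper intends: reduce both assertions to Propositions~\ref{prop:smoothing-semigroup} and~\ref{prop:smoothing-contraction} via the natural isomorphism $\CC''\UU_\e \simeq \SS_\e\CC''$ of Proposition~\ref{prop:U=S}, cancelling $\CC''$ by the equivalence of categories. The paper declares these steps immediate and leaves them unwritten, so your explicit handling of the two delicate points---lifting $\CC''\FF \simeq \CC''\GG$ to $\FF \simeq \GG$ by full faithfulness, and the invariance of $\inter$ under replacing an argument by an isomorphic cosheaf---merely spells out what the authors assert in one line.
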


\begin{obs}
\label{obs:smooth-2c}
Each $\UU_\e$ is a pointed endofunctor of $\Reeb$: there is a family $\zeta^\e = (\zeta^\e_f)$ of maps
\[
\zeta^\e_f: f \to \UU_\e(f)
\]
from each $\R$-graph to its $\e$-smoothing, which constitute a natural transformation $\zeta^\e: \onefunc_\Reeb \tO \UU_\e$.
For a given graph $f = (\X,f)$, the map $\zeta^\e_f$ is the composite
\[
\begin{diagram}
\node{\X}
	\arrow{e}
\node{\X_\e}
	\arrow{e}
\node{\X_\e/\sim}
\end{diagram}
\]
where the first map is the inclusion of~$\X$ as the zero-section of $\X_\e$ and the second map is the Reeb quotient.
That is to say, $\zeta^\e$ is the composite of the natural transformations $\tau^\e$ and~$\rho_{f_\e}$ of Observations \ref{obs:thick} and~\ref{obs:Reeb-rho}.
Geometrically, the map sends each point of $f^{-1}(t)$ to the connected component of $f^{-1}[t-\e,t+\e]$ that it belongs to.
\end{obs}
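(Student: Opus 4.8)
The plan is to build $\zeta^\e$ as the evident two-step composite and then deduce its naturality purely formally from the naturality of the two pieces out of which it is assembled. The semigroup and contraction properties are not at issue here, having already been recorded in Observation~\ref{obs:semigroup-g}; the only content of this observation is that $\UU_\e = \RR\TT_\e$ carries a natural map from the identity, i.e. that $\UU_\e$ is pointed.

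First I would define, for each $\R$-graph $f = (\X,f)$, the map $\zeta^\e_f = \rho_{\TT_\e f} \circ \tau^\e_f$, where $\tau^\e_f : f \to \TT_\e f$ is the zero-section inclusion of Observation~\ref{obs:thick} and $\rho_{\TT_\e f} : \TT_\e f \to \RR\TT_\e f = \UU_\e f$ is the canonical projection of Observation~\ref{obs:Reeb-rho}. Here $\tau^\e_f$ is a morphism of $\Rtopc$ (its target lies in $\Rtopc$ by Proposition~\ref{prop:thick-graph}) and $\rho_{\TT_\e f}$ is a morphism of $\Rtop$, so the composite is a morphism of $\Rtop$ whose endpoints $f$ and $\UU_\e f$ both lie in $\Reeb$; since $\Reeb$ is a full subcategory, $\zeta^\e_f$ is a morphism of $\Reeb$. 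This is exactly the passage $\X \to \X_\e \to \X_\e/{\sim}$ of the statement.

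The core step is naturality, and it is cleanest to establish it first on all of $\Rtop$ for the transformation with components $\zeta^\e_X = \rho_{\TT_\e X} \circ \tau^\e_X$, and only afterwards restrict to $\Reeb$, where $\RR\TT_\e = \UU_\e$. Given a morphism $\alpha : (\X,f) \to (\Y,g)$, I would run the two-square chase
\[
\RR[\TT_\e[\alpha]] \circ \rho_{\TT_\e f} \circ \tau^\e_f
= \rho_{\TT_\e g} \circ \TT_\e[\alpha] \circ \tau^\e_f
= \rho_{\TT_\e g} \circ \tau^\e_g \circ \alpha,
\]
where the first equality is the naturality of $\rho$ applied to the morphism $\TT_\e[\alpha]$ and the second is the naturality of $\tau^\e$ applied to $\alpha$. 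The outer terms read $\UU_\e[\alpha] \circ \zeta^\e_f = \zeta^\e_g \circ \alpha$, which is precisely the naturality square. Restricting to objects of $\Reeb$ yields a natural transformation $\zeta^\e : \onefunc_\Reeb \tO \UU_\e$, so $\UU_\e$ is a pointed endofunctor.

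Finally I would justify the geometric description by tracing a point. For $x \in f^{-1}(t)$, the zero-section sends $x$ to $(x,0) \in \X_\e$, which lies in the levelset $f_\e^{-1}(t)$; the Reeb quotient then records its path-component there. Under the homotopy equivalence of Lemma~\ref{Lem:Homotopy}, together with the fiber description recorded immediately after the definition of $\UU_\e$ (via Theorem~\ref{thm:thick-R-space} and its Corollary), this levelset is identified with $f^{-1}[t-\e,t+\e]$ in such a way that $(x,0)$ corresponds to $x$; hence $\zeta^\e_f(x)$ is the connected component of $f^{-1}[t-\e,t+\e]$ containing $x$, as asserted. I do not anticipate a genuine obstacle: the whole content is the routine interchange of two natural transformations. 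The only point demanding care is the bookkeeping of domains and codomains—thickening lands in $\Rtopc$, not $\Reeb$, so $\UU_\e$ is an endofunctor of $\Reeb$ only after post-composition with $\RR$, and it is this that makes the detour through $\Rtop$ in the naturality argument the tidiest route.
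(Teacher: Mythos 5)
Your proposal is correct and matches the paper's approach exactly: the paper's Observation~\ref{obs:smooth-2c} defines $\zeta^\e$ as precisely the composite of $\tau^\e$ (Observation~\ref{obs:thick}) with the Reeb quotient $\rho$ whiskered by $\TT_\e$ (Observation~\ref{obs:Reeb-rho}), and your two-square naturality chase is just the standard verification, left implicit in the paper, that this composite of pointed structures is again natural. Your tracing of the geometric description through the reparametrized fibers of Theorem~\ref{thm:thick-R-space} likewise agrees with the paper's stated identification of the fiber of $\UU_\e f$ over $t$ with the components of $f^{-1}[t-\e,t+\e]$.
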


We finish this section by showing that $\zeta^\e: \onefunc_{\Reeb} \tO \UU_\e$ corresponds exactly to $\sigma^\e: \onefunc_{\Cshc} \tO \SS_\e$.

\begin{prop}
\label{prop:pointed-equivalence}
For $f \in \Reeb$ and $\e \geq 0$, the right-hand square in the following figure
\[
\xymatrix
@=8ex
{
f
	\ar@{->}[d]_(0.45){\zeta} 
\\
\UU_\e(f)
}
\qquad
\qquad
\xymatrix
@=8ex
{
\CC(f)
	\ar@{->}[d]_(0.45){\CC[\zeta]}
	\ar@{=}[r]
&
\CC(f)
	\ar@{->}[d]^(0.45){\sigma}
\\
\CC \UU_\e (f)
	\ar@{<-}[r]^{\simeq}
&
\SS_\e \CC(f)
}
\]
commutes. Here $\zeta = \zeta^\e_f$ and $\sigma = \sigma^\e_{\CC(f)}$. The left-hand side of the square is obtained by applying $\CC$ to the diagram on the left, and the map at the bottom of the square is the isomorphism of Proposition~\ref{prop:U=S}.
\end{prop}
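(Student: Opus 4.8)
The plan is to unwind each edge of the square into the elementary natural transformations from which it is assembled; commutativity then collapses to a single computation with~$\pi_0$. First I would invoke the factorization of $\zeta = \zeta^\e_f$ recorded in Observation~\ref{obs:smooth-2c}: it is the composite of the zero-section inclusion $\tau^\e_f\colon (\X,f)\to(\X_\e,f_\e)$ followed by the Reeb quotient $\rho_{f_\e}\colon (\X_\e,f_\e)\to \RR(\X_\e,f_\e)=\UU_\e(f)$. Functoriality of~$\CC$ then gives $\CC[\zeta]=\CC[\rho_{f_\e}]\circ\CC[\tau^\e_f]$, and by the very definition of the natural isomorphism $\eta=\CC\rho$ of Theorem~\ref{Thm:CommutativeTriangle} we have $\CC[\rho_{f_\e}]=\eta_{f_\e}$.

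Next I would read off, from the chain $\CC''\RR\TT_\e\simeq\CC'\TT_\e\simeq\SS_\e\CC''$ used in the proof of Proposition~\ref{prop:U=S}, that the bottom isomorphism $\Phi\colon\SS_\e\CC(f)\to\CC\UU_\e(f)$ is $\Phi=\eta_{f_\e}\circ\mu_f^{-1}$, where $\mu\colon\CC\TT_\e\tO\SS_\e\CC$ is the natural isomorphism of Theorem~\ref{Thm:CommuteWithC}, whose component at~$f$ is given interval-wise by $(\mu_f)_I=\pi_0[p_I]$ for the projection $p_I\colon f_\e^{-1}(I)\to f^{-1}(I^\e)$ of Lemma~\ref{Lem:Homotopy}. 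Substituting these expressions, the identity to be proved, $\CC[\zeta]=\Phi\circ\sigma$, becomes $\eta_{f_\e}\circ\CC[\tau^\e_f]=\eta_{f_\e}\circ\mu_f^{-1}\circ\sigma$; since $\eta_{f_\e}$ is an isomorphism it cancels, and the claim reduces to $\mu_f\circ\CC[\tau^\e_f]=\sigma$.

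Finally I would check this last equality one interval at a time. Over an interval~$I$, the map $\CC[\tau^\e_f]_I$ is $\pi_0$ applied to the zero-section $f^{-1}(I)\to f_\e^{-1}(I)$, $x\mapsto(x,0)$, while $(\mu_f)_I=\pi_0[p_I]$ with $p_I(x,t)=x$. Their composite is $\pi_0$ of the map $x\mapsto(x,0)\mapsto x$, which is exactly the inclusion $f^{-1}(I)\subseteq f^{-1}(I^\e)$; applying~$\pi_0$ yields $\CC(f)[I\subseteq I^\e]=(\sigma^\e_{\CC(f)})_I=\sigma_I$. Thus $\mu_f\circ\CC[\tau^\e_f]=\sigma$, and the square commutes.

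The one place demanding care is the middle step: from the two-stage chain of Proposition~\ref{prop:U=S} one must correctly determine \emph{which} composite of~$\eta$ and~$\mu$, and in which direction, realizes the bottom arrow of the square. Once the orientations are pinned down, everything else is forced, and the closing verification is the immediate observation that projecting the zero-section recovers the interval inclusion $f^{-1}(I)\subseteq f^{-1}(I^\e)$.
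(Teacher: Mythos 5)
Your proof is correct, and it takes a genuinely different (and slightly cleaner) route through the final verification than the paper does. Both arguments share the same skeleton: factor $\zeta = \rho_{f_\e}\circ\tau^\e_f$ via Observation~\ref{obs:smooth-2c}, and identify the bottom isomorphism of the square as the composite of $\eta_{f_\e}$ (Theorem~\ref{Thm:CommutativeTriangle}) with the Theorem~\ref{Thm:CommuteWithC} isomorphism (your $\mu_f$; the paper overloads the symbol $\rho_f$ for it), whose component at~$I$ is $\pi_0[p_I]$. The difference is in how the resulting interval-wise square is closed. The paper keeps the bottom arrow in the direction drawn, i.e.\ as $\pi_0$ of the homotopy inverse $q_I(x) = (x,\lambda(f(x)))$ from Lemma~\ref{Lem:Homotopy}, and must then show that the unpacked topological square commutes only \emph{up to homotopy}: the zero-section \textsc{l} and the composite \textsc{hr}, $x \mapsto (x,\lambda(f(x)))$, are joined by a fiberwise straight-line homotopy inside $f_\e^{-1}(I)$. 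You instead cancel the isomorphism $\eta_{f_\e}$ and orient the remaining comparison the other way, reducing the claim to $\mu_f \circ \CC[\tau^\e_f] = \sigma$; since $p_I(x,0)=x$, the composite $f^{-1}(I) \to f_\e^{-1}(I) \to f^{-1}(I^\e)$ is \emph{literally} the set inclusion, so your square commutes on the nose---no auxiliary function~$\lambda$ and no homotopy argument are needed. What the paper's route buys is that it verifies the diagram exactly as drawn; what your route buys is strict commutativity and shorter topology, at the cost of the bookkeeping you correctly flag as the delicate point (pinning down which composite of $\eta$ and $\mu$, in which direction, realizes the bottom arrow). One small citation would make your argument fully self-contained: cancelling $\eta_{f_\e}$ uses that it is an isomorphism, which holds because $\TT_\e(f)$ is constructible for $f \in \Reeb$ (Proposition~\ref{prop:thick-graph}); this is already presupposed by Proposition~\ref{prop:U=S}, so it is not a gap, but it is worth saying.
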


\begin{proof}
We need to verify that the square (of small functors and natural transformations) commutes when evaluated at an arbitrary interval~$I$. The result of this evaluation is the left-hand square below.
Unpacking the definitions, we find that it is the image under $\pi_0$ of the right-hand square below:
\[
\renewcommand{\labelstyle}{\textstyle}
\xymatrix
@=8ex
{
\pi_0 f^{-1}(I)
	\ar@{->}[d]_(0.45){\CC[\zeta]_I}
	\ar@{=}[r]
&
\pi_0 f^{-1}(I)
	\ar@{->}[d]^(0.45){\sigma_I}
\\
\pi_0 f_\e^{-1}(I)
	\ar@{<-}[r]^{\simeq}
&
\pi_0 f^{-1}(I^\e)
}
\qquad
\raisebox{-5.5ex}{$\stackrel{\pi_0}{\longleftarrow}$}
\qquad
\xymatrix
@=8ex
{
f^{-1}(I)
	\ar@{->}[d]_(0.45){\textsc{l}}
	\ar@{=}[r]
&
f^{-1}(I)
	\ar@{->}[d]^(0.45){\textsc{r}}
\\
f_\e^{-1}(I)
	\ar@{<-}[r]^{\textsc{h}}
&
f^{-1}(I^\e)}
\]
The three labelled maps are inclusions. The left map~\textsc{l} is the inclusion of $f^{-1}(I)$ as the zero-section of $f_\e^{-1}(I)$. The right map~\textsc{r} is the inclusion of $f^{-1}(I)$ as a subset of $f^{-1}(I^\e)$. The horizontal map~\textsc{h} is the homotopy equivalence defined in Lemma~\ref{Lem:Homotopy} in terms of the auxiliary function~$\lambda$.

It is enough to show that this right-hand square commutes up to homotopy. Indeed, this is the case: the maps \textsc{l} and~\textsc{hr} are homotopic via a fiberwise straight-line homotopy.
This may be understood by contemplating the figures
\medskip
\begin{center}
\raisebox{3ex}{\textsc{l:}}
\includegraphics[height=8ex]{thetagram-3}
\qquad
\raisebox{3ex}{\textsc{hr:}}
\includegraphics[height=8ex]{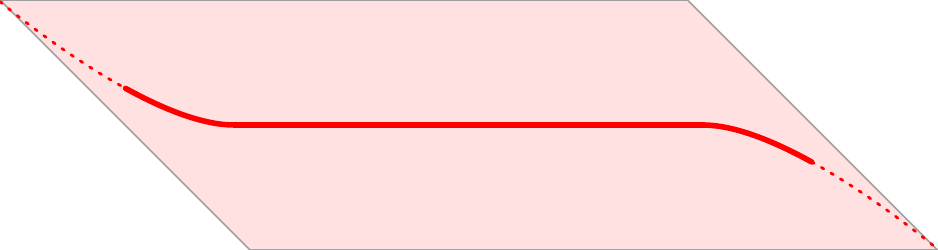}
\end{center}
which schematically represent the two homotopic maps.
\end{proof}

\subsection{Interleaving of $\R$-graphs}
\label{Sect:interleaving-Reeb}

We are now in a position to interpret the interleaving distance between $\R$-graphs geometrically. The original version (Definitions \ref{def:interleaving} and~\ref{def:interleaving-dist}) asks us to compare two cosheaves. The discussion in Section~\ref{Sect:smooth-Reeb} allows us to do the comparison directly on the graphs themselves.
Let $f = (\X,f)$ and $g = (\Y,g)$ be $\R$-graphs. An $\e$-interleaving of their Reeb cosheaves $\Ffunc, \Gfunc$ is described by the diagrams in~\eqref{eq:interleaving-cat}. We interpret this in the geometric category.

\begin{defn}
\label{def:interleaving-g}
Two $\R$-graphs $f,g$ are {\bf $\e$-interleaved} if there exist maps
\[
\alpha: f \to \UU_\e g
\quad \text{and} \quad
\beta: g \to \UU_\e f
\]
such that the diagrams
\begin{equation}
\label{eq:interleaving-graph}
\renewcommand{\labelstyle}{\textstyle}
\raisebox{9.5ex}{
\xymatrix
@=6ex
{
f
	\ar@{->}[dr]^(0.45){\alpha}
	\ar@{->}[dd]_(0.45){\zeta^{2\e}_{f}}
\\
&
\UU_\e g
	\ar@{->}[dl]^(0.45){\beta^{\e}_{2\e}} 
\\
\UU_{2\e}f
}}
\qquad
{\text{and}}
\qquad
\raisebox{9.5ex}{
\xymatrix
@=6ex
{
&g
	\ar@{->}[dl]_(0.45){\beta}
	\ar@{->}[dd]^(0.45){\zeta^{2\e}_{g}}
\\
\UU_\e f
	\ar@{->}[dr]_(0.45){\alpha^\e_{2\e}} 
\\
&
\UU_{2\e}g
&
}
}
\end{equation}
%
commute. Here $\alpha^\e_{2\e}$ and $\beta^\e_{2\e}$ are the composites
\[
\renewcommand{\labelstyle}{\textstyle}
\xymatrix@C=6em@R=2ex{
\alpha^\e_{2\e} : 
\UU_\e f
	\ar[r]^{\UU_\e [\alpha]}
&
\UU_\e \UU_\e g
	\ar[r]^{\simeq}
&
\UU_{2\e} g
\\
\beta^\e_{2\e} : 
\UU_\e g
	\ar[r]^{\UU_\e [\beta]}
&
\UU_\e \UU_\e f
	\ar[r]^{\simeq}
&
\UU_{2\e} f
}
\]
where the right-hand maps are the natural isomorphisms of Observation~\ref{obs:semigroup-g}.
Through the equivalence of categories $\CC''$ and the results of the previous section, it follows that $f,g$ are $\e$-interleaved as $\R$-graphs if and only if $\Ffunc, \Gfunc$ are $\e$-interleaved as cosheaves.
\end{defn}

\begin{rmk}
Geometrically, the map $\alpha^\e_{2\e}$ acts as follows. Each point $x \in\UU_\e f$ corresponds to a connected component of some $f^{-1}[t-\e,t+\e]$. The points in that component are carried by~$\alpha$ to connected components of various $g^{-1}[s-\e,s+\e]$ where $s \in [t-\e,t+\e]$. By continuity, these components are all contained in a single connected component of $g^{-1}[t-2\e,t+2\e]$. It is this component that defines $\alpha^e_{2\e}(x) \in \UU_{2\e} g$. The map $\beta^\e_{2\e}$ acts in similar fashion.
\end{rmk}

The \emph{Reeb interleaving distance} between two $\R$-graphs 
is, finally, the infimum over values of~$\e$ for which there exists an $\e$-interleaving between the $\R$-graphs.

\section{Algorithms}

Our main goal in this section is to describe an algorithm for constructing the $\e$-smoothing of a given Reeb graph, as well as the canonical map from the graph to its smoothing. These are necessary ingredients for working with the interleaving distance.
This can be achieved in polynomial time.
Calculating the interleaving distance between two Reeb graphs is not so easy: in general it is graph-isomorphism-hard. We can, at least, recognize interleavings in polynomial time. We will discuss these matters in the later subsections.

\bigskip
Here is the set-up.
Let $f = (\X,f)$ be an $\R$-graph with critical set $S = \{a_0,\cdots,a_n\}$.  
In computational terms, this is just a graph $\X$ with function values associated to each vertex.  
Implicitly, we assume that the function value on the edges is a strictly monotone function with max and min determined by the function values at the vertices.

We wish to compute the smoothed Reeb graph $\UU_\e(f)$.
To do this, one might naively  build a larger complex $\X \times [-\e, \e]$ with function $f_\e$ as in column $(b)$ of Figure \ref{Fig:SmoothedReebAll} and then run any standard algorithm to compute its Reeb graph.
This new complex will have one edge and two vertices for every original vertex, and three edges and two faces for every edge, so for a graph with $m$ edges and $n$ vertices, the new complex has $O(m+n)$ total simplices.
Hence, the new Reeb graph can be computed in time $O((m+n)\log(m+n))$ in expectation using \cite{Harvey2010} or deterministically using \cite{Parsa2012}.

However, this method does not make use of the particular structure of the smoothing procedure. If we do exploit that structure, we can modify the algorithm of \cite{Parsa2012} to compute $\UU_\e(f)$ while  running in $O(m\log(m+n))$ time.  

\subsection{The smoothing algorithm}

Parsa's algorithm for computing the Reeb graph of an arbitrary simplicial complex in \cite{Parsa2012} is a sweep algorithm which keeps a data structure to represent the connected components of $f\inv(r)$ as the value $r$ increases in order to determine how the connected components should be attached.  
Let $f:\X \to \R$ be the original Reeb graph and $f_\e:\X\times[-\e,\e] \to \R$ the thickened Reeb graph.
Let $v_1,\cdots,v_n$ be the critical vertices of $f$ sorted so that $f(v_1) < f(v_2) < \cdots < f(v_n)$.
To simplify the explanation, we will assume general position, by which we mean $f(v_i) \neq f(v_j)$ and $f(v_i) \pm \e \neq f(v_j) \pm \e$ for any $i \neq j$.
Let $b_1< b_2 < \cdots b_k$ be the sorted values $\{f(v_i) \pm \e\}$; thus $\{b_i\}$ contains the critical values of $f_\e$.
For the sake of notation, let $[t]^\e$ denote the interval of width $2\e$ centered at $t$; that is, $[t]^\e = [t-\e, t+\e]$.
Also, we will say an edge $e = (v,w)$ with $f(v)<f(w)$ starts at $v$ and ends at $w$.

The main change here in order to instead compute the smoothed Reeb graph is the structure used to represent $f_\e \inv(t)$.  
Essentially, using Lemma \ref{Lem:Homotopy}, we know that we can work equivalently with $f_\e \inv (t)$ or $f\inv(t-\e,t+\e) = f\inv([t]^\e)$.
Thus, rather than dealing with the larger simplicial complex, we determine our connected components using a graph which keeps track of edges and vertices within that range.
We will use $H$ for the graph data structure which represents the connected components of  $f\inv([t]^\e)$ for the current value of~$t$. It should be noted that there are three main operations required for $H$: finding the particular connected component of either an edge or a vertex, merging two connected components, and splitting two connected components.
This structure and methods will be throughly discussed in the next section, and the full sweep algorithm will be explained in the section after that.

\subsubsection{Maintenance of Level Set Representation}

As we are working combinatorially, we will consider $f\inv([t]^\e)$ as a subgraph consisting of  all vertices with function value in $[t]^\e$ along with all edges attached to these vertices.
We adopt the convention that we can have `half edges' in this subgraph which occurs when an edge is attached to one vertex inside the interval and one outside.
In order to minimize any confusion as we pass back and forth between thinking of things topologically and combinatorially, let us represent the topological space $f\inv([t]^\e)$ by a graph $\overline{H}_t$ defined as follows:
\begin{itemize}
\item
every vertex in $f\inv([t]^\e)$ is represented by a vertex in $\overline{H}_t$;

\item
every edge whose interior meets $f\inv([t]^\e)$ is represented by a vertex in $\overline{H}_t$;

\item
every incidence between a vertex and an edge in $f\inv([t]^\e)$ is represented by an edge in $\overline{H}_t$.
\end{itemize}
Notice that some edges of the original graph may only partially meet $f\inv([t]^\e)$. Thus, every edge that meets $f\inv([t]^\e)$ can have 2, 1 or~0 vertices in $f\inv([t]^\e)$. The vertices of $\overline{H}_t$ that represent edges will therefore have degree 2, 1 or~0.

\begin{rmk}
The graph $\overline{H}_t$ is the \emph{derived complex}, or \emph{barycentric subdivision}, of the part of the Reeb graph contained in the window $f^{-1}([t]^\e)$. A similar construction is implied in Figure~\ref{Fig:CosheafConditionExample}.
\end{rmk}


\begin{algorithm}
\caption{UpdateH$(v,b)$}\label{Alg:UpdateH}
\begin{algorithmic}
\If {$b < f(v)$}
\State Add a vertex to $H$ for $v$
\For {$e$ below $v$ in $\X$}
\State $H.$insert$(e,v)$
\EndFor
\For {$e$ above $v$ in $\X$}
\State Add a vertex to $H$ for $e$
\State $H.$insert$(v,e)$
\EndFor
\Else
\For {$e$ below $v$ in $\X$}
\State $H.$delete$(e,v)$
\State Remove vertex $e \in V(H)$ 
\EndFor
\For {$e$ above $v$ in $\X$}
\State $H.$delete$(v,e)$
\EndFor
\State Remove vertex $v \in V(H)$
\EndIf
\end{algorithmic}
\end{algorithm}

As in \cite{Parsa2012}, we need to quickly determine the connected components of $\overline {H_t}$.  
This is done by solving the dynamic graph connectivity problem, where we store a rooted spanning forest of the graph $\overline{H}_t$; this forest is the graph notated $H$.
Since $H$ is a subset of $\overline{H}_t$, vertices in $H$ are associated to either a vertex or edge in the original graph $\X$.  

As the value of $t$ increases past a critical value $b = f(v)\pm \e$, we need to be able to update $H$ to continue to reflect the connected components.
Algorithm \ref{Alg:UpdateH}  outlines the procedure for this, which will be defined as UpdateH$(v,b)$.
If $b = f(v)-\e$, and thus $b<f(v)$, raising the value of $t$ from $b-\delta$ to $b+\delta$ requires adding the vertex $v$ to $H$, attaching all the edges which end at $v$, and starting the edges which emanate from $v$.  
On the other hand, if $b= f(v)+\e$, raising the value of $t$ requires removing $v$ from $H$, deleting the edges which end at it, and freeing the bottom of the edges above it.
Note that vertices in $H$ are only deleted after all attached edges are removed.

We can now look at how to implement insert, delete, and find in this graph.
In order to assure that we are not spending extra time adding and deleting edges, we will give each edge in $H$ a weight $\omega$ equal to the time that it will be deleted from $H$.
The purpose of these weights can be seen in the example of Fig.~\ref{F:WeightRequirement}.
A portion of the Reeb graph is at the left, and two choices for minimum spanning trees of the shaded region, $f\inv(I)$, are given.  
Circular vertices correspond to vertices in $f\inv(I)$, and square vertices correspond to edges.
If the leftmost MST is chosen, the only edits required as the interval $I$ moves up is to delete edges from the MST as they are removed from $f\inv(I)$.
If the right tree is used, the MST will require more complicated edits when the interval moves past the bottom vertex.
Note that since every edge in $H$ (or $\overline{H}_t$) has one endpoint corresponding to a vertex and the other to an edge, we can define $\omega(e)$ to be $f(v)$ for $v$ the vertex.
We will maintain that our minimum spanning tree utilizes edges with higher weights when possible.

\begin{figure}
\centering
 \includegraphics{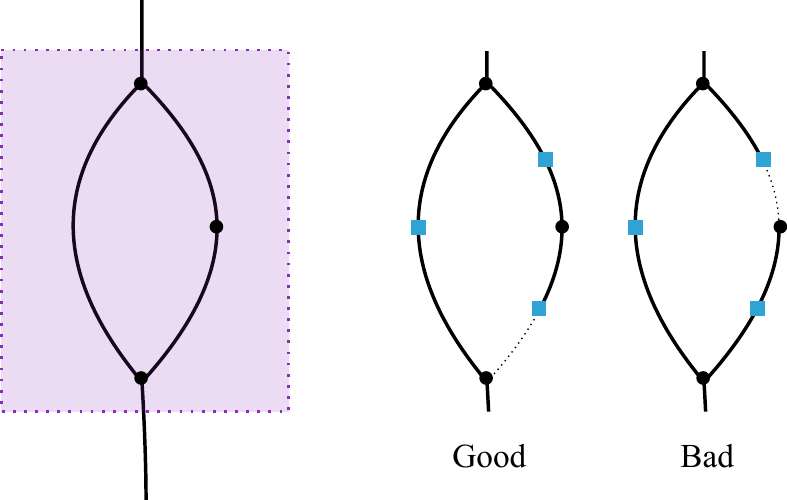}
 \caption[Choice of spanning tree in smoothing algorithm]{For the portion $f\inv(I)$ of the Reeb graph shown at left, two possible minimum spanning trees, $H$, are shown at right. In the left version, edges will only need to be removed from $H$ as they are removed from $f\inv(I)$ as $I$ is moved up. In the far right version, edges will need to be added and removed as $I$ is moved up.  Thus, we prefer the example on the left to decrease the number of edits done to $H$. The edge weight in the algorithm is used to avoid the example on the right.}
 \label{F:WeightRequirement}
\end{figure}


The operations needed on $H$ (where $x$, $x_1$, and $x_2$ are all vertices in $H$) in order to implement find, insert, and delete are:
\begin{itemize}
 \item parent$(x)$: return the parent of $x$, or null if $x$ is the root.
 \item root$(x)$: return the root of the tree containing $x $.
 \item link$(x_1,x_2,w)$: add an edge between $x_1$ and $x_2$ with weight $w$.
 \item cut$(x_1,x_2)$: delete the edge between $x_1$ and $x_2$.
 \item minWeight$(x)$: return a node with minimum weight edge to its parent on the path from $x$ to the root of its tree .
 \item evert$(x)$: make $x$ the root of its tree.
\end{itemize}
The dynamic graph connectivity problem is well studied with methods that include RC-trees \cite{Acar2004}, link-cut trees \cite{Sleator1983,Sleator1985}, top trees \cite{Tarjan2005,Alstrup2005}, and sparsification \cite{Eppstein1997}.
These can be implemented in order to perform the above operations in $O(\log n)$ worst case, amortized, or expected time where $n$ is the number of vertices in the graph; thus we leave the extended discussion of these specifics to the references.

Given the above methods, we can implement the three major operations as follows:
\begin{itemize}
 \item find$(x)$:
 \begin{algorithmic}
  \State \Return root$(x)$
 \end{algorithmic}
 
 \item insert$(e)$:
 \begin{algorithmic}
  \State $w = \omega(e)$ for edge $e = x_1x_2$
  \If { root$(x_1)=$root$(x_2)$}
  \State evert$(x_1)$
  \State $x',w'=$minWeight$(x_2)$
  \If {$w'<w$}
  \State cut$(x',$parent$(x'))$
  \State link$(x_1,x_2,w)$
  \EndIf
  \Else 
  \State {link$(x_1,x_2,w)$}
  \EndIf
 \end{algorithmic}
 
 \item delete$(e)$:
 \begin{algorithmic}
  \If {$e \in E(H)$}
  \State cut$(e)$
  \EndIf
 \end{algorithmic}

\end{itemize}
Notice that since the first set of operations can be implemented in $O(\log n)$ time, find, insert, and delete can be as well.

\subsubsection{Full Algorithm}

\begin{algorithm}
\caption{Sweep algorithm}\label{Alg:Sweep}
\begin{algorithmic}
\State $\textrm{Set }H\textrm{ to be an empty graph}$
\For {$i = 1, \cdots, k$ where $b_i = f(v_j)\pm\e$  }
\State $L_c = \textrm{LowerComps}(v_j,b_i)$
\State $\textrm{UpdateH}(v,b)$
\State $U_c = \textrm{UpperComps}(v_j,b_i)$
\State $\textrm{UpdateReebGraph}(L_c,U_c)$
\EndFor
\end{algorithmic}
\end{algorithm}
The pseudocode for the sweep algorithm is given in Alg.~\ref{Alg:Sweep}.  
Here, we work our way up the potential critical values $b_i$, keeping track of the change in $\overline{H}_t$ and using this to build the smoothed graph $g:\Y\to \R$.
For any noncritical $t$, the components of $\overline{H}_t$ are associated to an edge in $\Y$ whose lower vertex $v$ and upper vertex $w$ satisfy $g(v)<t<g(w)$.
We will keep track of this association by pointing the representative of a component in $H$ to the lower vertex of its associated edge in~$\Y$.

At the beginning of a step, $H$ represents the connected components for $\overline{H}_t$ with $t = b_i-\delta$ for a sufficiently small $\delta$.
First, we find the components in $H$ that could be impacted by the addition or deletion of $v_j$ and associated edges using the LowerComps$(v_j,b_i)$ subprocess, Alg.~\ref{Alg:LowerComps}.  
These are stored in $L_c$.
Note that these are exactly the edges which end at $\nu$, where $\nu$ is the vertex added at function value $b_i$ in the smoothed Reeb graph.
\begin{algorithm}
\caption{LowerComps$(v,b)$}\label{Alg:LowerComps}
\begin{algorithmic}
\If {$b<f(v)$}
\For {edges $e$ ending at $v$}
\State $c = H.\textrm{find}(e)$
\If {$c$ is not marked}
\State $L_c.$add$(c)$
\State Mark $c$ as listed
\EndIf
\EndFor
\Else
\State $L_c = H.$find$(v)$
\EndIf
\end{algorithmic}
\end{algorithm}

Then the $H$ graph is updated so that it now represents the connected components for $t = b_i+\delta$ using the UpdateH$(v,b)$ subprocess, Alg.~\ref{Alg:UpdateH} discussed in the previous section.
The components in the new $H$ are determined using UpperComps$(v_j,b_i)$ (which is symmetric to LowerComps$(v_j,b_i)$ and therefore not repeated here) and stored as $U_c$.  
These components are the edges which start at $\nu$.  

Finally, to update $\Y$, we use UpdateReebGraph (Alg.~\ref{Alg:UpdateReebGraph}) to add a new vertex $\nu$ to the graph.
An edge is added for each lower component in $L_c$ which starts at the associated start vertex and ends at $\nu$.
Then each component in $U_c$ is assigned $\nu$ as the start vertex.
\begin{algorithm}
\caption{UpdateReebGraph$(L_c,U_c)$}\label{Alg:UpdateReebGraph}
\begin{algorithmic}
\If{$\#|L_c| = \#|U_c| = 1$}
\State \Return
\Else
\State $\textrm{Create a new node }\nu \textrm{ in Reeb graph}$
\State $\textrm{Assign that node to all }c \in U_c$
\State $\textrm{Add an edge between }\nu\textrm{ and }v_c \textrm{ for all } c \in L_c$
\EndIf
\end{algorithmic}
\end{algorithm}

\subsection{Analysis of the smoothing algorithm}

We show that the overall running time is linear in the total number of simplices of the original Reeb graph. 
In particular, for a graph with $n$ vertices and $m$ edges, the running time is $O(m\log(n+m))$.
First, note that the number of vertices in $H$ is at most $n+m$, thus the time for any find, insert, or delete is $O(\log (m+n))$.
Every edge of the original graph is used once for LowerComps and once for UpperComps.  
Since each of these utilizes one find operation, the total time spent in them is $O(m\log(n+m))$.
Likewise, edges are added to $H$ twice for each original edge, and these edges are each deleted.  
Thus the total time spent in UpdateH is also $O(m\log(n+m))$.
Thus, the smoothing algorithm runs in $O(m\log(n+m))$ time.

\subsection{Morphisms between graphs with different critical sets}
\label{subsec:morphisms-diff-S}

In recording morphisms, it is helpful to notice that we can get away with storing less information than is indicated by Proposition~\ref{Prop:MapData}. In particular we can to write down maps between Reeb graphs without needing to subdivide the graphs to have a common set of critical values.  

Consider a map $\phi:(\X,f) \to (\Y,g)$ given by data $\phi_i^V:V_i^\X \to V_i^\Y$ and $\phi_i^E:E_i^\X \to E_i^\Y$ as specified in in Proposition~\ref{Prop:MapData}.
Say $(\Y,g)$ is a subdivision of $(\Y',g)$.  
Thus $|\Y| \iso |\Y'|$ but $\Y'$ may have some vertices with up and down degree both 1 which are not in $\Y$.  
We abuse notation by calling both maps $g$.
Since we assume that $g$ is monotone on any edge, if there is a vertex in $\X$ which maps to $v \in V_i^\Y$ under $\phi$, the same information is stored if we say that it maps to the edge $e \in E_i^{\Y'}$.
Thus, for the non-subdivided version of $\phi$, we instead have a map $\phi_i^V: V_i^\X \to V_i^{\Y'} \cup E_i^{\Y'}$.
In order to get rid of confusion with indices, we will denote this map 
\begin{equation*}
 \phi^V:V^\X \to V^{\Y'} \cup E^{\Y'}
\end{equation*}
while remembering that a vertex $v  \in \X$ will map to either a vertex $w \in V^{\Y'}$ with $f(v) = g(w)$ or to an edge $e = (u,w) \in E^{\Y'}$ with $g(u) < f(v) < g(w)$.

Likewise, the image of an edge $e = (u,v) \in E_i^\X$ in $\Y'$ is a monotone path which begins at $\phi^V(u)$ and ends at $\phi^V(v)$.  
In terms of the combinatorial structure, this path is a sequence of  edges $e_1,e_2,\cdots, e_k$ such that the top vertex of $e_i$ is the bottom vertex of $e_j$.

\subsection{The canonical map from an $\R$-graph to its smoothing}

Let $(\X,f)$ be an $\R$-graph and let $(\Y,g) = (\X_\e, f_\e)$ be its smoothing. In order to recognize interleavings we need access to the canonical map $\zeta = \zeta^\e_f: (\X,f) \to (\Y,g)$. We show how to obtain this in terms of the alternate description of maps described in the preceding subsection.

We do this during the course of constructing $(\Y,g)$. During the sweep, in addition to stopping at critical values $\{b_i\}$ of the function~$g$ we will also stop at the critical values $\{a_i\}$ of~$f$. This way, we can determine $\zeta(v)$ by determining the component of $H$ containing $v$, and finding the edge or vertex of $\Y$ to which it is associated. (It is usually an edge, unless by chance we have some $|a_i - a_j| = \e$.) We retain this information for each vertex $v \in V(\X)$.
For each edge $e \in E(\X)$, we must record the corresponding path $\zeta(e)$. This we do by maintaining a list with the edge. At each update of $H$, we find the representative of the component of $H$ containing $e \in V(H)$.
Thus, when the algorithm ends, the map $\zeta$ has been completely determined.

\subsection{Complexity of the Reeb interleaving distance}

We have not given a method for calculating the Reeb graph interleaving distance, because it is not easy. Let us first take the infimum out of consideration, by selecting $\e \geq 0$. This leads to the question in the next result.

\begin{prop}
\label{prop:NP}
``Can $(\X,f)$ and $(\Y,g)$ be $\e$-interleaved?'' is in $\textbf{NP}$.
\end{prop}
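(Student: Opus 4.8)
The plan is to exhibit a polynomial-size certificate for an $\e$-interleaving and a polynomial-time verification procedure. By Definition~\ref{def:interleaving-g}, an $\e$-interleaving of $(\X,f)$ and $(\Y,g)$ consists of morphisms $\alpha: f \to \UU_\e g$ and $\beta: g \to \UU_\e f$ satisfying the two triangle identities in~\eqref{eq:interleaving-graph}. The certificate will be precisely this pair of morphisms. First I would observe that, by the smoothing algorithm of the previous subsections, the smoothed graphs $\UU_\e f$ and $\UU_\e g$ can be constructed in polynomial time, and their sizes are polynomial in the sizes of $\X$ and $\Y$ (each has $O(m+n)$ simplices). So the target graphs of the morphisms are explicitly available and of polynomial size.

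Next I would argue that each morphism $\alpha,\beta$ admits a polynomial-size description. Using the combinatorial encoding of maps between $\R$-graphs with different critical sets (Section~\ref{subsec:morphisms-diff-S}), a morphism $\alpha: f \to \UU_\e g$ is recorded by specifying, for each vertex $v \in V(\X)$, the image $\alpha^V(v) \in V^{\UU_\e g} \cup E^{\UU_\e g}$, and for each edge $e \in E(\X)$, the monotone image path $\alpha^E(e)$ as a sequence of edges in $\UU_\e g$. Since $\UU_\e g$ has polynomially many edges and each image path is a simple monotone path (hence of length at most the number of critical values), this data is of polynomial size. The same applies to $\beta$. Thus the entire certificate $(\alpha,\beta)$ is polynomial in the input size.

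The verification step has two parts, both of which I expect to be routine. One must check (a) that $\alpha$ and $\beta$ are well-defined morphisms of $\R$-graphs, i.e.\ that the vertex and edge data satisfy the consistency conditions of Proposition~\ref{Prop:MapData} (respecting function values and attaching maps), and (b) that the two triangle diagrams in~\eqref{eq:interleaving-graph} commute. For~(a), consistency is a local condition at each vertex and edge, checkable in polynomial time. For~(b), the composite maps $\alpha^\e_{2\e}$ and $\beta^\e_{2\e}$ are computed by composing $\alpha$ (resp.\ $\beta$) with the smoothing functor~$\UU_\e$ and then the natural isomorphism $\UU_\e\UU_\e \simeq \UU_{2\e}$; since these functors and isomorphisms are explicit and polynomial-time computable, the composites can be formed explicitly. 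Commutativity of each triangle then reduces to checking equality of two morphisms $f \to \UU_{2\e}f$ (and $g \to \UU_{2\e}g$), which amounts to comparing their vertex and edge data---again a polynomial-time comparison. The maps $\zeta^{2\e}_f$ and $\zeta^{2\e}_g$ against which we compare are the canonical smoothing maps, computable by the procedure of the preceding subsection.

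The main obstacle will be the bookkeeping needed to make the comparison in~(b) genuinely polynomial: the composite $\alpha^\e_{2\e}$ sends each edge of $f$ to a monotone \emph{path} in $\UU_{2\e}g$, and to verify the triangle one must confirm that this path agrees with the image of $e$ under $\zeta^{2\e}_f$. Establishing that path images remain of polynomial length under composition with $\UU_\e$---and that the natural isomorphism to $\UU_{2\e}$ does not blow up the description---is the one point requiring care, though it follows from the fact that all the relevant graphs have polynomially many critical values and all maps are fiber-preserving and monotone on edges. Once this is in hand, both the certificate and its verification are polynomial, so the problem lies in $\textbf{NP}$.
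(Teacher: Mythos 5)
Your proposal is correct and takes essentially the same approach as the paper: the certificate is the pair $(\alpha,\beta)$, the smoothings $\UU_\e f$, $\UU_{2\e}f$, $\UU_\e g$, $\UU_{2\e}g$ and canonical maps $\zeta^{2\e}_f$, $\zeta^{2\e}_g$ are constructed in polynomial time, and the two triangle identities of~\eqref{eq:interleaving-graph} are verified by computing the composites $\alpha^\e_{2\e}\circ\beta$ and $\beta^\e_{2\e}\circ\alpha$ and comparing combinatorial data. Your elaboration of the polynomial-size morphism encoding (via Section~\ref{subsec:morphisms-diff-S}) usefully fleshes out what the paper leaves implicit; the only blemish is a harmless label slip in your final paragraph, where the composite landing in $\UU_{2\e}g$ (defined on edges of $g$, not $f$) must be compared against $\zeta^{2\e}_g$ rather than $\zeta^{2\e}_f$.
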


\begin{proof}
We need to identify possible certificates and a polynomial-time verification that those certificates guarantee an $\e$-interleaving. We begin by constructing $\UU_\e f$, $\UU_{2\e}f$, $\UU_\e g$, $\UU_{2\e}g$ and the morphisms $\zeta^{2\e}_f$ and $\zeta^{2\e}_g$.
A candidate certificate is a pair of maps $\alpha: f \to \UU_\e g$ and $\beta: g \to \UU_\e f$.
We calculate $\alpha^\e_{2\e}$ and $\beta^\e_{2\e}$ and the composites $\alpha^\e_{2\e} \circ \beta$ and $\beta^\e_{2\e} \circ \alpha$ and return the answer ``yes'' if the equations
\[
\alpha^\e_{2\e} \circ \beta = \zeta^{2\e}_g
\quad
\text{and}
\quad
\beta^\e_{2\e} \circ \alpha = \zeta^{2\e}_f
\]
both hold. By \eqref{eq:interleaving-graph}, the correct answer is ``yes'' if and only if there exists such a pair $\alpha, \beta$ satisfying these tests.
All the constructions and checks can be done in polynomial time, so the problem is in $\NP$.
 \end{proof}

This theorem leaves open-ended the difficulties of finding such an interleaving.  When $\e = 0$ we have the following obstruction.

\begin{prop}
``Is there a $0$-interleaving between $(\X,f)$ and $(\Y,g)$?'' is graph-isomorphism hard.
\end{prop}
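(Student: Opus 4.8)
"Is there a $0$-interleaving between $(\X,f)$ and $(\Y,g)$?" is graph-isomorphism hard.

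By Proposition~\ref{prop:rdist=0}, two $\R$-graphs have Reeb distance zero if and only if they are isomorphic, and a $0$-interleaving is exactly the data witnessing $\rdist(f,g)=0$. So the question "is there a $0$-interleaving?" is literally the question "are $(\X,f)$ and $(\Y,g)$ isomorphic as $\R$-graphs?" My plan is therefore to reduce ordinary (abstract, undirected) graph isomorphism to $\R$-graph isomorphism in polynomial time.

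Let me sketch this.

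\begin{proof}
By Proposition~\ref{prop:rdist=0}, a $0$-interleaving between $(\X,f)$ and $(\Y,g)$ exists if and only if the two $\R$-graphs are isomorphic in $\Reeb$. We exhibit a polynomial-time reduction from the graph isomorphism problem to $\R$-graph isomorphism.

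Given an abstract finite graph $G$ with vertex set $\{1,\dots,N\}$ and edge set $E$, we build an $\R$-graph $(\X_G, f_G)$ as follows. Place every vertex of $G$ at height $0$ (so $V_0 = \{1,\dots,N\}$, all over the common critical value $a_0 = 0$). For each edge $\{p,q\} \in E$ we attach a small arc that rises from height $0$ to a common apex: concretely, add a vertex over height $1$ for each edge, and connect it by two monotone edges down to the two endpoints $p$ and $q$. This is a legitimate $\R$-graph: it is a compact $1$-dimensional polyhedron on which $f_G$ restricts to an embedding on each edge. The construction is clearly polynomial in $N + |E|$.

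Now, an isomorphism of $\R$-graphs $(\X_G,f_G) \to (\X_{G'},f_{G'})$ is, by definition, a level-preserving homeomorphism. Since all original vertices sit at height $0$ and all edge-apices at height $1$, any such isomorphism must carry height-$0$ vertices to height-$0$ vertices and apices to apices, and it must respect the attaching data $\ell_0, r_0$. By Proposition~\ref{Prop:MapData}, the isomorphism is exactly specified by a bijection of the vertex sets and a bijection of the edge (apex) sets that is consistent with the attaching maps $\ell_0^{f}, r_0^{f}$. Unwinding this, the consistency conditions say precisely that the induced bijection $\{1,\dots,N\} \to \{1,\dots,N'\}$ sends an edge apex over $\{p,q\}$ to an edge apex over $\{\sigma(p),\sigma(q)\}$; that is, the vertex bijection is a graph isomorphism $G \cong G'$. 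Conversely any graph isomorphism $G \cong G'$ yields such an $\R$-graph isomorphism.

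Hence $G \cong G'$ as abstract graphs if and only if $(\X_G,f_G) \cong (\X_{G'},f_{G'})$ as $\R$-graphs, if and only if there is a $0$-interleaving between them. Since the reduction is computable in polynomial time, deciding the existence of a $0$-interleaving is at least as hard as graph isomorphism.
\end{proof}

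\medskip
The only real content is making sure the encoding is faithful: the height function must rigidify the combinatorics enough that $\R$-graph isomorphisms see exactly the graph structure and nothing more. I expect the \textbf{main obstacle} to be subtle degeneracies in this encoding---for instance, multiple edges between the same pair of vertices, or vertices of the same degree that the height function cannot distinguish---which could allow spurious $\R$-graph isomorphisms that are not graph isomorphisms, or could collapse the apex layer. One must confirm that the apex-over-an-edge gadget is genuinely rigid and that Proposition~\ref{Prop:MapData} forces the bijection to respect incidence; a cleaner variant using distinct heights (e.g.\ attaching a small "tag" path of characteristic length to break symmetries) may be needed if the naive construction admits unwanted automorphisms.
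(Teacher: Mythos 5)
Your proof is correct, but it takes a genuinely different route from the paper's. Both arguments rest on the same pivot---a $0$-interleaving between $\R$-graphs is exactly an isomorphism in $\Reeb$ (your appeal to Proposition~\ref{prop:rdist=0} is legitimate, since a $0$-interleaving of the Reeb cosheaves is by definition an isomorphism, and $\CC''$ is an equivalence)---but the reductions from graph isomorphism differ. The paper encodes an abstract graph by choosing a basepoint and using geodesic distance from that basepoint as the function (splitting any edge whose endpoints are equidistant); since this only decides \emph{basepointed} isomorphism, the paper must repeat the $0$-interleaving test once for each choice of basepoint in the second graph, so it is a Turing reduction with up to $|V(H)|$ oracle calls. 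You instead give a one-shot many-one (Karp) reduction via the two-level ``tent'' gadget: all vertices of $G$ at height $0$, one apex per edge at height $1$, two monotone arcs per apex. Your encoding is faithful, and in fact your closing worries are unnecessary: any isomorphism in $\Reeb$ is a function-preserving homeomorphism, hence by Proposition~\ref{Prop:MapData} it induces bijections $\phi_0^V$, $\phi_1^V$, $\phi_0^E$ satisfying the consistency conditions, and since each apex carries exactly the two half-edges descending to its endpoints, the height-$0$ bijection is forced to preserve adjacency---no spurious isomorphisms can arise for simple graphs, and same-degree vertices or even parallel edges cause no harm since $\phi_0^V$ is read off directly from the fiber over $0$. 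The one degeneracy worth a word is an isolated vertex of $G$, which becomes an isolated point of the $\R$-graph; this is legal under the paper's combinatorial definition (take $E_0$ not surjecting onto $V_0$), or one may simply count and delete isolated vertices beforehand. On balance, your reduction is cleaner complexity-theoretically (a single call rather than $|V(H)|$ calls), while the paper's has the mild virtue of avoiding gadgets and showing hardness already for $\R$-graphs arising from a natural intrinsic construction (distance functions on graphs); both establish the proposition.
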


We thank Tamal Dey and Jeff Erickson for the following argument.

\begin{proof}
We find a reduction from the graph isomorphism problem to the $0$-interleaving problem. Let $G$ and $H$ be two finite graphs for which we wish to test isomorphism. Let us vertices in $V(G)$ and $V(H)$ as basepoints. Is there an isomorphism $G \cong H$ which sends basepoint to basepoint? 

To answer this question, convert each graph to an $\R$-graph by using the distance from basepoint as a function. Some edges may need to be split in two (if their two vertices are equidistant from the basepoint, meaning that the distance function will increase towards the middle of the edge); otherwise the vertices and edges remain the same.
Then these two $\R$-graphs are $0$-interleaved if and only if they are isomorphic as $\R$-graphs (by Proposition~\ref{Prop:IsoReeb}), if and only if $G \cong H$ preserving basepoints.

To test whether $G,H$ are isomorphic without reference to basepoints, it suffices to repeat this test keeping the basepoint of~$G$ fixed and varying the basepoint of~$H$ over all possible vertices. If the test fails every time, then $G, H$ are not isomorphic to each other. If it succeeds even once, then they are isomorphic.
In this way, a solution to the $0$-interleaving problem for $\R$-graphs gives a solution to the graph-isomorphism problem.
 \end{proof}

By Proposition~\ref{prop:rdist=0}, it follows equivalently that ``Is the interleaving distance between $(\X,f)$ and $(\Y,g)$ equal to zero?'' is graph-isomorphism hard.

\section{Discussion}
\label{sec:discussion}

There is a sense in which the development in this paper is self-annihilating: all of our cosheaf constructions are in the end realized geometrically. At least, this is true in the constructible case. With that in mind, there are perhaps two main motivations for our work:

\begin{itemize}
\item
The correspondence between constructible cosheaves and $\R$-graphs allows us to transfer ideas from one realm to the other. For instance, the Reeb cosheaf distance occurs very naturally in the context of persistence but its geometric equivalent is not an obvious construction.

\item
The realm of cosheaves is broader that the realm of $\R$-graphs, since we can work quite easily with non-constructible cosheaves whereas $\R$-graphs are necessarily constructible. We don't claim any immediate applications for this greater generality, but it is good to know that it is available.
\end{itemize}

\noindent
A pleasant consequence of this thinking is that the Reeb graph emerges as yet another instance of topological persistence, standing alongside the persistence diagram and the dendrogram (or join-tree) as a persistent invariant. It shares with those invariants an `interleaving' strategy for defining a distance, and an easily accessed stability theorem.

\bigskip
Here are a few closing remarks.

\paragraph{Persistence.}
Our approach was inspired by the paper of Morozov, Beketayev and Weber~\cite{Morozov2013} who defined interleavings of join-trees (called `merge trees' in that paper) in a geometric way. This ties in nicely with Bubenik and Scott's approach to persistence~\cite{Bubenik2014}: their `generalized persistence modules' are functors from the real line to an arbitrary category, and interleavings can be defined in terms of translations of the real line. By regarding a join-tree as a $\Set$-valued persistence module, the two approaches lead to the same interleaving distance and the same stability theorem.

More generally, one may define interleaving distances for persistence modules over an arbitrary poset~\cite{Bubenik_dS_S_2014}. In our present work, we regard Reeb graphs as $\Set$-valued functors on the poset $\Int$ of open intervals in the real line. 
Because these functors are cosheaves, we get a tight relationship between the geometric and the category theoretic points of view \cite{Funk1995,Treumann09,Woolf_2009}. That said, the cosheaf condition is not at all needed to define the interleaving distance or to obtain the stability theorem.
Functors on $\Int$ analogous to the Reeb cosheaf may be obtained by replacing $\pi_0$ with, for example, $\pi_k$ or $\mathrm{H}_k$. Since the result is usually not a cosheaf, the question remains how to manage these objects.

\paragraph{Higher dimensions.}
Reeb graphs easily generalize to Reeb spaces. Given a continuous map $f : \X \to \mathbb{M}$, we say two points $x, y \in \X$ are \emph{equivalent} if $f(x) = f(y) = p$ and if $x$ and $y$
lie on the same path-component of the fiber $f^{-1}(p)$. The \emph{Reeb space} is the resulting quotient space together with its induced map to~$\mathbb{M}$.

The structure of constructible Reeb spaces is rich and much is known. For example, suppose~$\mathbb{M}$ is a manifold and suppose the map $f$ is stratifiable 
with respect to a stratification $\mathcal{S}$ of~$\mathbb{M}$.
Then its Reeb space is also stratifiable with respect to~$\mathcal{S}$. Using an argument similar to the one presented in this paper, one can show that an $\mathcal{S}$-constructible Reeb space is equivalent to a finite-set-valued $\mathcal{S}$-constructible cosheaf over~$\mathbb{M}$.

Going further, Robert MacPherson observed (in unpublished work) that $\mathcal{S}$-constructible cosheaves can be completely classified in the following way. 
He constructed a category $\mathbf{Ent}(\mathbb{M}, \mathcal{S})$, called the \emph{entrance path category}, whose objects are the points of~$\mathbb{M}$ and whose morphisms $p \to q$ are homotopy classes of \emph{entrance paths} from $p$ to~$q$. These are paths $\gamma: [0,1] \to \mathbb{M}$ from $p$ to~$q$ which are allowed to move only from higher- to lower-dimensional strata. More precisely, whenever $s \leq t$, the stratum containing $\gamma(t)$ is contained in in the closure of the stratum containing $\gamma(s)$. Homotopies are required to remain within this class of paths, with fixed endpoints.
MacPherson's result is that $\mathcal{S}$-constructible cosheaves are essentially the same thing as functors on $\mathbf{Ent}(\mathbb{M}, \mathcal{S})$.
A special case is seen in Propositions \ref{prop:cshc-data} and~\ref{prop:cshc-morphism}, where constructible cosheaves with critical set~$S$ and their morphisms are shown to be equivalent to functors on the zigzag diagram~\eqref{eq:entrance} and natural transformations between them. Here, diagram~\eqref{eq:entrance} is interpreted as a category, and this category is equivalent to $\mathbf{Ent}(\mathbb{R},S)$.
We recommend the papers of Treumann~\cite{Treumann09}, Woolf~\cite{Woolf_2009} and Curry~\cite{Curry2014} as further reading on these topics.

Edelsbrunner, Harer and Patel~\cite{reeb_spaces} gave the first algorithm to compute the Reeb space of a piecewise linear
map from a simplicial complex to the plane. Their algorithm computes the coarsest stratification of the Reeb space. There is much room for improvement in its running time.

\paragraph{Non-Constructibility.}
In our treatment, we assign special importance to the constructible objects in the categories of $\R$-spaces and cosheaves. We do this partly for the sake of the equivalence theorem: we do not know how to interpret non-constructible cosheaves in a cleanly geometric way. And the loss of generality is not too great, because many regular situations (Morse functions on a compact manifold, definable functions on a compact semialgebraic set, etc) supply us with constructible $\R$-spaces.
Our definition is perhaps too broad: our constructible $\R$-spaces are built out of spaces which are locally path-connected, which is enough for our purposes. On the other hand, if we want to use higher-dimensional functors $\pi_k$ or $\mathrm{H}_k$ we ought to request local contractibility.  The usual well-behaved examples have this stronger property.

On the other hand, much of the theory works just as well for non-constructible cosheaves; so it would be a pity to rule them out altogether. Is there a corresponding equivalence theorem? 
One promising possibility is to define $\mathbf{Stone}$-valued rather than $\Set$-valued Reeb cosheaves. The connected components of a topological space can be regarded not merely as a set, but as a quotient space. If the original space is compact, then the quotient is both compact and totally disconnected, and is called a \emph{Stone space}. Our suggestion is to consider $\R$-spaces $(\X,f)$ where $\X$ is locally compact and Hausdorff, and $f$ is proper. Then $\overline{\pi}_0 f^{-1}[a,b]$ is a Stone space for every compact interval $[a,b]$. In this setting, Example~\ref{ex:non-cosheaf} is no longer a point of failure of the cosheaf condition for $\overline{\pi}_0$: the colimit described there becomes a non-Hausdorff two-point space when evaluated in the topological category, and therefore becomes the desired one-point space when evaluated in a category of Hausdorff spaces such as $\mathbf{Stone}$. All of this can be done in the category of (not necessarily compact) totally disconnected spaces, but we suspect that Stone spaces may be more fruitful; in part because their category is known to be dual to the category of Boolean algebras. A necessary requirement is to work with cosheaves over the `site' of compact intervals rather than the more usual open intervals.

\paragraph{Simplification versus smoothing.}
We have deliberately chosen not to describe our topological smoothing algorithm for Reeb graphs as topological `simplification', a term that is widely used elsewhere~\cite{Pascucci2007a}. The distinction between simplification and smoothing is as follows. Simplification seeks to remove unnecessary loops and edges; and the methods that exist carry this out in an \emph{ad hoc} local fashion by, for instance, collapsing small loops. Larger features are expected to retain their metric properties; this makes simplification a difficult problem. Smoothing, on the other hand, modifies the graph and its function on a global scale. There are simplifying effects, such as the disappearance of small `noisy' loops as the smoothing parameter is increased; but there are also global side-effects such as the steady divergence of the maximum and minimum values of the function. We haven't in this work attempted to identify what specific geometric simplifications take place as a result of smoothing. Some such questions are addressed in~\cite{Bauer2014a}.

\paragraph{Computational complexity.}
While the computation of the smoothed Reeb graph is algorithmically reasonable, we are still working to find sensible strategies for computing or estimating the interleaving distance. We can narrow our concerns to specific values of~$\e$, using binary splitting to converge towards the true value. For a fixed~$\e$, the existence of an $\e$-interleaving we have seen to be difficult.
The obvious na\"{i}ve algorithm would follow Proposition~\ref{prop:NP}, testing all possible morphism pairs $\alpha: f \to \UU_eg$ and $\beta: g \to \UU_\e f$ for their validity as certificates. This will certainly have worst-case exponential running time.

That said, we think there are reasons to be hopeful. We know that $\epsilon = 0$ is difficult, but on the other hand when $\epsilon = \text{large}$ all the internal structure disappears and the question becomes trivial. What happens along the way?
There may be classes of $\R$-graphs which can be compared efficiently; and there may be worst-case bad algorithms which behave well in practice most of the time.
We expect that there exist easily-computed invariants that obstruct the existence of $\e$-interleavings; a trivial example is given in Proposition~\ref{Prop:IsoReeb}, and there should be many more. There may be ways of selecting certificates $\alpha, \beta$ intelligently, taking the required relations into account. It seems to us that there are plenty of open questions here.

It has also been shown~\cite{Bauer2014a} that the interleaving distance is tightly related to another metric on Reeb graphs, the functional distortion distance~\cite{Bauer2014}; in particular, the two metrics are strongly equivalent. It is possible that this relationship will allow for an interplay of methods for computation between the two methodologies; however, it is more likely that this relationship will be most directly useful for passing any NP-hardness results between the two metrics.

\paragraph{}
We finish by acknowledging our other goal in writing this paper: to support the introduction of category theory and sheaves to the computational geometry community. There have been many other such efforts which we are proud to stand alongside. To name a few: Robert Ghrist, Justin Curry, Sanjeevi Krishnan, Michael Robinson and Vidit Nanda have developed many new applications of sheaves, cosheaves and their generalizations; and Peter Bubenik and Jonathan Scott have introduced category-theoretic thinking to the study of persistence.
We hope that our direct presentation of cosheaf methods will help  promote their work. Sometimes a Reeb graph is just a Reeb graph;  but very often, secretly, there is a cosheaf waiting to get out.


\section*{Acknowledgements}
The work described in this article is a result of a collaboration made possible by the Institute for Mathematics and its Applications (IMA), University of Minnesota. It was carried out while the authors were  at the IMA during the annual program 2013--14, on {\it Scientific and Engineering Applications of Algebraic Topology}. We thank the IMA staff for their outstanding support and help throughout the year, and we thank the organizing committee of the thematic year for putting together an excellent program. The IMA is funded by the National Science Foundation.

EM and AP have been supported by IMA Postdoctoral Fellowships during this project.
VdS thanks his home institution, Pomona College, for granting him a sabbatical leave of absence in 2013--14. The sabbatical was hosted by the IMA, and the second semester of leave was supported by the Simons Foundation (grant \#267571). 

The authors gladly thank Ulrich Bauer, Justin Curry, Tamal Dey, Jeff Erickson, Robert MacPherson, Dmitry Morozov, Sara Kali\v{s}nik, Mikael Vejdemo-Johansson, Yusu Wang and John Wilmes for many helpful discussions during the course of this work.


\bibliographystyle{plain}
\bibliography{ReebGraphs}

\end{document}